\title{Optimizing Content Caching to Maximize the Density of Successful Receptions in Device-to-Device Networking}
\author{Derya~Malak, Mazin Al-Shalash, and~Jeffrey~G.~Andrews
\thanks{Parts of the manuscript were presented at the 2014 IEEE Globecom Workshops \cite{Malak2014} and at the 2015 IEEE ICC Workshops \cite{Malak2015}.}	
\thanks{D. Malak and J. G. Andrews are with the Wireless and Networking Communications Group (WNCG), The University of Texas at Austin, Austin, TX 78701 USA (email: deryamalak@utexas.edu; jandrews@ece.utexas.edu). M. Al-Shalash is with Huawei Technologies, Plano, TX 75075 USA (e-mail: mshalash@huawei.com).}
\thanks{This research has been supported by Huawei. \hfill Manuscript last revised: {\today}.}}
\newtheorem{theo}{Theorem}
\newtheorem{defi}{Definition}
\newtheorem{assu}{Assumption}
\newtheorem{cor}{Corollary}
\newtheorem{lem}{Lemma}
\newcommand{\SINR}{\sf{SINR}}
\newcommand{\SNR}{\sf{SNR}}  
\newcommand{\DD}{\sf{D2D}}  
\newcommand{\BS}{\sf{BS}}
\DeclareMathOperator*{\argmax}{arg\,max}
\newcommand{\PPP}{\sf{PPP}} 
\newcommand{\PGFL}{\sf{PGFL}} 
\newcommand{\dsr}{\rm{DSR}} 
\newcommand{\DSR}{\sf{DSR}} 
\newcommand{\DSRs}{\sf{DSR_S}} 
\newcommand{\DSRp}{\sf{DSR_P}} 
\newcommand{\DSRg}{\sf{DSR_G}} 
\newcommand{\Q}{\rm Q}
\newcommand{\T}{\rm T}
\newcommand{\tot}{\rm tot}
\newcommand{\Rayleigh}{\sf Rayleigh}
\newcommand{\RHS}{\sf RHS}
\newcommand{\pcov}{p_{\rm cov}}
\newcommand{\PCOV}{\mathcal{P}_{\rm cov}}
\begin{document} 
\maketitle	
\begin{abstract}
Device-to-device ($\DD$) communication is a promising approach to optimize the utilization of air interface resources in 5G networks, since it allows decentralized opportunistic short-range communication.  For $\DD$ to be useful, mobile nodes must possess content that other mobiles want.  Thus, intelligent caching techniques are essential for $\DD$. In this paper we use results from stochastic geometry to derive the probability of successful content delivery in the presence of interference and noise.  We employ a general transmission strategy where multiple files are cached at the users and different files can be transmitted simultaneously throughout the network.  We then formulate an optimization problem, and find the caching distribution that maximizes the density of successful receptions ($\dsr$) under a simple transmission strategy where a single file is transmitted at a time throughout the network. We model file requests by a Zipf distribution with exponent $\gamma_r$, which results in an optimal caching distribution that is also a Zipf distribution with exponent $\gamma_c$, which is related to $\gamma_r$ through a simple expression involving the path loss exponent. We solve the optimal content placement problem for more general demand profiles under Rayleigh, Ricean and Nakagami small-scale fading distributions. Our results suggest that it is required to flatten the request distribution to optimize the caching performance. We also develop strategies to optimize content caching for the more general case with multiple files, and bound the $\dsr$ for that scenario.
\end{abstract}

	
\IEEEpeerreviewmaketitle

\section{Introduction}
\label{intro}
Wireless networks are experiencing a well-known ever-rising demand for enhanced high rate data services, in particular wireless video, which is forecast to consume around 70\% of wireless bandwidth by 2019 \cite{Cisco2015}.  Non-real-time video in particular is expected to comprise half of this amount \cite{Sarkissian2012}, and comprises large files that can be cached in the network.  Meanwhile, preliminary $\DD$ techniques have been standardized by 3GPP to allow decentralized file sharing and public safety applications \cite{LinMag2014}. $\DD$ is intriguing since it allows increased spatial reuse and possibly very high rate communication without increased network infrastructure or new spectrum, but is only viable when the mobile users have content that other nearby users want.  Thus, it is clear that smart content caching is essential for $\DD$.

Caching popular content is a well known technique to reduce resource usage, and increase content access speed and availability \cite{Wang2014}. Infrastructure-based caching can reduce delay and when done at the network edge, also reduce the impact on the backhaul network, which in many cases is the bottleneck in wireless networks \cite{Shanmugam2013}.  However, this type of caching does not reduce the demand on spectral resources.  To gain spectral reuse and increase the area spectral efficiency, the content must be cached on wireless devices themselves, which allows short-range communication which is independent of the network infrastructure. $\DD$ communication can enable proximity-based applications involving discovering and communicating with nearby devices \cite{Lin2013}. Synchronized distributed network architectures for $\DD$ communications are designed, e.g., FlashLinQ \cite{Wu2013FlashlinQ} and ITLinQ \cite{Naderializadeh2014JSAC}, and caching is shown to provide increased spectral reuse in $\DD$-enabled networks \cite{Naderializadeh2014}. Although order optimal solutions for optimal content placement is known under certain channel conditions \cite{2Ji2013,Gitzenis2013,Jeon2015}, it is not known how to best cache content in a $\DD$ network. Intuitively, popular content should be seeded into the users' limited storage resources in a way that maximizes the probability that a given $\DD$ device can find a desired file within its radio range.  Exploring this problem quantitively is the goal of this paper.

\subsection{Related Work}
Different aspects of $\DD$ content distribution are studied. Scalability in ad hoc networks is considered \cite{Cruz2013}, where decentralized algorithms for message forwarding are proposed by considering a Zipf product form model for message preferences. Throughput scaling laws with caching have been widely studied \cite{MaddahAli2013Journal,Golrezaei2012,Ji2014}. Optimal collaboration distance, Zipf distribution for content reuse, best achievable scaling for the expected number of active $\DD$ interference-free collaboration pairs for different Zipf exponents is studied \cite{Golrezaei2014}. With a heuristic choice (Zipf) of caching distribution for Zipf distributed requests, the optimal collaboration distance \cite{Golrezaei2014TWC} and the Zipf exponent to maximize number of $\DD$ links are determined \cite{Golrezaei2012}. However, in general, the caching pmf is not necessarily same as the request pmf. This brings us to the one of the main objectives in this paper, which is to find the best caching pmf that achieves the best density of successful receptions ($\dsr$) in $\DD$ networks. 

Under the classical protocol model of ad hoc networks \cite{Gupta2000TIT}, for a grid network model, with fixed cache size $M$, as the number of users $n$ and the number of files $m$ become large with $nM \gg m$, the order optimal\footnote{The order optimality in \cite{2Ji2013,3Ji2013} is in the sense of a throughput-outage tradeoff due to simple model used.} caching distribution is studied and the per-node throughput is shown to behave as $\Theta(M/m)$ \cite{2Ji2013,3Ji2013}. The network diameter is shown to scale as $\sqrt{n}$ for a multi-hop scenario \cite{Gitzenis2013}. It is shown that local multi-hop yields per-node throughput scaling as $\Theta(\sqrt{M/m})$ \cite{Jeon2015}.

Spatial caching for a client requesting a large file that is stored at the caches with limited storage, is studied \cite{Altman2013}. Using Poisson point process ($\PPP$) to model the user locations, optimal geographic content placement and outage in wireless networks 
are studied \cite{Blaszczyszyn2014}. The probability that the typical user finds the content in one of its nearby base stations ($\BS$)s is optimized using the distribution of the number of $\BS$s simultaneously covering a user \cite{Keeler2013}. Performance of randomized caching in $\DD$ networks from a $\dsr$ maximization perspective has not been studied, which we study in this paper.

Although the work conducted in \cite{Golrezaei2012,Golrezaei2014} focused on the optimal caching distribution to maximize the average number of connections, the system model was overly simplistic. They assumed a cellular network where each $\BS$ serves the users in a square cell. The cell is divided into small clusters. $\DD$ communications are allowed within each cluster. To avoid intra-cluster interference, only one transmitter-receiver pair per cluster is allowed, and it does not introduce interference for other clusters. In this paper, we aim to overcome these serious limitations using a more realistic $\DD$ network model that captures the simultaneous transmissions where there is no restriction in the number of $\DD$ pairs.

\subsection{Contributions}
\label{contri}
This paper develops optimal content caching strategies that aim to maximize the average density of successful receptions so as to address the demands of $\DD$ receivers. The contributions are as follows.

{\bf Physical channel modeling using $\PPP$.} We introduce the network model in Sect. \ref{Model}, in which the locations of the $\DD$ users are modeled as a homogeneous $\PPP$. Different from the grid-based model in \cite{2Ji2013,3Ji2013}, we consider the actual physical channel model. $\PPP$ modeling makes our analysis tractable because unlike the cluster-based model in \cite{Golrezaei2014TWC}, where only a pair of users are allowed to communicate in a square region, we require no constraint on the link distance and allow a random number of simultaneous transmissions. All analysis is for a typical mobile node which is permissible in a homogeneous $\PPP$ by Slivnyak's theorem \cite{Stoyan1996}. The interference due to simultaneously active transmitters, noise and the small-scale Rayleigh fading are incorporated into the analysis. Any transmission is successful as long as the Signal-to-Interference-plus-Noise Ratio ($\SINR$) is above a threshold. 

{\bf Density of successful receptions ($\dsr$).} We propose a new file caching strategy exploiting stochastic geometry and the results of \cite{Andrews2011}, and we introduce the concept of the density of successful receptions ($\dsr$). Although in this paper, we do not investigate the throughput-outage tradeoff as in \cite{2Ji2013,3Ji2013}, the $\dsr$ is closely related to the outage probability, obtained through the scaling of the coverage, i.e., the complement of the outage probability, with the number of receivers per unit area.

{\bf Optimal caching distribution to maximize the $\dsr$ for the sequential multi-file model.} We study a randomized transmission model for $\DD$ users with storage size $1$ in Sect. \ref{Model}. We propose techniques for randomized content caching based on the possible ways of prioritizing different files. In Sect. \ref{SingleFile}, we start with a baseline model with single file to determine the optimal fractions of transmitters $\gamma_1$ and receivers $\gamma_2$ in the $\DD$ network model with $\PPP$ distributed user locations that maximizes the $\dsr$. In Sect. \ref{MultipleFiles}, we consider the more general sequential multi-file transmission scenario, where we investigate the maximum $\dsr$ in terms of the optimal fractions of $\gamma_1$ and $\gamma_2$ derived in Sect. \ref{SingleFile}, to determine the $\dsr$, and optimize the caching pmf based on the randomized model.

{\bf Small-scale fading $\dsr$ results.} We formulate an optimization problem in Sect. \ref{SUF} to find the best caching distribution that maximizes the $\dsr$ under a simple transmission strategy where single file is transmitted at a time throughout the network, assuming user demands are modeled by a Zipf distribution with exponent $\gamma_r$. This scheme yields a certain fraction of users to be active at a time based on the distribution of the requests. In Sect. \ref{Rayleighresults}, we optimize the $\dsr$ of users for the multi-file setup, where the small-scale fading is Rayleigh distributed. We consider several special cases corresponding to 1) small but non-zero noise, 2) arbitrary noise and 3) an approximation for arbitrary noise allowing the path loss exponent $\alpha=4$. For case 1), we show that the optimal caching strategy also has a Zipf distribution but with exponent $\gamma_c=\frac{\gamma_r}{\alpha/2+1}$ where $\alpha>2$. For case 2), we show that the same result holds based on an approximation of the $\SINR$ coverage justified numerically in Sect. \ref{Rayleighresults}. This relation implies that $\gamma_c$ is smaller than $\gamma_r$, i.e., the caching distribution should be more uniform compared to the request distribution, yet more popular files should be cached at a higher number of $\DD$ users. For case 3), we obtain a distribution similar to Benford's law (detailed in Sect. \ref{Rayleighresults}) that optimizes the caching pmf. We also extend our results to the ``general request distributions", and show that cases 1) and 2) are also valid for Ricean and Nakagami fading distributions in Sect. \ref{Rayleighresults}. 

In general, the optimal $\dsr$ and the optimal caching distribution might not be tractable. Therefore, assuming the request and caching probabilities are known a priori, we weight the caching pmf to provide iterative techniques to optimize the $\dsr$ under different settings. We propose caching strategies that consider maximizing the $\dsr$ of the least desired file and of all files as detailed in Sect. \ref{opt-multi-file}. 

{\bf Optimal caching distribution to maximize the $\dsr$ for the simultaneous multi-file model.} In Sect. \ref{mostgeneralcase}, we extend our study to the simultaneous transmissions of different files  and define popularity-based and global strategies. The popularity-based strategy is in favor of the transmission of popular files and discards unpopular files. On the other hand, the global strategy schedules all the files simultaneously, which leads to lower coverage than the sequential model does. Optimization of the $\dsr$ in these cases is very intricate compared to the case of sequential modeling. Therefore, we numerically compare the proposed caching models in Sect. \ref{mostgeneralcase}, and observe that the optimal solutions become skewed towards the most popular content in the network. Thus, we infer that under different models, the optimal caching distribution may not be a Zipf distribution as also found in \cite{2Ji2013,Gitzenis2013,Jeon2015}.

{\bf Insights.} Our results show that the optimal caching strategy exhibits less locality of the reference (abbreviated as locality) compared to the input stream of requests, i.e., the demand distribution\footnote{The performance of demand-driven caching depends on the locality exhibited by the stream of requests. The more skewed the popularity pmf, (i) the stronger the locality and the smaller the miss rate of the cache\cite{Vanichpun2005thesis}, and (ii) good cache replacement strategies are expected to produce an output stream of requests exhibiting less locality than the input stream of requests \cite{Makowski2005chapter}. In \cite{Vanichpun2005thesis}, authors showed that (i) and (ii) hold for caches operating under random on-demand replacement algorithms.}. We also analyze the special case of $\alpha=4$ using a tight approximation for standard Gaussian $\Q$-function. Using this approach we show that the optimal caching distribution can be approximated by Benford's law, which is a special bounded case of Zipf's law \cite{Pietronero2001}. In Sect. \ref{Performance}, we validate that both Zipf distribution and Benford's law have very similar distributional characteristics, further validating the generality of the results. For the multiple file case, we extend our results by finding lower and upper bounds for the $\dsr$ in Sect. \ref{bounds}. Simulations show that the bounds are very accurate approximations for particular $\gamma_r$ values.

\section{System Model}
\label{Model}
We consider a mobile network model in which $\DD$ users are spatially distributed as a homogeneous $\PPP$ $\Phi$ of density $\lambda$, where a randomly selected user can transmit or receive information. 
\begin{figure}[t!]
\centering
\includegraphics[width=0.7\columnwidth]{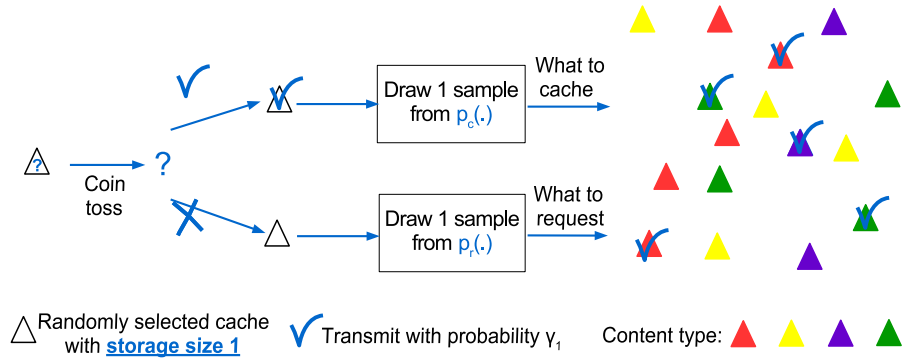}
\caption{\small{Randomized caching model.}}
\label{randomizedcaching}
\end{figure} 
In the multiple file scenario, the randomized caching model we propose is shown in Fig. \ref{randomizedcaching}. The model can be summarized as follows. At any time slot, only a fraction of  the $\DD$ users scheduled. Any user transmits with probability $\gamma_1$ and receives with probability $\gamma_2 = 1 - \gamma_1$ independently of other users. Each user has a cache with storage size $1$. If it is selected as a receiver at a time slot, it draws a sample from the request distribution $p_r(\cdot)$, which is assumed to be Zipf distributed. If it is selected as transmitter at a time slot, it draws a sample from the caching distribution $p_c(\cdot)$. The selection of request distribution and the optimization of caching distribution will be detailed in Sect. \ref{MultipleFiles}.  At any time slot, each receiver is scheduled based on closest transmitter association.

A system model for the $\DD$ content distribution network with multiple files is illustrated in Fig. \ref{modelmulti}. For multiple file case, different from the single file case, where the $\DD$ content distribution network is like a downlink cellular network since nearest transmitter has the content, a farther transmitter is often the one with the file required by the receiver.

\begin{figure}[t!]
\centering
\includegraphics[width=0.8\columnwidth]{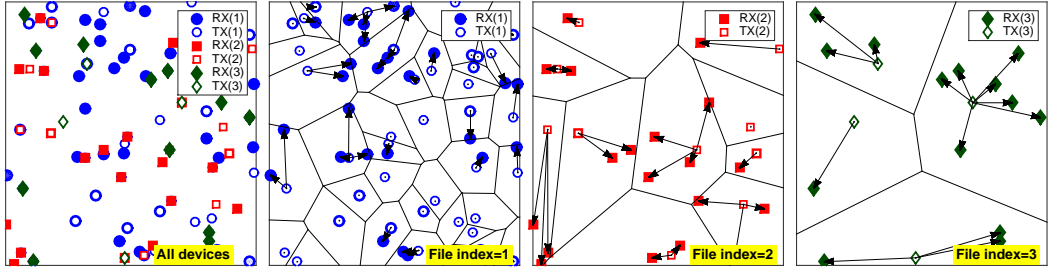}
\caption{\small{System model for $\DD$ users with multiple files. Each receiver is associated to its closest transmitter that contains the requested file, where $\rm TX(k)$ and $\rm RX(k)$ denote the set of transmitters and receivers corresponding to file $k$. For illustration purposes, different types are separated on the plot. However, transmissions of different files can occur simultaneously.}}
\label{modelmulti}
\end{figure}

General models for the multi-cell $\SINR$ using stochastic geometry were developed in \cite{Andrews2011}, where the downlink coverage probability was derived as:
\begin{eqnarray}
\label{pcov}
\pcov({\T},\lambda,\alpha)\triangleq\mathbb{P}[\SINR>\T]
=\pi\lambda \int_{0}^{\infty}{ e^{-\pi\lambda r\beta({\T},\alpha)-\mu \T\sigma^2 r^{\alpha/2}} \, \mathrm{d}r},
\end{eqnarray}
where $\beta({\T},\alpha)=\frac{2(\mu {\T})^{\frac{2}{\alpha}}}{\alpha}\mathbb{E}\big[g^{\frac{2}{\alpha}}(\Gamma(-2/\alpha,\mu {\T} g)-\Gamma(-2/\alpha))\big]$. The expectation is with respect to the interference power distribution $g$, the transmit power is $1/\mu$, and Signal-to-Noise Ratio ($\SNR$) is defined at a distance of $r=1$ and is ${\SNR} = 1/(\mu\sigma^2)$. A summary of the symbol definitions and important network parameters are given in Table \ref{tab-1}.
		
\begin{defi}{\bf Density of successful receptions ($\dsr$).} The performance of a randomly chosen receiver is determined by its $\SINR$ coverage. For the homogeneous $\PPP$ $\Phi$ with density $\lambda$, let $\gamma_1$ fraction of all users be the transmitter process $\Phi_t$, and $\gamma_2$ fraction of users be the receiver process $\Phi_r$, where $0<\gamma_1, \gamma_2<1$. The coverage probability of a randomly chosen receiver is $\pcov({\T},\lambda\gamma_1,\alpha)$, which is the same for all receivers, and the total average number of receivers is proportional to the density $\lambda\gamma_2$. Hence, the $\dsr$, which denotes the mean number of successful receptions per unit area, equals 
\begin{align}
\label{Ptotal}
{\DSR}=\lambda \gamma_2 \pcov({\T},\lambda\gamma_1,\alpha)=\lambda \gamma_2\Big(\pi\lambda\gamma_1 \int\nolimits_{0}^{\infty}{ e^{-\pi\lambda\gamma_1 r\beta({\T},\alpha)-\mu \T\sigma^2 r^{\alpha/2}} \, \mathrm{d}r}\Big),
\end{align}
where $\pcov({\T},\lambda\gamma_1,\alpha)$ is obtained by combining (\ref{pcov}) with the thinning property of the $\PPP$, i.e., $\Phi_t$, which is obtained through the thinning of $\Phi$, is a homogeneous $\PPP$ with density $\lambda\gamma_1$ \cite[Ch. 1]{BaccelliBook1}.
\end{defi}
We consider the generalized file caching problem in $\PPP$ networks where every user randomly requests or caches some files based on the availabilities. Our goal is to maximize the $\dsr$ in (\ref{Ptotal}) for single file and multiple files. We discuss the details of our optimization problem in Sects. \ref{SingleFile} and \ref{MultipleFiles}.

\section{$\dsr$ For a Single File}
\label{SingleFile}
We first assume that there is a single file in the network. The single file case is the baseline model for the more general multi-file model presented in Sect. \ref{MultipleFiles}. Sampled uniformly at random from the $\PPP$ $\Phi$, a fraction $\gamma_1$ of the users form the process $\Phi_t$ of the users possessing the file, and a fraction $\gamma_2$ of the users form the process $\Phi_r$ of the users who want the same file. The receivers communicate with the nearest transmitter while all other transmitters act as interferers, and each transmitter can serve multiple receivers. A receiver is in coverage when its $\SINR$ from its nearest transmitter is larger than some threshold $\T$. Given the total density of receivers is given by $\lambda\gamma_2$, and each receiver is successfully covered with probability $\pcov({\T},\lambda\gamma_1,\alpha)$, the $\dsr$, i.e., $\DSR$, is given by their product. In the single file scenario, since there is only 1 file being transmitted in the network, there is no caching pmf. Our objective in this section is to determine the optimal fractions of transmitters $\gamma_1$ and receivers $\gamma_2$ in the $\PPP$ network that maximizes the $\dsr$. In Sect. \ref{MultipleFiles}, we consider the multiple file transmission scenario, where we use the optimal fractions of transmitters and receivers $\gamma_1$ and $\gamma_2$, respectively, derived in this section, to determine the $\dsr$, and optimize the caching pmf based on the randomized model outlined in Sect. \ref{Model}. We formulate the following optimization problem to determine $\gamma_1$ and $\gamma_2$:
\begin{equation}
\label{optimization1}
\begin{aligned}
{\DSR^*}=&\underset{\gamma_1>0,\, \gamma_2>0}{\max}
& & \lambda \gamma_2 \pcov({\T},\lambda\gamma_1,\alpha) \\
& \hspace{0.6cm}\text{s.t.} 
& & \gamma_1+\gamma_2=a,\quad 0<a \leq 1,
\end{aligned}
\end{equation}
where $\pcov({\T},\lambda\gamma_1,\alpha)$ is the coverage probability of a typical user, and $a\leq 1$ is the total fraction of transmitting and receiving users in a $\PPP$ network $\Phi$ with density $\lambda$.

\begin{table}[t!]\footnotesize
\begin{center}
\setlength{\extrarowheight}{1.3pt}
\begin{tabular}{| l | c | }
\hline
{\bf Symbol} & {\bf Definition} \\ \hline
$\T$; $\alpha>2$ & $\SINR$ threshold; Path loss exponent  \\ \hline
$\gamma_1$; $\gamma_2$ & Fraction of transmitting users; fraction of receiving users \\ \hline
$\Phi$; $\Phi_t$; $\Phi_r$ & Homogeneous $\PPP$ of all $\DD$ users; transmitter process; receiver process \\ \hline
$\lambda$; $\lambda_t$ & Intensity of $\Phi$; intensity of $\Phi_t$  \\ \hline
$\mu^{-1}$; $\sigma^2$ & The constant transmit power; Noise variance \\ \hline
$g\sim\exp(\mu)$ & Interference power distribution\\ \hline
$\gamma_r$; $\gamma_c$ & Zipf request parameter; Zipf caching parameter \\ \hline
$M$; $1$ & Size of the file catalog; storage size of any user \\ \hline
$p_r(\cdot)$; $p_c(\cdot)$ & Popularity pmf; caching pmf \\ \hline
$\pcov({\T},\lambda,\alpha)$ & Coverage probability for the sequential transmission model \\ \hline
$\PCOV({\T},\lambda,\alpha)$ & Coverage probability for the general transmission model \\ \hline
$\beta({\T},\alpha)$ & A function of interference in the exponent of $\pcov$  \\ \hline
$F_B(\cdot)$ & The pmf of the Benford's distribution  \\ \hline
$\DSR$ & Density of successful receptions \\ \hline
$\DSRs; \DSRp; \DSRg$ & Sequential; popularity-based; global model $\DSR$\\ \hline
$\Q$-function & The tail probability of the standard normal distribution \\ \hline
$\Theta(\cdot)$; $o(\cdot)$ & Big O notation; Little-o notation \\ \hline
\end{tabular}
\end{center}
\caption{\small{Important network parameters.}}
\label{tab-1}
\end{table}

\begin{lem}\label{txfrac}
The fraction of transmitters should be less than that of receivers, i.e., the solution of (\ref{optimization1}) satisfies the following relation: $\gamma_1<a/2<\gamma_2<a\leq 1$. 
\end{lem}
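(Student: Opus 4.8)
The plan is to remove the equality constraint and reduce (\ref{optimization1}) to a one-dimensional problem. Substituting $\gamma_2=a-\gamma_1$, the objective becomes ${\DSR}(\gamma_1)=\lambda(a-\gamma_1)\,\pcov({\T},\lambda\gamma_1,\alpha)$ on $\gamma_1\in(0,a)$, and I abbreviate $f(\gamma_1):=\pcov({\T},\lambda\gamma_1,\alpha)$. First I would record the boundary behaviour: by (\ref{pcov}) the prefactor $\pi\lambda\gamma_1$ drives $f(\gamma_1)\to0$ as $\gamma_1\to0^+$ (the integral staying finite thanks to the noise factor $e^{-\mu{\T}\sigma^2 r^{\alpha/2}}$ with $\alpha>2$), so ${\DSR}(0^+)=0$, while ${\DSR}(a^-)=0$ because $(a-\gamma_1)$ vanishes. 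As ${\DSR}$ is continuous and strictly positive inside $(0,a)$, it attains its maximum at an interior stationary point $\gamma_1^\star$.

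The strategy is then to show that no stationary point can lie in $[a/2,a)$, which forces $\gamma_1^\star\in(0,a/2)$. Differentiating yields
\begin{equation}
\label{dsrderiv}
\frac{\mathrm{d}\,{\DSR}}{\mathrm{d}\gamma_1}=\lambda\big[(a-\gamma_1)f'(\gamma_1)-f(\gamma_1)\big],
\end{equation}
so it suffices to establish the strict inequality $(a-\gamma_1)f'(\gamma_1)<f(\gamma_1)$ for every $\gamma_1\ge a/2$; this makes ${\DSR}$ strictly decreasing on $[a/2,a)$ and excludes that region from containing the maximizer.

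The crux is a sub-unit elasticity bound for the coverage probability, namely $\gamma_1 f'(\gamma_1)<f(\gamma_1)$ for all $\gamma_1\in(0,a)$. Writing $f'(\gamma_1)=\lambda\,\partial_\nu\pcov({\T},\nu,\alpha)\big|_{\nu=\lambda\gamma_1}$ and differentiating (\ref{pcov}) under the integral sign (justified by dominated convergence, the exponential decay in $r$ supplying an integrable bound), I would obtain
\begin{equation}
\label{elasticity}
f(\gamma_1)-\gamma_1 f'(\gamma_1)=\pi\lambda\gamma_1\int_0^\infty \pi\lambda\gamma_1\,\beta({\T},\alpha)\,r\,e^{-\pi\lambda\gamma_1 r\beta({\T},\alpha)-\mu{\T}\sigma^2 r^{\alpha/2}}\,\mathrm{d}r>0,
\end{equation}
since the integrand is nonnegative and positive for $r>0$. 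The rest is bookkeeping: for $\gamma_1\ge a/2$ one has $0<a-\gamma_1\le\gamma_1$, so if $f'(\gamma_1)>0$ then $(a-\gamma_1)f'(\gamma_1)\le\gamma_1 f'(\gamma_1)<f(\gamma_1)$ by (\ref{elasticity}), whereas if $f'(\gamma_1)\le0$ then $(a-\gamma_1)f'(\gamma_1)\le0<f(\gamma_1)$ because $\pcov>0$. In either case the bracket in (\ref{dsrderiv}) is negative, as required. Hence $\gamma_1^\star<a/2$, giving $\gamma_2^\star=a-\gamma_1^\star>a/2$; combined with $\gamma_2^\star<a$ (from $\gamma_1^\star>0$) and $a\le1$ this is exactly the chain $\gamma_1<a/2<\gamma_2<a\le1$.

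I expect (\ref{elasticity}) to be the only substantive step: it says that although noise makes coverage grow with transmitter density, it does so sub-linearly, so the linear loss in the receiver factor $(a-\gamma_1)$ dominates beyond the symmetric split $\gamma_1=\gamma_2=a/2$. The points needing care are the interchange of differentiation and integration in (\ref{elasticity}) and the role of noise: with $\sigma^2=0$ one has $f\equiv1/\beta({\T},\alpha)$ constant, the maximizer degenerates to the boundary $\gamma_1\to0$, and the interior argument must be read as the limiting case, still consistent with $\gamma_1<a/2<\gamma_2$.
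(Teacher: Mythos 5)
Your proof is correct and takes essentially the same approach as the paper's: both reduce (\ref{optimization1}) to the one-variable objective $\lambda(a-\gamma_1)\pcov({\T},\lambda\gamma_1,\alpha)$, differentiate in $\gamma_1$, and deduce $\gamma_1<a/2$ from the strict positivity of $\int_0^\infty r\, e^{-\pi\lambda\gamma_1 r\beta({\T},\alpha)-\mu{\T}\sigma^2 r^{\alpha/2}}\,\mathrm{d}r$, your elasticity identity being exactly the paper's first-order condition rearranged into a ratio of integrals. The only (welcome) difference is presentational: you additionally verify that the maximizer is interior and rule out all of $[a/2,a)$ by monotonicity, steps the paper leaves implicit when it works directly with the stationarity equation.
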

\begin{proof}
See Appendix \ref{App:AppendixFirst}.
\end{proof}

\begin{lem}\label{arbitrarynoiselemma}
The maximum $\dsr$ for arbitrary noise and $\alpha=4$ is given by
\begin{eqnarray}
{\DSR^*}={\lambda(a-\gamma_1)}\Big/{\Big(\frac{1}{\gamma_1}\left[\frac{1}{\gamma_1}-\frac{1}{a-\gamma_1}\right]\frac{{2\mu T\sigma^2}}{(\pi \lambda)^2 \beta({\T},4)}+\beta({\T},4)\Big)}.\nonumber
\end{eqnarray}
\end{lem}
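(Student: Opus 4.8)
The plan is to collapse the constrained problem \eqref{optimization1} to a single variable, evaluate the coverage in closed form at $\alpha=4$, and then convert the resulting transcendental expression into the stated rational form. First I would use the active sum constraint to set $\gamma_2=a-\gamma_1$, so the objective in \eqref{Ptotal} becomes $\lambda(a-\gamma_1)\,\pcov(\T,\lambda\gamma_1,4)$, a function of $\gamma_1$ alone. Specializing \eqref{pcov}--\eqref{Ptotal} to $\alpha=4$ turns $r^{\alpha/2}$ into $r^2$, so the inner integral is the Gaussian integral $\int_0^\infty e^{-\pi\lambda\gamma_1\beta(\T,4)\,r-\mu\T\sigma^2 r^2}\,\mathrm{d}r$. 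Completing the square evaluates it in closed form and yields $\pcov(\T,\lambda\gamma_1,4)$ as an explicit expression involving $\exp(\cdot)$ and the Gaussian $\Q$-function of argument $\pi\lambda\gamma_1\beta(\T,4)/\sqrt{2\mu\T\sigma^2}$.

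Next I would replace this transcendental factor by the tight $\Q$-function approximation used for the $\alpha=4$ case, which renders $\pcov(\T,\lambda\gamma_1,4)$ a rational function of the parameters. As a consistency check it must collapse to the interference-limited value $1/\beta(\T,4)$ as $\sigma^2\to 0$, recovering $\DSR=\lambda\gamma_2/\beta(\T,4)$. Substituting this rational coverage into $\lambda(a-\gamma_1)\,\pcov$ gives the $\dsr$ as a rational function of $\gamma_1$ whose entire $\sigma^2$-dependence sits in a single noise term scaling like $\mu\T\sigma^2/\big((\pi\lambda)^2\beta(\T,4)\big)$, matching the structure of the denominator in the claim.

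Finally I would carry out the maximization: differentiate the rational $\dsr$ with respect to $\gamma_1$, set the derivative to zero, and solve the stationarity condition, invoking Lemma \ref{txfrac} to guarantee that the relevant root lies in $(0,a/2)$, so the optimizer satisfies $\gamma_1<a/2<\gamma_2$ and the $\tfrac{1}{\gamma_1}\big[\tfrac{1}{\gamma_1}-\tfrac{1}{a-\gamma_1}\big]$ factor stays positive. Back-substituting the first-order condition into the objective is what collapses it to the compact closed form in the statement. The main obstacle, I expect, is twofold: first, justifying the $\Q$-function approximation tightly enough that the rational surrogate is faithful for arbitrary $\sigma^2$ and not merely high $\SNR$, since this is precisely the ``arbitrary noise'' regime; and second, the bookkeeping of the back-substitution -- using stationarity to eliminate the implicit dependence and land exactly on the stated rational expression -- together with a second-order or boundary check confirming the stationary point is the global maximum on $(0,a/2)$ rather than a spurious root.
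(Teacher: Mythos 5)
Your outer skeleton matches the paper's proof: reduce to one variable via $\gamma_2=a-\gamma_1$, complete the square at $\alpha=4$ to get $\pcov({\T},\lambda\gamma_1,4)$ in terms of $e^{x^2/2}\Q(x)$ (this is exactly the paper's (\ref{complicatedintegral2})), impose the first-order condition, back-substitute it into the objective, and use Lemma \ref{txfrac} to keep the factor $\frac{1}{\gamma_1}\bigl[\frac{1}{\gamma_1}-\frac{1}{a-\gamma_1}\bigr]$ positive. However, your pivotal middle step --- replacing the transcendental factor by the Karagiannidis-type $\Q$-function approximation --- is a genuine gap, for three reasons. First, that approximation does not make $\pcov$ rational: it replaces $e^{x^2/2}\Q(x)$ by $\frac{1-\exp(-1.4x)}{1.135\sqrt{2\pi}\,x}$, which still carries an exponential, and the resulting stationarity condition has the form $e^{cx}=1+cx(a-\gamma_1)/\gamma_1$, whose back-substitution produces a structurally different expression from the claimed one. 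Second, your own consistency check fails under it: as $\sigma^2\to 0$ the approximation tends to $1/(1.135\,\beta({\T},4))$, not $1/\beta({\T},4)$, so the surrogate is systematically biased and cannot reproduce an exact result. Third, the lemma's formula is exact (implicit in the optimal $\gamma_1$, but exact), so no approximation argument, however tight, can ``land exactly'' on it.

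The idea you are missing is that the $\Q$-function never needs to be approximated because it cancels exactly through the stationarity condition. The first-order condition (\ref{diffeqn}) equates $\bigl[\frac{1}{\gamma_1}-\frac{1}{a-\gamma_1}\bigr]/(\pi\lambda\beta({\T},4))$ to a ratio of two Gaussian-type integrals, $\int_0^\infty e^{-\pi\lambda\gamma_1\beta r-\mu{\T}\sigma^2 r^2}r\,\mathrm{d}r \big/ \int_0^\infty e^{-\pi\lambda\gamma_1\beta r-\mu{\T}\sigma^2 r^2}\,\mathrm{d}r$. Completing the square and shifting $u=r+u_0$ with $u_0=\frac{\pi\lambda\gamma_1\beta({\T},4)}{2\mu{\T}\sigma^2}$ evaluates the numerator's $\int u e^{-\mu{\T}\sigma^2u^2}\mathrm{d}u$ piece in closed form, while the remaining tail integrals are precisely $\pcov$ itself up to known factors; this yields the exact identity that the ratio equals $\frac{\pi\lambda\gamma_1}{2\mu{\T}\sigma^2}\bigl[\pcov({\T},\lambda\gamma_1,4)^{-1}-\beta({\T},4)\bigr]$ (the paper's (\ref{complicatedintegral3})). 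Rearranging gives (\ref{gammaeq}), solving for $\pcov$ at the optimizer gives (\ref{optimalpcovalpha4}), and multiplying by $\lambda(a-\gamma_1)$ is the stated $\DSR^*$. So the ``arbitrary noise'' difficulty you flagged dissolves: the transcendental part is eliminated by algebra, not approximated away.
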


\begin{proof}
See Appendix \ref{App:Appendix0}.
\qedhere
\end{proof}

\begin{cor}{\bf Low $\SNR$ case, $\alpha=4$.} As $\sigma^2\to\infty$, the coverage can be approximated as $\pcov({\T},\lambda,\alpha)=\mathbb{P}[\SINR>\T]\approx\mathbb{P}[\SNR>\T]=\pi\lambda \int_{0}^{\infty}{ e^{-\pi\lambda r-\mu \T\sigma^2 r^{\alpha/2}} \, \mathrm{d}r}$. Hence, the maximum $\dsr$ is given as
\begin{eqnarray}
{\DSR^*}={\lambda(a-\gamma_1)}\Big/{\Big(\frac{1}{\gamma_1}\left[\frac{1}{\gamma_1}-\frac{1}{a-\gamma_1}\right]\frac{{2\mu T\sigma^2}}{(\pi \lambda)^2}+1\Big)},
\end{eqnarray}
where optimal $\gamma_1$ satisfies $\frac{a-3a\gamma_1+3\gamma_1^2}{\gamma_1^3(a-\gamma_1)}=\frac{(\pi \lambda)^2}{{4\mu T\sigma^2}}$.
\end{cor}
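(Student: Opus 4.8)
The plan is to obtain this corollary as a noise-limited specialization of Lemma~\ref{arbitrarynoiselemma} together with an explicit one-dimensional maximization over $\gamma_1$. The organizing observation is that the stated coverage approximation amounts to replacing the interference factor $\beta(\T,\alpha)$ in~(\ref{pcov}) by its no-interference value $1$: since $\SINR = S/(I+\sigma^2)$ and $\SNR = S/\sigma^2$ differ only through the factor $\sigma^2/(I+\sigma^2)$, letting $\sigma^2\to\infty$ (equivalently $\SNR=1/(\mu\sigma^2)\to 0$) drives this factor to $1$ almost surely, so that $\mathbb{P}[\SINR>\T]/\mathbb{P}[\SNR>\T]\to 1$. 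First I would make this reduction precise by computing $\mathbb{P}[\SNR>\T]$ directly: conditioning on the squared distance $r$ to the nearest transmitter, whose density under the thinned process $\Phi_t$ of intensity $\lambda\gamma_1$ is $\pi\lambda\gamma_1 e^{-\pi\lambda\gamma_1 r}$, and using $g\sim\exp(\mu)$ to obtain $\mathbb{P}[\SNR>\T\mid r]=e^{-\mu\T\sigma^2 r^{\alpha/2}}$, integration recovers exactly the displayed expression, i.e.\ (\ref{pcov}) with $\beta(\T,\alpha)\mapsto 1$. A short supporting estimate is that for large $\sigma^2$ the integrand concentrates on the scale $r\sim(\mu\T\sigma^2)^{-2/\alpha}\to 0$, where the term $\pi\lambda\gamma_1 r\,\beta$ is negligible, so discarding the interference contribution is consistent to leading order.

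Next I would insert $\alpha=4$ and $\beta(\T,4)=1$ into the closed form of Lemma~\ref{arbitrarynoiselemma}. Because that lemma already evaluates the Gaussian-type integral $\int_0^\infty e^{-\pi\lambda\gamma_1 r - \mu\T\sigma^2 r^{2}}\,\mathrm{d}r$ (via completing the square and the tight standard-normal $\Q$-function approximation used for $\alpha=4$), setting $\beta=1$ immediately collapses its denominator $\frac{1}{\gamma_1}\big[\frac{1}{\gamma_1}-\frac{1}{a-\gamma_1}\big]\frac{2\mu T\sigma^2}{(\pi\lambda)^2\beta}+\beta$ to $\frac{1}{\gamma_1}\big[\frac{1}{\gamma_1}-\frac{1}{a-\gamma_1}\big]\frac{2\mu T\sigma^2}{(\pi\lambda)^2}+1$, reproducing the claimed rational form with $\gamma_2=a-\gamma_1$.

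It remains to maximize this expression over $\gamma_1$. Writing $\DSR(\gamma_1)=\lambda(a-\gamma_1)/D(\gamma_1)$ with $D(\gamma_1)=K\,\frac{a-2\gamma_1}{\gamma_1^2(a-\gamma_1)}+1$ and $K=\frac{2\mu T\sigma^2}{(\pi\lambda)^2}$, I would impose the first-order condition, most cleanly stated as $D'(\gamma_1)=-D(\gamma_1)/(a-\gamma_1)$. Routine quotient-rule differentiation and simplification then give $\frac{a^2-3a\gamma_1+3\gamma_1^2}{\gamma_1^3(a-\gamma_1)}=\frac{1}{2K}=\frac{(\pi\lambda)^2}{4\mu T\sigma^2}$, which is the stated optimality relation (with the numerator homogeneous of degree two in $(a,\gamma_1)$), the search being restricted to $\gamma_1\in(0,a/2)$ by Lemma~\ref{txfrac}. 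I would finish by confirming that this stationary point is the maximizer rather than a boundary or spurious root, e.g.\ through the sign of $\DSR'$ or concavity on $(0,a/2)$.

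The individual steps are short, so the main obstacle is conceptual rather than computational: making the $\beta(\T,\alpha)\mapsto 1$ reduction airtight. Because $\beta$ itself does not depend on $\sigma^2$, the replacement is not a literal limit of the formula but the statement that the \emph{relative} coverage error vanishes as $\sigma^2\to\infty$; the cleanest justification is the almost-sure convergence $\SINR/\SNR\to 1$ combined with dominated convergence. I would also flag that the entire $\alpha=4$ closed form rests on the accuracy of the $\Q$-function approximation inherited from Lemma~\ref{arbitrarynoiselemma}, so the corollary is exact only up to that approximation.
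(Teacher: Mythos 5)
Your proposal reconstructs what is evidently the paper's own (unwritten) derivation -- the corollary is stated without a separate proof and follows exactly the path you take: replace the $\SINR$ coverage by the $\SNR$ coverage, which amounts to setting $\beta(\T,4)\mapsto 1$ in the algebra of Lemma~\ref{arbitrarynoiselemma} (Appendix~\ref{App:Appendix0}), then maximize the resulting rational function of $\gamma_1$. Your first-order computation is correct, including the constant $\frac{(\pi\lambda)^2}{4\mu \T\sigma^2}$, and your adjudication of the numerator is right: the quotient rule gives $a^2-3a\gamma_1+3\gamma_1^2$, whereas the paper prints $a-3a\gamma_1+3\gamma_1^2$; the two agree only when $a=1$ (the normalization used in the paper's numerics, where $\gamma_1+\gamma_2=1$), so the printed equation should indeed carry $a^2$. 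This is a typo in the statement, not a gap in your argument.

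Two smaller corrections. First, your closing caveat misattributes an approximation: the closed form in Lemma~\ref{arbitrarynoiselemma} rests on the exact identity $\int_0^\infty e^{-(x-b)^2}x\,\mathrm{d}x=\tfrac12 e^{-b^2}+\sqrt{\pi}\,b\,\Q(-\sqrt{2}b)$, not on the tight $\Q$-function approximation of \cite{Karagiannidis2007}, which enters the paper only in Lemma~\ref{ArbitraryNoiseApprox}; given the $\SNR$-coverage model, the rational form for $\pcov$ at a stationary point is exact, and the only approximation in the corollary is $\SINR\mapsto\SNR$, which the paper (like you) treats as a modeling step rather than a proved limit. Second, a caveat that applies equally to your proof and to the paper: the rational expression $\lambda(a-\gamma_1)/D(\gamma_1)$ equals the $\dsr$ only at points satisfying the original first-order condition (it was obtained by substituting that condition into the objective), so re-maximizing it over $\gamma_1$ to obtain the polynomial equation is a surrogate step, not equivalent to the transcendental stationarity condition of the true $\SNR$-coverage objective; since the stated polynomial can only arise this way, you have matched the paper's reasoning, but it is worth recognizing this as the point where the corollary is heuristic rather than exact.
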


\begin{cor}{\bf No noise (degenerative) case.} For no noise, $\pcov({\T},\lambda,\alpha)=\beta({\T},\alpha)^{-1}$. Maximum $\dsr$ for single file for $0<a\leq 1$, Rayleigh fading, no noise, and $\alpha > 2$ is ${\DSR^*}=\underset{\gamma_1>0}{\max}\,\, \lambda(a-\gamma_1)\frac{1}{\beta({\T},\alpha)}=\frac{\lambda (a-\gamma_1^*)}{\beta({\T},\alpha)}$, obtained for the optimal value of $\gamma_1$, i.e., $\gamma_1^*=\varepsilon>0$ so that there is one transmitter\footnote{In the no noise case the single file result is trivial. In multiple file case, there will be interference due to the simultaneous transmissions of multiple files, which will be discussed in Sect. \ref{MultipleFiles}.}.
\end{cor}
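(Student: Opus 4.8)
The plan is to specialize the general coverage integral (\ref{pcov}) to the noiseless regime and then exploit the resulting density-independence of the coverage probability. First I would set $\sigma^2=0$ in (\ref{pcov}), which kills the factor $e^{-\mu \T\sigma^2 r^{\alpha/2}}$ and leaves the elementary integral $\pcov({\T},\lambda,\alpha)=\pi\lambda\int_0^\infty e^{-\pi\lambda r\beta({\T},\alpha)}\,\mathrm{d}r$. This evaluates to $\pi\lambda/(\pi\lambda\beta({\T},\alpha))=\beta({\T},\alpha)^{-1}$, establishing the first claim of the corollary.

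The crucial observation is that $\lambda$ cancels in this expression, so the coverage probability no longer depends on the transmitter density. Substituting $\pcov({\T},\lambda\gamma_1,\alpha)=\beta({\T},\alpha)^{-1}$ into the $\dsr$ definition (\ref{Ptotal}) and using the constraint $\gamma_2=a-\gamma_1$ from (\ref{optimization1}) then yields $\DSR=\lambda(a-\gamma_1)/\beta({\T},\alpha)$ directly, with no remaining integral to evaluate.

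It remains to optimize over $\gamma_1$. Since $\beta({\T},\alpha)$ is a constant independent of $\gamma_1$ and $a-\gamma_1$ is strictly decreasing in $\gamma_1$, the objective $\DSR$ is monotonically decreasing on $(0,a)$, so its supremum is approached as $\gamma_1\to 0^+$, consistent with Lemma \ref{txfrac} pushing $\gamma_1$ toward the lower end of its feasible range. The only obstacle here is interpretive rather than computational: $\gamma_1=0$ is infeasible, since with no transmitters the single file is absent from the network and coverage is undefined. I would therefore argue that the optimum is attained in the limiting sense $\gamma_1^*=\varepsilon>0$, where $\varepsilon$ is the smallest density fraction guaranteeing at least one transmitter possesses the file, giving $\DSR^*=\lambda(a-\gamma_1^*)/\beta({\T},\alpha)$. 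This boundary behavior is exactly what makes the single-file noiseless case degenerate, and it motivates the multi-file treatment of Sect. \ref{MultipleFiles}, where simultaneous transmissions reintroduce interference and a nontrivial interior optimum.
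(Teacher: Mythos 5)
Your proposal is correct and follows essentially the same route as the paper: the no-noise coverage integral collapses to $\beta({\T},\alpha)^{-1}$ independently of the transmitter density, so the $\dsr$ reduces to $\lambda(a-\gamma_1)/\beta({\T},\alpha)$, which is monotonically decreasing in $\gamma_1$ and is therefore maximized at the degenerate boundary value $\gamma_1^*=\varepsilon>0$ ensuring at least one transmitter. This is exactly the trivial argument the paper intends (cf.\ its footnote), so there is nothing to add.
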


Next, we consider the low noise approximation of the success probability that is more easily computable than the constant noise power expression and more accurate than the no noise approximation for $\sigma^2= 0$. Using the expansion $\exp(-x)=1-x+o(x)$ for $\sigma^2\neq 0$ as $x\to 0$, the term $\pcov({\T},\lambda,\alpha)$ for small but non-zero noise case can be calculated after an integration by parts of (\ref{pcov}) as follows
\begin{eqnarray}
\pcov({\T},\lambda,\alpha)=\frac{1}{\beta({\T},\alpha)}-\frac{\mu \T\sigma^2\left(\lambda\pi\right)^{-\frac{\alpha}{2}}}{\beta({\T},\alpha)^{\frac{\alpha}{2}+1}}\Gamma\left(1+\frac{\alpha}{2}\right)+o\left(\sigma^2\right).\nonumber
\end{eqnarray}

\begin{lem}\label{maximumcoveragerayleighlemma}
The maximum $\dsr$ for a single file for $a=1$, Rayleigh fading, small noise is equal to
\begin{eqnarray}
{\DSR^*}=\frac{\lambda\alpha}{\beta({\T},\alpha)}\left[\frac{1}{\alpha}-\frac{(\gamma_1^*-1)}{\alpha+\gamma_1^*(2-\alpha)}o(\sigma^2)\right].\nonumber
\end{eqnarray}
\end{lem}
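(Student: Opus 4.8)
The plan is to collapse the two-variable problem to a single-variable optimization over $\gamma_1$ and then perturb around the noise-free solution. Since $a=1$ forces $\gamma_2=1-\gamma_1$, the objective in (\ref{Ptotal}) becomes ${\DSR}(\gamma_1)=\lambda(1-\gamma_1)\,\pcov({\T},\lambda\gamma_1,\alpha)$. I would substitute the small-noise expansion of $\pcov$ displayed just above the lemma, but with intensity $\lambda\gamma_1$ in place of $\lambda$, to obtain
\begin{equation}
{\DSR}(\gamma_1)=\frac{\lambda(1-\gamma_1)}{\beta({\T},\alpha)}-\lambda(1-\gamma_1)\,\gamma_1^{-\alpha/2}\,C+o(\sigma^2),
\end{equation}
where $C=\frac{\mu{\T}\sigma^2(\lambda\pi)^{-\alpha/2}}{\beta({\T},\alpha)^{\alpha/2+1}}\Gamma(1+\alpha/2)=O(\sigma^2)$. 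In this form the noise enters only through the constant $C$ multiplying the shape factor $(1-\gamma_1)\gamma_1^{-\alpha/2}$, which is exactly the structure needed to identify $\gamma_1^*$.

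Next I would optimize in $\gamma_1$. The only nontrivial derivative is $\frac{d}{d\gamma_1}\big[(1-\gamma_1)\gamma_1^{-\alpha/2}\big]=-\tfrac{1}{2}\gamma_1^{-\alpha/2-1}\big[\alpha+\gamma_1(2-\alpha)\big]$, so the stationarity condition ${\DSR}'(\gamma_1)=0$ (discarding the $o(\sigma^2)$ derivative) reads
\begin{equation}
\frac{C}{2}\,(\gamma_1^*)^{-\alpha/2-1}\big[\alpha+\gamma_1^*(2-\alpha)\big]=\frac{1}{\beta({\T},\alpha)}.
\end{equation}
This defines $\gamma_1^*$ implicitly and, since $C=O(\sigma^2)$, forces $\gamma_1^*\to 0$ as $\sigma^2\to 0$, consistent with the degenerate no-noise corollary where the optimum collapses to $\gamma_1\to 0$. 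The useful rearrangement is the identity $C(\gamma_1^*)^{-\alpha/2}=\frac{2\gamma_1^*}{\beta({\T},\alpha)[\alpha+\gamma_1^*(2-\alpha)]}$.

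Finally I would substitute this identity back into ${\DSR}(\gamma_1^*)$ to eliminate $C$:
\begin{equation}
{\DSR^*}=\frac{\lambda(1-\gamma_1^*)}{\beta({\T},\alpha)}\left[1-\frac{2\gamma_1^*}{\alpha+\gamma_1^*(2-\alpha)}\right]+o(\sigma^2)=\frac{\lambda\alpha(1-\gamma_1^*)^2}{\beta({\T},\alpha)\,[\alpha+\gamma_1^*(2-\alpha)]}+o(\sigma^2),
\end{equation}
using $1-\frac{2\gamma_1^*}{\alpha+\gamma_1^*(2-\alpha)}=\frac{\alpha(1-\gamma_1^*)}{\alpha+\gamma_1^*(2-\alpha)}$. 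Factoring out $\frac{\lambda\alpha}{\beta({\T},\alpha)}$ and writing the remaining factor as $\frac{1}{\alpha}$ minus a term that vanishes with $\sigma^2$ and carries the coefficient $\frac{\gamma_1^*-1}{\alpha+\gamma_1^*(2-\alpha)}$ reproduces the stated expression.

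The main obstacle is keeping the perturbation bookkeeping consistent. Because $\gamma_1^*$ is itself a vanishing function of $\sigma^2$, the factor $\gamma_1^{-\alpha/2}$ diverges as $\gamma_1\to 0$, so the small-noise expansion of $\pcov$ is \emph{not} uniform near $\gamma_1=0$; I would have to check that the interior stationary point lands in the regime where the first-order expansion remains valid and that the $o(\sigma^2)$ remainder is genuinely subdominant to the retained correction. I would also verify that this critical point is the maximizer rather than a spurious root, by examining the boundary behavior (${\DSR}\to 0$ as $\gamma_1\to 1$, while the correction dominates and drives the approximant down for very small $\gamma_1$) together with the sign of the second derivative, before declaring it optimal.
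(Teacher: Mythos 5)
Your proposal is correct and takes essentially the same route as the paper's proof: reduce to a one-variable problem over $\gamma_1$, insert the small-noise expansion of $\pcov$, set the derivative to zero (the paper invokes concavity of the objective here, which your boundary/second-derivative check confirms), and substitute the stationarity identity back into the objective. Your closed form ${\DSR^*}=\frac{\lambda\alpha(1-\gamma_1^*)^2}{\beta({\T},\alpha)\left[\alpha+\gamma_1^*(2-\alpha)\right]}+o(\sigma^2)$ is an algebraically equivalent (and more explicit) rendering of the lemma's stated expression, and your stationarity condition is the same equation the paper rearranges into its polynomial condition for $\gamma_1^*$.
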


\begin{proof}
See Appendix \ref{App:Appendix1}.
\qedhere
\end{proof}

For $\alpha=4$, there is a closed form expression for $\beta({\T},4)$ as follows: $\beta({\T},4)=1+\sqrt{\T}\arctan(\sqrt{\T})$, which we use for the derivation of Lemma \ref{maxcoversmallnoise}.

\begin{lem}\label{maxcoversmallnoise}
The maximum $\dsr$ for small but non-zero noise and $\alpha=4$ is 
\begin{eqnarray}
{\DSR^*}=\frac{2\lambda(a-\gamma_1)}{(1+\sqrt{\T}\arctan(\sqrt{\T}))}\left[1-\frac{\mu {\T}\sigma^2a}{\mu {\T}\sigma^2(2a-\gamma_1)+o(\sigma^2)}\right]+o(\sigma^2).
\end{eqnarray}
\end{lem}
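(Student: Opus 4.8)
The plan is to reduce the claim to a one-dimensional optimization in $\gamma_1$ by specializing the small-noise expansion of $\pcov$ stated just above the lemma to $\alpha=4$, and then to carry out the stationarity analysis explicitly. First I would substitute $\alpha=4$ into that expansion, using $\Gamma(1+\alpha/2)=\Gamma(3)=2$ and the closed form $\beta({\T},4)=1+\sqrt{\T}\arctan(\sqrt{\T})$, and replace $\lambda$ by the thinned transmitter intensity $\lambda\gamma_1$ as in the definition of the $\dsr$. With $\gamma_2=a-\gamma_1$ this yields
\begin{equation}
\DSR(\gamma_1)=\lambda(a-\gamma_1)\Big[\frac{1}{\beta({\T},4)}-\frac{C}{\gamma_1^2}\Big]+o(\sigma^2),\nonumber
\end{equation}
where $C=2\mu{\T}\sigma^2(\lambda\pi)^{-2}\beta({\T},4)^{-3}$ gathers the noise constants and is itself $O(\sigma^2)$.

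Next I would differentiate in $\gamma_1$. Using $\frac{d}{d\gamma_1}\frac{a-\gamma_1}{\gamma_1^2}=\frac{\gamma_1-2a}{\gamma_1^3}$, setting $\frac{d}{d\gamma_1}\DSR=0$ collapses to the stationarity relation $C\,\frac{2a-\gamma_1}{\gamma_1^3}=\frac{1}{\beta({\T},4)}+o(\sigma^2)$, equivalently $C=\frac{\gamma_1^3}{\beta({\T},4)(2a-\gamma_1)}+o(\sigma^2)$ at the optimizer $\gamma_1$. I would not attempt to solve this cubic for $\gamma_1$ explicitly (the statement is phrased in terms of the optimal $\gamma_1$, just as in Lemma \ref{arbitrarynoiselemma}); instead I would invoke Lemma \ref{txfrac} to certify that the relevant root is interior, lying in $(0,a/2)$, and therefore a maximizer rather than a boundary extremum.

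The final step is back-substitution. Replacing $C$ in the bracket makes the noise-correction term equal to $\frac{C}{\gamma_1^2}=\frac{\gamma_1}{\beta({\T},4)(2a-\gamma_1)}$, so that
\begin{equation}
\DSR^*=\frac{\lambda(a-\gamma_1)}{\beta({\T},4)}\Big[1-\frac{\gamma_1}{2a-\gamma_1}\Big]+o(\sigma^2)=\frac{\lambda(a-\gamma_1)}{\beta({\T},4)}\cdot\frac{2(a-\gamma_1)}{2a-\gamma_1}+o(\sigma^2).\nonumber
\end{equation}
Rewriting $\frac{a-\gamma_1}{2a-\gamma_1}=1-\frac{a}{2a-\gamma_1}$ lets me pull the factor $2$ in front of $\frac{\lambda(a-\gamma_1)}{\beta({\T},4)}$, and displaying the ratio $\frac{a}{2a-\gamma_1}$ with its common factor $\mu{\T}\sigma^2$ restored in numerator and denominator produces exactly the stated form, with the $o(\sigma^2)$ in the denominator absorbing the remainder from the stationarity relation. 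I expect the only real difficulty to be bookkeeping: carrying the $o(\sigma^2)$ remainder consistently through the differentiation and the back-substitution, and checking that the $\mu{\T}\sigma^2$ factors cancel to leading order so that the bracket tends to $\frac{a-\gamma_1}{2a-\gamma_1}$ and $\DSR^*\to\lambda a/\beta({\T},4)$ as $\sigma^2\to0$, consistent with the no-noise corollary.
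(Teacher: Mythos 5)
Your proposal is correct and takes essentially the same route as the paper: both reduce the problem to the same cubic stationarity relation in $\gamma_1$ --- your $C(2a-\gamma_1)/\gamma_1^3=1/\beta({\T},4)+o(\sigma^2)$ is exactly the paper's relation (\ref{gamma1relation}) --- and then back-substitute it into the small-noise expansion of $\pcov$, keeping the $o(\sigma^2)$ remainder in the denominator, to obtain the stated form. The only tactical difference is the order of operations: the paper differentiates the exact $\dsr$ integral and then applies $e^{-x}=1-x+o(x)$ inside the resulting ratio of integrals (its (\ref{alpha4approx})), whereas you expand $\pcov$ first and then differentiate, which is the same device the paper itself uses to prove Lemma \ref{maximumcoveragerayleighlemma}.
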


\begin{proof}
See Appendix \ref{App:AppendixA}.
\qedhere
\end{proof}

\begin{figure*}
\begin{minipage}[t]{.328\textwidth}
\centering
\includegraphics[width=\textwidth]{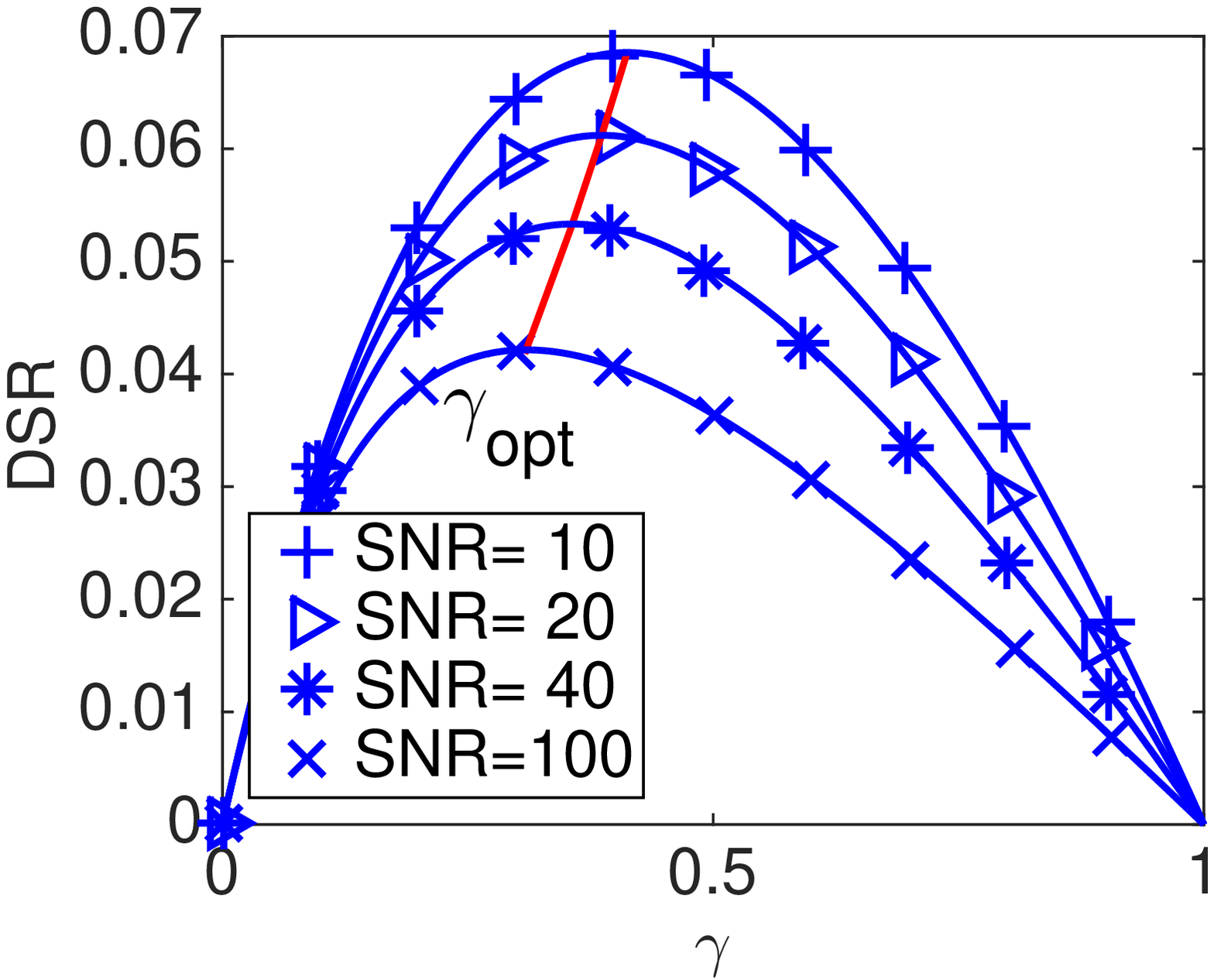}
\small{(a)}
\end{minipage}
\hfill
\begin{minipage}[t]{.328\textwidth}
\centering
\includegraphics[width=\textwidth]{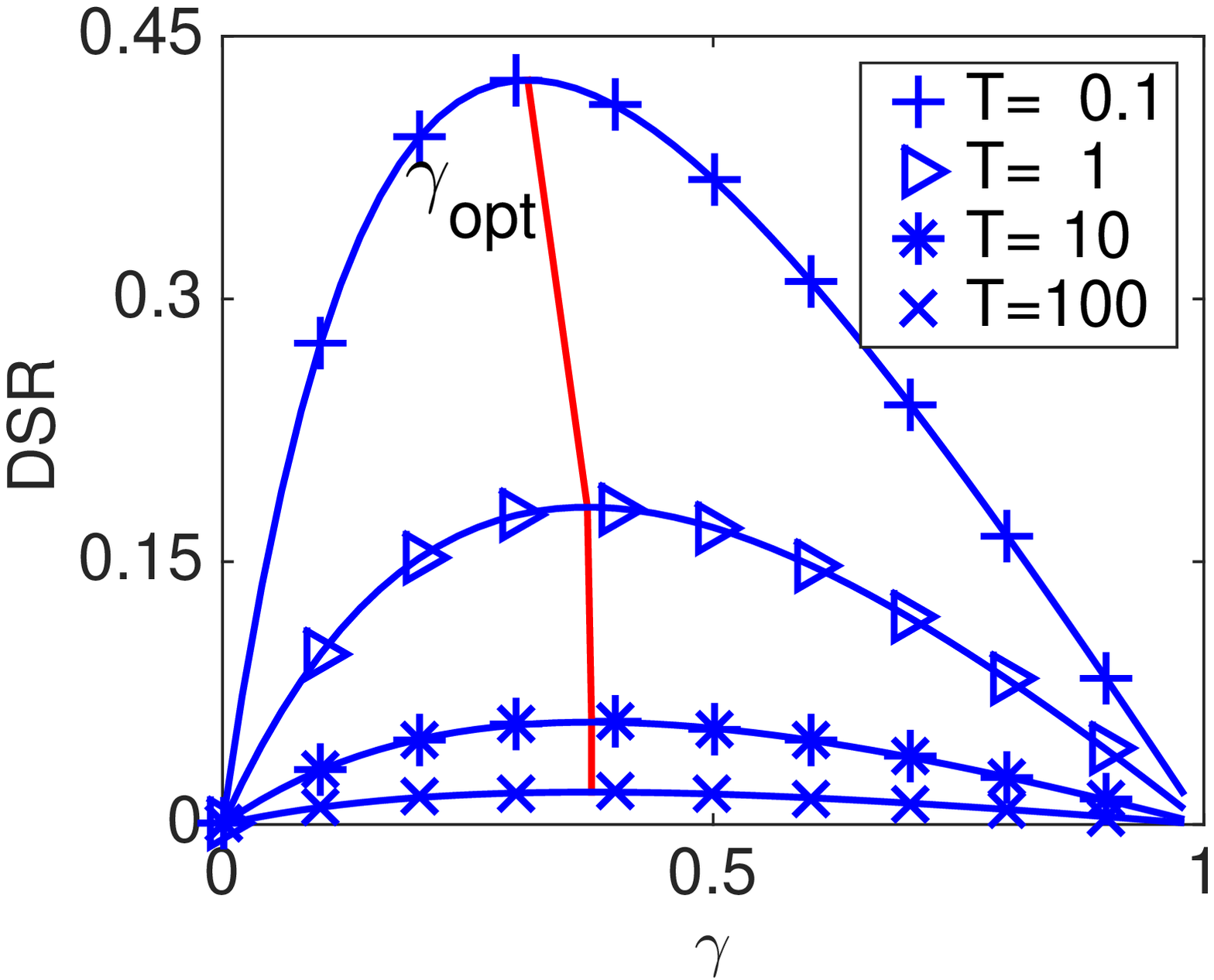}
\small{(b)}
\end{minipage}
\hfill
\centering{\begin{minipage}[t]{.328\textwidth}
\centering
\includegraphics[width=\textwidth]{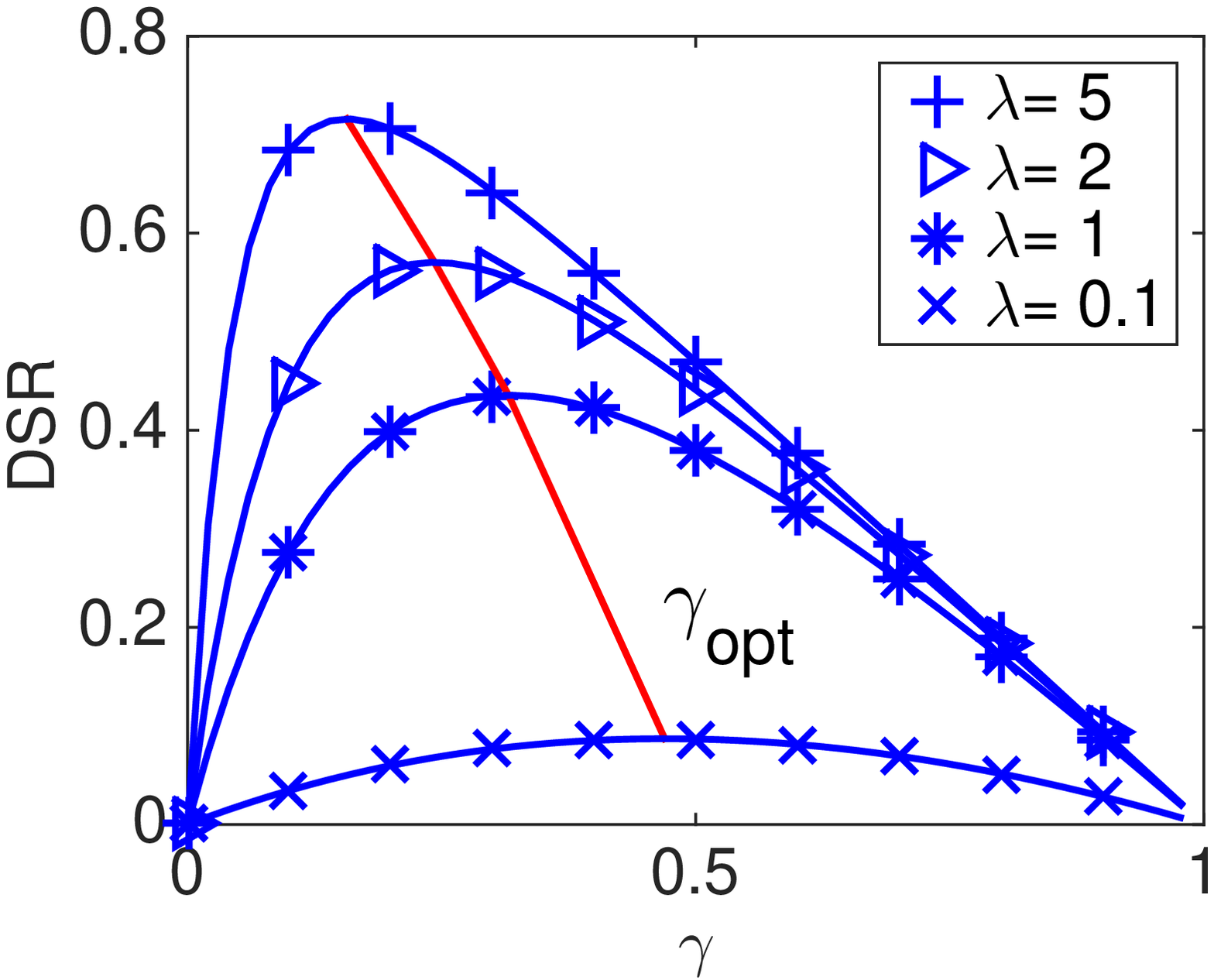}
\small{(c)}
\end{minipage}}
\caption{\small{$\dsr$ for single file versus $\gamma$ with respect to $\SNR$, $\T$ and $\lambda$. (a) $\dsr$, $\T=\SNR/2$, $\lambda$=0.1, where the dashed curves correspond to the respective Monte Carlo simulations, (b) $\dsr$, $\SNR=20$, $\lambda$=0.1, and (c)
$\dsr$, $\SNR=.1$, $\T=.05$.}\label{fig-all-1}}
\end{figure*}

{\bf Discussion.}
In Fig. \ref{fig-all-1} (a), we illustrate the relation between ${\DSR^*}$ and $\SNR$ for $\T=\SNR/2$, $\lambda$=0.1. To simplify the notation, we assume that $\gamma_1+\gamma_2=1$ and let $\gamma=\gamma_1$ and $\gamma_1^*=\gamma_{\mathrm{opt}}$. As $\SNR$ increases for $\T=\SNR/2$, the $\dsr$ decreases and $\gamma_{\mathrm{opt}}$ decreases. Note that the solid lines denote the simulation results for the $\PPP$ model. In Fig. \ref{fig-all-1} (b), the variation of ${\DSR^*}$ with respect to $\T$ for $\SNR=10$, $\lambda$=0.1 is shown. The coverage $\pcov({\T},\lambda\gamma_1,\alpha)$ is monotonically decreasing in $\T$ and a concave increasing function of $\gamma_1$. For increasing $\T$, the value of $\dsr$ becomes very small, and to maximize the $\dsr$, a higher fraction of the users should be transmitters (i.e., higher $\gamma_1$) to compensate the outage. For low $T$, to maximize the $\dsr$, the fraction of the receivers $\gamma_2$ should be higher. Therefore, as $\T$ decreases, the $\dsr$ increases and becomes right-skewed, but $\gamma_{\mathrm{opt}}$ decreases only slightly, which is negligible\footnote{This follows from the separability assumption of $\pcov({\T},\lambda\gamma_1,\alpha)$ in $\lambda\gamma_1$ and $T$, thus insensitivity of the $\dsr$ maximization problem to the value of $T$, which is further detailed in Assumption \ref{assumptionbeta} of Sect. \ref{Rayleighresults}, and verified in Appendix \ref{App:AppendixC}.}. Thus, we conclude that $\gamma_{\mathrm{opt}}$ is largely invariant to $\T$ and mainly determined by $\SNR$. In Fig. \ref{fig-all-1} (c), we show the variation of ${\DSR^*}$ with $\lambda$. The $\dsr$ increases with $\lambda$. On the other hand, $\gamma_{\mathrm{opt}}$ decreases as the density of users increases and transmissions from increased number of users cause high interference.

Although the single file case is trivial in the sense that it boils down to the optimization of the fractions of the transmitters and receivers that maximizes the $\dsr$, it is the baseline model for the multiple file case where the main objective is to determine the optimal caching distribution over the set of files. We discuss the multiple file setup next.

\section{Optimizing the $\dsr$ of the Sequential Serving Model with Multiple Files}
\label{MultipleFiles} 
We determine the optimal caching distribution for the transmitters to maximize the $\dsr$ for the sequential serving-based strategy, in which one type of file is transmitted at a time. Later, in Sect. \ref{mostgeneralcase}, we study the general case, where the transmissions of different files can take place simultaneously.

{\bf File Popularity Distribution.} To model the file popularity in a general $\PPP$ network, we use Zipf distribution for $p_r$, which is commonly used in the literature \cite{Golrezaei2014}. Then, the popularity of file $i$ is given by $p_r(i)={\frac{1}{i^{\gamma_r}}}\Big/{\sum\limits_{j=1}^{M}{\frac{1}{j^{\gamma_r}}}}$, for $i=1,\hdots, M$, where $\gamma_r$ is the Zipf exponent and there are $M$ files in total. The demand distribution $p_r\sim$Zipf$(\gamma_r)$ is the same for all receivers of the $\PPP$ model. 

\subsection{Sequential Serving-based Model}
\label{SUF}
In this model, only the set of transmitters having a specific file transmits simultaneously. Hence, this is the special case where only one file is transmitted at a time network-wide. This is illustrated in Fig. \ref{randomizedcaching} in Sect. \ref{Model}. If a user is selected as a receiver at a time slot, it draws a sample from the request distribution $p_r(\cdot)$, which is known. If any user is randomly selected as the transmitter at a time slot with probability $\gamma_1$, it draws a sample from the caching distribution $p_c(\cdot)$, which is not known yet. At any time slot, each receiver is scheduled based on closest transmitter association. According to this model, since file $i$ is available at each transmitter with $p_c(i)$, using the thinning property of the $\PPP$ \cite[Ch. 1]{BaccelliBook1}, the probability of coverage for file $i$ is 
\begin{align}
\label{thinning}
\pcov({\T},\lambda_t p_c(i),\alpha)
=\pi\lambda_t p_c(i)\int\nolimits_{0}^{\infty}{ e^{-\pi\lambda_t p_c(i)r\beta({\T},\alpha)-\mu \T\sigma^2 r^{\alpha/2}} \, \mathrm{d}r},
\end{align}
where $\lambda_t=\lambda\gamma_1$ is the total density of the transmitting users. 

Given that the requests are modeled by the Zipf distribution, our objective is to maximize the $\dsr$ of users for the sequential serving-based model, denoted by $\DSRs$ for a $\PPP$ model with density $\lambda$: 
\begin{equation}
\label{main-opt}
\begin{aligned}
& \underset{p_c}{\max}
& & \DSRs \\
& \text{s.t.}
& & \sum\limits_{i=1}^{M} p_c(i)=1; \quad p_r(i)={\frac{1}{i^{\gamma_r}}}\Big/{\sum\limits_{j=1}^{M}{\frac{1}{j^{\gamma_r}}}}, \quad i = 1, \ldots, M, 
\end{aligned}
\end{equation}
where ${\DSRs}=\lambda\gamma_2\sum\limits_{i=1}^M {p_r(i) \pcov({\T}, \lambda \gamma_1 p_c(i),\alpha)}$, the first constraint is the total probability law for the caching distribution, and the second constraint is the demand distribution modeled as Zipf with exponent $\gamma_r$, and $\gamma_2=1-\gamma_1$, and $M$ is the number of files. 

Note that $\pcov({\T}, \lambda \gamma_1 p_c(i),\alpha)$ in (\ref{main-opt}) is obtained for a sequential transmission or scheduling model and it is same as the formulation given in (\ref{pcov}) which follows from Theorem 1 of \cite{Andrews2011}. This model can be generalized to different scheduling schemes. For example, in Sect. \ref{mostgeneralcase}, we introduce a more general model where multiple files are simultaneously transmitted, and obtain a coverage expression $\PCOV({\T},\cdot,\alpha)$ that is different from $\pcov({\T}, \cdot,\alpha)$ in (\ref{main-opt}), which is detailed in Theorem \ref{mainProbCov} of Sect. \ref{mostgeneralcase}.

Similar to the optimal fractions of the transmitter and receiver processes calculated in Sect. \ref{SingleFile} for the single file case, optimal values of $\gamma_1$ and $\gamma_2=1-\gamma_1$ for multi-file case can be found by taking the derivative of (\ref{main-opt}) with respect to $\gamma_1$, which yields the following expression:
\begin{eqnarray}
\label{optimalgammaeq}
\sum\limits_{i=1}^M\lambda p_r(i)p_c(i)\Big\{\int_{0}^{\infty}{\Big[\frac{1}{\gamma_1}-\frac{1}{1-\gamma_1}-\pi\lambda p_c(i)\beta({\T},\alpha)r\Big] e^{-\pi\lambda\gamma_1 p_c(i) r\beta({\T},\alpha)-\mu {\T}\sigma^2 r^{\frac{\alpha}{2}}} \, \mathrm{d}r}\Big\}=0,
\end{eqnarray}
where optimal value of $\gamma_1$ and the pmf $p_c(\cdot)$ are coupled. Therefore, we first solve (\ref{main-opt}) by optimizing the pmf $p_c(\cdot)$ and then, determine the $\gamma_1$ value that satisfies (\ref{optimalgammaeq}).

We now investigate different special network scenarios where significant simplification is possible.

\subsection{Rayleigh Fading $\dsr$ Results}\label{Rayleighresults}
We optimize the $\dsr$ of users for the multi-file setup, where interference fading power follows an exponential distribution with $g \sim \exp(\mu)$. We consider several special cases corresponding to 1) small but non-zero noise, 2) arbitrary noise and 3) an approximation for arbitrary noise allowing the path loss exponent $\alpha = 4$. We find the optimal caching distribution corresponding to each scenario.


\begin{lem}\label{smallnoiselemma}
{\bf Small but non-zero noise, $\alpha>2$.} The optimal caching distribution is $p_c(i)={\frac{1}{i^{\gamma_c}}}\Big/{\sum\limits_{j=1}^{M}{\frac{1}{j^{\gamma_c}}}}, \, i=1,\hdots, M$, which is also Zipf distributed, where $\gamma_c=\frac{\gamma_r}{\alpha/2+1}$ is the Zipf exponent for the  caching pmf.
\end{lem}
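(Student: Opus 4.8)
The plan is to exploit the small but non-zero noise expansion of $\pcov(\T,\lambda,\alpha)$ stated in the text preceding Lemma \ref{maximumcoveragerayleighlemma}, and thereby reduce the problem (\ref{main-opt}) to a convex program over the probability simplex. First I would substitute $\lambda \to \lambda \gamma_1 p_c(i)$ into that expansion to obtain
\begin{align}
\pcov(\T, \lambda \gamma_1 p_c(i), \alpha) = \frac{1}{\beta(\T,\alpha)} - \frac{\mu \T \sigma^2 (\pi \lambda \gamma_1)^{-\alpha/2}}{\beta(\T,\alpha)^{\alpha/2+1}} \Gamma\left(1 + \frac{\alpha}{2}\right) p_c(i)^{-\alpha/2} + o(\sigma^2). \nonumber
\end{align}
Feeding this into ${\DSRs}=\lambda\gamma_2\sum_{i=1}^M p_r(i) \pcov(\T, \lambda \gamma_1 p_c(i),\alpha)$ and using $\sum_i p_r(i) = 1$, the $\sigma^0$ contribution collapses to the constant $\lambda \gamma_2 / \beta(\T, \alpha)$, which is entirely independent of $p_c$. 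Hence, to the leading order in $\sigma^2$ at which $p_c$ enters, maximizing $\DSRs$ is equivalent to \emph{minimizing} $\sum_{i=1}^M p_r(i)\, p_c(i)^{-\alpha/2}$ subject to $\sum_i p_c(i) = 1$ and $p_c(i) \geq 0$.

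Second, I would solve this reduced problem with a single Lagrange multiplier. Writing $L = \sum_i p_r(i) p_c(i)^{-\alpha/2} + \nu\big(\sum_i p_c(i) - 1\big)$ and imposing $\partial L / \partial p_c(i) = 0$ gives $-\frac{\alpha}{2} p_r(i) p_c(i)^{-\alpha/2-1} + \nu = 0$, so that $p_c(i) \propto p_r(i)^{1/(\alpha/2+1)}$. Substituting the Zipf form $p_r(i) \propto i^{-\gamma_r}$ yields $p_c(i) \propto i^{-\gamma_r/(\alpha/2+1)}$, and after normalizing by $\sum_j j^{-\gamma_c}$ I recover precisely the claimed Zipf distribution with $\gamma_c = \gamma_r/(\alpha/2+1)$.

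Third, I would confirm this stationary point is the global optimum. Since $x \mapsto x^{-\alpha/2}$ is strictly convex on $(0, \infty)$ for $\alpha > 2$, and the weights $p_r(i) > 0$, the objective is a strictly convex function minimized over an affine-constrained (hence convex) simplex; therefore the stationarity condition is both necessary and sufficient and identifies the unique global minimizer. Moreover, because $p_c(i)^{-\alpha/2} \to \infty$ as $p_c(i) \to 0^+$ whenever $p_r(i) > 0$, the minimizer lies strictly in the interior, so the nonnegativity constraints are inactive and no boundary (complementary slackness) cases arise.

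The main obstacle I anticipate is the perturbation bookkeeping: I must argue carefully that the $o(\sigma^2)$ remainder terms — which may themselves carry $p_c$-dependence — do not perturb the optimizer to leading order. The clean way to handle this is to observe that the $\sigma^0$ term of $\DSRs$ is exactly flat in $p_c$, so the first nontrivial variation of $\DSRs$ with respect to $p_c$ is the explicit $\Theta(\sigma^2)$ term displayed above, while the $o(\sigma^2)$ corrections are asymptotically negligible as $\sigma^2 \to 0$. Once this dominance is justified, the Lagrangian computation and the convexity argument are routine.
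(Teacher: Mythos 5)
Your proposal is correct and follows essentially the same route as the paper's own proof: the small-noise expansion makes the $\sigma^0$ term flat in $p_c$, reducing (\ref{main-opt}) to minimizing $\sum_{i=1}^{M} p_r(i)\,p_c(i)^{-\alpha/2}$ over the simplex, and the Lagrange multiplier stationarity condition gives $p_c(i)\propto p_r(i)^{1/(\alpha/2+1)}$, i.e., Zipf with $\gamma_c=\frac{\gamma_r}{\alpha/2+1}$. Your extra steps --- strict convexity of $x\mapsto x^{-\alpha/2}$ guaranteeing a unique global minimizer, and the blow-up at the boundary ensuring an interior solution --- supply the global-optimality justification that the paper leaves implicit, but they do not change the underlying argument.
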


\begin{proof}
See Appendix \ref{App:AppendixB}.
\qedhere
\end{proof}

Assuming $\alpha>2$, the caching pmf exponent satisfies $\gamma_c<\frac{\gamma_r}{2}$, which implies that the optimal caching pmf that maximizes the $\dsr$ has a more uniform distribution exhibiting less locality of reference compared to the request distribution that is more skewed towards the most popular files.

\begin{assu}\label{assumptionbeta}
{\bf Separability of coverage distribution.} For Rayleigh, Ricean and Nakagami small-scale fading distributions, the function $\beta({\T},\alpha)^{\alpha/2}$ can be approximated as a linear function of $\T$ as shown in Fig. \ref{betavsT}. This relation\footnote{Although the expression $\beta({\T},\alpha)^{\alpha/2}/\T$ is not analytically tractable, we can approximate $\beta({\T},\alpha)^{\alpha/2}$ as a linear function of $\T$ because the lower incomplete Gamma function has light-tailed characteristics. Since the channel power distribution -which is exponential due to Rayleigh fading- is also light tailed, we can expect to observe such a linear approximation in our numerical results.} greatly simplifies the analysis of the optimization problem given in (\ref{main-opt}). 
\end{assu}

\begin{lem}\label{ArbitraryNoise}
{\bf Arbitrary Noise, $\alpha>2$.} For arbitrary noise, from Assumption \ref{assumptionbeta}, the optimal caching distribution $p_c(\cdot)$ can be approximated as a Zipf distribution given by
\begin{eqnarray}
\label{arbitrarynoisepmf}
p_c(i)\approx{\frac{1}{i^{\gamma_c}}}\Big/{\sum\limits_{j=1}^{M}{\frac{1}{j^{\gamma_c}}}}, \quad i=1,\hdots, M,
\end{eqnarray}
where $\gamma_c=\frac{\gamma_r}{\alpha/2+1}<\frac{\gamma_r}{2}$ is the Zipf exponent for the  caching pmf assuming $\alpha>2$.
\end{lem}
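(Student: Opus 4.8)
The plan is to reduce the arbitrary-noise problem to the very functional optimization already solved for the small-noise regime in Lemma~\ref{smallnoiselemma}, with Assumption~\ref{assumptionbeta} supplying exactly the cancellation that makes this reduction possible. First I would start from the exact per-file coverage in (\ref{thinning}) and isolate how $p_c(i)$ enters. Writing $A_i=\pi\lambda_t p_c(i)$ (with $\lambda_t=\lambda\gamma_1$) and substituting $r=t/(A_i\beta({\T},\alpha))$ in the integral pulls out a common prefactor and gives
\begin{align}
\label{prop:sub}
\pcov({\T},\lambda_t p_c(i),\alpha)=\frac{1}{\beta({\T},\alpha)}\int_{0}^{\infty} \exp\!\Big(-t-\frac{\mu{\T}\sigma^2}{\beta({\T},\alpha)^{\alpha/2}\,(\pi\lambda_t p_c(i))^{\alpha/2}}\,t^{\alpha/2}\Big)\,\mathrm{d}t .
\end{align}
The file index $i$ now appears only through $p_c(i)^{-\alpha/2}$ inside the noise term, while the entire ${\T}$-dependence of the prefactor sits in the common factor $1/\beta({\T},\alpha)$.

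The key step is to invoke Assumption~\ref{assumptionbeta}: replacing $\beta({\T},\alpha)^{\alpha/2}$ by its linear fit $\xi\,{\T}$ cancels the factor ${\T}$ in the noise term, so its coefficient collapses to the ${\T}$-independent constant $C=\mu\sigma^2/\big(\xi(\pi\lambda_t)^{\alpha/2}\big)$ and (\ref{prop:sub}) becomes $\pcov\approx \beta({\T},\alpha)^{-1}\psi(p_c(i))$ with $\psi(x)=\int_0^\infty \exp(-t-C\,x^{-\alpha/2}t^{\alpha/2})\,\mathrm{d}t$. Substituting this into the objective ${\DSRs}=\lambda\gamma_2\sum_i p_r(i)\,\pcov({\T},\lambda\gamma_1 p_c(i),\alpha)$ of (\ref{main-opt}), the common factors $\lambda\gamma_2$ and $\beta({\T},\alpha)^{-1}$ pull out, so that maximizing the $\dsr$ over the simplex is equivalent to
\begin{align}
\label{prop:obj}
\max_{p_c}\ \sum_{i=1}^{M} p_r(i)\,\psi(p_c(i))\qquad \text{s.t.}\quad \sum_{i=1}^{M}p_c(i)=1 .
\end{align}
This is precisely the structure solved in Lemma~\ref{smallnoiselemma}, where the small-noise expansion produces the same objective with $\psi(x)\approx 1-\text{const}\cdot x^{-\alpha/2}$; Assumption~\ref{assumptionbeta} is exactly what lets us reach (\ref{prop:obj}) without first expanding in $\sigma^2$.

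Next I would attach a multiplier $\nu$ to the normalization constraint and write the stationarity condition $p_r(i)\,\psi'(p_c(i))=\nu$ for every $i$. Differentiating gives $\psi'(x)=\tfrac{\alpha}{2}C\,x^{-\alpha/2-1}\int_0^\infty t^{\alpha/2}\exp(-t-C x^{-\alpha/2}t^{\alpha/2})\,\mathrm{d}t$, whose leading behaviour is $\psi'(x)\propto x^{-(\alpha/2+1)}$, in agreement with differentiating the $p_c(i)^{-\alpha/2}$ correction of Lemma~\ref{smallnoiselemma}. The stationarity condition then reads $p_r(i)\,p_c(i)^{-(\alpha/2+1)}=\text{const}$, i.e. $p_c(i)\propto p_r(i)^{1/(\alpha/2+1)}\propto i^{-\gamma_r/(\alpha/2+1)}$; normalizing over $i=1,\dots,M$ yields the claimed Zipf form with exponent $\gamma_c=\gamma_r/(\alpha/2+1)<\gamma_r/2$ for $\alpha>2$. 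The companion $\gamma_1$ is then pinned down by substituting this $p_c$ into (\ref{optimalgammaeq}), as in the single-file development.

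The main obstacle is this last step: unlike the small-noise case, $\psi'(x)$ is not an exact power of $x$, since the remaining integral factor retains an $x$-dependence through the noise term, so the maximizer of (\ref{prop:obj}) is only approximately Zipf --- which is why the lemma is stated with ``$\approx$''. I expect the bulk of the work to be (i) quantifying this residual dependence to confirm that $\psi'(x)\propto x^{-(\alpha/2+1)}$ holds to leading order over the operating range of $p_c(i)$, (ii) checking that $\psi$ is concave there so that the KKT point is the global maximizer of the concave program (\ref{prop:obj}), and (iii) confirming that the linear fit of Assumption~\ref{assumptionbeta}, validated numerically in this section, is accurate enough across the relevant ${\T}$ to make the cancellation underlying the whole argument legitimate.
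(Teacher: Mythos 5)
Your proposal is correct in substance and rests on the same two pillars as the paper's own proof in Appendix~\ref{App:AppendixC} --- Lagrangian stationarity over the caching simplex and the separability supplied by Assumption~\ref{assumptionbeta} --- but the execution of the decisive step is genuinely different. The paper never forms a reduced objective: it converts the stationarity conditions (\ref{der_pc}) into derivative relations in $\lambda_t$, invokes the threshold-rescaling identity $\pcov({\T},\lambda_t p_c(i),\alpha)=\pcov({\T}(p_c(j)/p_c(i))^{\alpha/2},\lambda_t p_c(j),\alpha)$ to compare files $i$ and $j$ at a common density, and then extracts the exponent from a functional equation for $g({\T})$ by differentiating in ${\T}$ (from (\ref{pcov-i-j}) and (\ref{derivative-relation}) onward). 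You instead push all ${\T}$-dependence into the prefactor $1/\beta({\T},\alpha)$, exactly as the paper does in (\ref{covinvariancebeta}), reduce (\ref{main-opt}) to $\max\sum_i p_r(i)\psi(p_c(i))$ on the simplex, and read the exponent off the leading power-law behavior $\psi'(x)\propto x^{-(\alpha/2+1)}$. Your route buys three things: it is more direct; it localizes the ``$\approx$'' in one clearly labeled place, namely the residual $x$-dependence of the factor $\int_0^\infty t^{\alpha/2}e^{-t-Cx^{-\alpha/2}t^{\alpha/2}}\,\mathrm{d}t$ in $\psi'$, which is constant only to leading order and is exactly what vanishes in the small-noise limit, collapsing your argument onto Lemma~\ref{smallnoiselemma}; and the stationarity condition $p_c(i)\propto p_r(i)^{1/(\alpha/2+1)}$ hands you the general-demand statement of Theorem~\ref{GeneralFadingArbitraryNoiseArbitraryRequest} for free. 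The paper's route carries the same approximation but hides it inside two steps that are not exact as written: the rescaling identity ignores the ${\T}$-dependence of $\beta({\T},\alpha)$, and the final differentiation step treats $g'$ at the rescaled threshold as $g'({\T})$; these looser steps combine to yield the stated exponent, whereas your derivation makes the error term visible. So the ``main obstacle'' you list as items (i)--(iii) is not a gap relative to the paper: the paper's proof leaves the identical residual unquantified (its justification is the numerical evidence behind Assumption~\ref{assumptionbeta} and Fig.~\ref{betavsT}), it simply does not name it.
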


\begin{proof}
See Appendix \ref{App:AppendixC}.
\end{proof}

Interestingly, this result is the same as Rayleigh fading with small but non-zero noise model developed in Sect. \ref{Rayleighresults}, which follows from the monotonic transformation \cite{Carter2001} caused by increasing the noise power $\sigma^2$ in (\ref{thinning}). According to the pmf given in (\ref{arbitrarynoisepmf}), the optimal caching strategy exhibits less locality of reference than the input stream of requests. Therefore, it is a good caching strategy, which will be further verified in Sect. \ref{Performance}. Lemma \ref{ArbitraryNoise} suggests that files with higher popularity should be cached less frequently than the demand for this file, and unpopular files should be cached more frequently than the demand for the file. However, high popularity files should be still cached at more locations compared to the low popularity files. The path loss evens out the file popularities and the caching distribution should be more uniform compared to the request distribution. The sequential transmission model shows that for a Zipf request distribution with exponent $\gamma_r$, which is skewed towards the most popular files, the optimal caching pmf should be also Zipf distributed with the relation $\gamma_c<\frac{\gamma_r}{2}$ for $\alpha>2$, implying that the caching pmf is more uniform than the request pmf.

The next result generalizes Lemma \ref{ArbitraryNoise} to any request distribution $p_r(\cdot)$ rather than the Zipf distribution, and is derived solving (\ref{derivative-relation}) in Appendix \ref{App:AppendixC} using the separability of coverage from Assumption \ref{assumptionbeta}.
\begin{theo}\label{GeneralFadingArbitraryNoiseArbitraryRequest}
For arbitrary noise, if the small-scale fading is Rayleigh, Nakagami or Ricean distributed, from Assumption \ref{assumptionbeta}, for a general request pmf, $p_r(\cdot)$, the optimal caching pmf is approximated as 
\begin{eqnarray}
\label{generaloptimal}
p_c(i)\approx {p_r(i)^{\frac{1}{(\alpha/2+1)}}}\Big/{\sum\limits_{j=1}^{M}{p_r(j)^{\frac{1}{(\alpha/2+1)}}}}, \quad i=1,\hdots, M.
\end{eqnarray}
\end{theo}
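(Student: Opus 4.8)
The plan is to read Theorem~\ref{GeneralFadingArbitraryNoiseArbitraryRequest} as the un-specialized form of Lemma~\ref{ArbitraryNoise}: the variational argument of Appendix~\ref{App:AppendixC} never uses the explicit shape $p_r(i)\propto i^{-\gamma_r}$ until its final line, so the same computation carried out for a generic $p_r(\cdot)$ should return the power map $p_r\mapsto p_r^{1/(\alpha/2+1)}$ instead of a Zipf exponent. First I would write (\ref{main-opt}) as $\max_{p_c}\DSRs$ over the probability simplex $\{\sum_i p_c(i)=1,\ p_c(i)\ge 0\}$, with $\DSRs=\lambda\gamma_2\sum_i p_r(i)\,\pcov(\T,\lambda\gamma_1 p_c(i),\alpha)$. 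Since $\pcov(\T,\cdot,\alpha)$ is concave and increasing in its density argument (the property used in the single-file discussion) and $p_c(i)$ enters it linearly, $\DSRs$ is a nonnegative weighted sum of concave functions, hence concave on the convex simplex; consequently every stationary point of the Lagrangian is the global maximizer and it suffices to solve the first-order condition.

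Forming $L=\DSRs-\nu\big(\sum_i p_c(i)-1\big)$ and differentiating in $p_c(i)$ produces the stationarity relation I expect to coincide with (\ref{derivative-relation}), namely $p_r(i)\,\partial_x\pcov(\T,x,\alpha)\big|_{x=\lambda\gamma_1 p_c(i)}$ is independent of $i$. The task is then to evaluate $\partial_x\pcov$ well enough to invert this. Here I invoke Assumption~\ref{assumptionbeta}: substituting $u=\beta(\T,\alpha)r$ in (\ref{thinning}) and using $\beta(\T,\alpha)^{\alpha/2}\propto\T$ cancels the $\T$ inside the noise term, collapsing the coverage into a single density variable,
\[
\pcov(\T,x,\alpha)\approx \frac{\pi}{\beta(\T,\alpha)}\,\Psi\!\big(x^{-\alpha/2}\big),\qquad \Psi(z)=\int_0^\infty e^{-\pi w-\,\text{const}\cdot z\,w^{\alpha/2}}\,\mathrm{d}w,
\]
with $\Psi$ independent of $x$ and $\T$. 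This is exactly the structure of the small-noise expansion of Lemma~\ref{smallnoiselemma}, where $\pcov=\beta^{-1}-(\text{const})\,x^{-\alpha/2}+o(\sigma^2)$. Differentiating gives $\partial_x\pcov\propto x^{-\alpha/2-1}$ up to the factor $\Psi'$, so with $x=\lambda\gamma_1 p_c(i)$ the stationarity relation reduces to $p_r(i)\,p_c(i)^{-(\alpha/2+1)}=\text{const}$, i.e. $p_c(i)^{\alpha/2+1}\propto p_r(i)$; solving and normalizing by $\sum_j p_c(j)=1$ yields (\ref{generaloptimal}).

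Two steps carry the real work. The reduction $\partial_x\pcov\propto x^{-\alpha/2-1}$ is exact only to first order in $\sigma^2$ (case~1); for arbitrary noise (case~2) the factor $\Psi'$ is evaluated at the $p_c(i)$-dependent point $(\lambda\gamma_1 p_c(i))^{-\alpha/2}$ and is not literally constant, so I would close the gap as Lemma~\ref{ArbitraryNoise} does --- arguing that raising $\sigma^2$ transforms (\ref{thinning}) through a common monotone reparametrization of every summand \cite{Carter2001}, under which the $\argmax$ over the simplex is invariant, leaving $p_c(i)\propto p_r(i)^{1/(\alpha/2+1)}$ as the approximate optimizer (this is why the statement carries ``$\approx$''). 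This monotone-transformation step is the main obstacle. Finally, to cover Ricean and Nakagami and not merely Rayleigh, I note that the only fading-dependent object in the argument is $\beta(\T,\alpha)$ (through the interference-power law $g$), which after separability appears solely as the $x$-independent prefactor $1/\beta(\T,\alpha)$; since Assumption~\ref{assumptionbeta} asserts the linear-in-$\T$ behaviour of $\beta(\T,\alpha)^{\alpha/2}$ for all three families, the collapse to $x^{-\alpha/2}$ and hence the power map are identical in each case. The negative semidefiniteness of the Hessian of $\DSRs$, already implied by the concavity above, certifies the stationary pmf as the maximizer rather than a saddle.
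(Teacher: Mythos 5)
Your proposal is correct and follows essentially the paper's own route: the paper derives Theorem \ref{GeneralFadingArbitraryNoiseArbitraryRequest} precisely by re-solving the Lagrangian stationarity condition of Appendix \ref{App:AppendixC} (equation (\ref{derivative-relation})) with a general $p_r(\cdot)$ in place of the Zipf weights, invoking Assumption \ref{assumptionbeta} to separate the coverage as $f(\lambda_t p_c(i),\alpha)\,g(\T)$ --- which is exactly your collapse $\pcov(\T,x,\alpha)\approx \frac{\pi}{\beta(\T,\alpha)}\Psi\bigl(x^{-\alpha/2}\bigr)$ --- and reading off $p_c(i)\propto p_r(i)^{1/(\alpha/2+1)}$ from the resulting first-order condition. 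The only execution-level differences are cosmetic: the paper extracts the power relation by differentiating a functional equation in $\T$ obtained from its scaling identity, whereas you differentiate the separable form directly in the density variable (adding a concavity certificate the paper leaves implicit), and your acknowledged monotone-reparametrization step for arbitrary noise is exactly the same approximation the paper itself leans on via Assumption \ref{assumptionbeta} and its remark following Lemma \ref{ArbitraryNoise}.
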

From (\ref{generaloptimal}), it is required to flatten the request pmf to optimize the caching performance. Examples include the case of uniform demands, where the optimal caching distribution should be also uniform, and Geometric($p$) request distribution, for which the caching distribution satisfies Geometric($q$), where $q=1-(1-p)^{\frac{1}{(\alpha/2+1)}}$. In the case of Zipf demands, we can derive the same result as in Lemma \ref{ArbitraryNoise}. 


\begin{figure}[t!]
\centering
\begin{minipage}{.328\textwidth}
\centering
\includegraphics[width=\columnwidth]{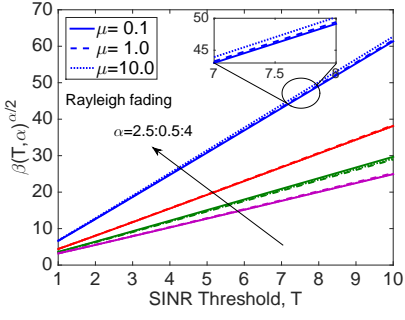}
\center{\small{(a) Rayleigh fading, param. $\mu$.}}
\end{minipage}
\begin{minipage}{.328\textwidth}
\centering
\includegraphics[width=\columnwidth]{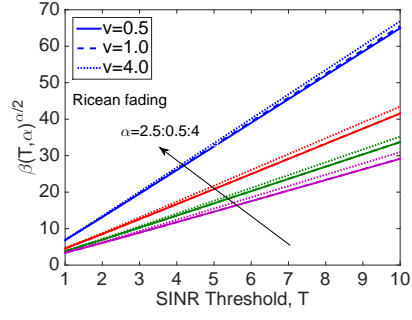}
\center{\small{(b) Ricean fading, distance param. $v$. 
}}
\end{minipage}
\begin{minipage}{.328\textwidth}
\centering
\includegraphics[width=\columnwidth]{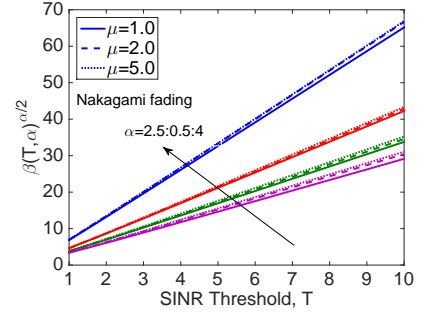}
\center{\small{(c) Nakagami fading, shape param. $\mu$.
}}
\end{minipage}
\caption{\small{The linear relation between $\beta({\T},\alpha)^{\alpha/2}$ and $\T$.}\label{betavsT}}
\end{figure}

\begin{lem}\label{ArbitraryNoiseApprox}
{\bf An Approximation for Arbitrary Noise with $\alpha=4$.} For a total number of files $M$ and arbitrary noise with $\alpha=4$, the optimal caching pmf is 
\begin{eqnarray}
\label{arbitNoiseApprox}
p_c(i)=a_i+b\log\Big(\frac{i+1}{i}\Big),\,\, i=1, \hdots, M,
\end{eqnarray}
where $b=\frac{\sqrt{\mu \T \sigma^2}\gamma_r}{\pi \lambda_t\beta({\T},4)}$, $a_i=\frac{1}{M}+\frac{b}{M}\sum\limits_{j=1}^M{\log\left(\frac{j}{i+1}\right)}$, and the pmf is valid only if $b\leq [M\log(M)-\log(M!)]^{-1}$. 
\end{lem}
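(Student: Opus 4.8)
The plan is to use the fact that for $\alpha=4$ the density integral in (\ref{thinning}) is Gaussian and hence closed form. First I would set $\alpha=4$ and complete the square in $\int_0^\infty e^{-\pi\lambda_t p_c(i)\beta({\T},4)r-\mu\T\sigma^2 r^2}\,\mathrm{d}r$, which writes $\pcov({\T},\lambda_t p_c(i),4)$ in terms of the standard Gaussian $\Q$-function of an argument linear in $p_c(i)$; explicitly the coverage is proportional to $u\,e^{u^2/2}\Q(u)$ with $u=\pi\lambda_t p_c(i)\beta({\T},4)/\sqrt{2\mu\T\sigma^2}$. I would then substitute the tight $\Q$-function approximation used throughout the $\alpha=4$ analysis. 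Its effect is to render the coverage a saturating exponential, effectively $\pcov({\T},\lambda_t p_c(i),4)\approx\frac{1}{\beta({\T},4)}\big(1-e^{-\pi\beta({\T},4)\lambda_t p_c(i)/\sqrt{\mu\T\sigma^2}}\big)$, which has the correct no-noise limit $1/\beta({\T},4)$ and a derivative in $p_c(i)$ decaying exponentially at rate $\pi\beta({\T},4)\lambda_t/\sqrt{\mu\T\sigma^2}$. This exponential decay is exactly what produces the logarithmic, Benford-type law.

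Next I would attach a Lagrange multiplier $\nu$ to the normalization $\sum_{i=1}^M p_c(i)=1$ in (\ref{main-opt}) and write the stationarity condition $p_r(i)\,\frac{\partial}{\partial p_c(i)}\pcov({\T},\lambda_t p_c(i),4)=\nu$ for each $i$, holding $\gamma_1$ (hence $\lambda_t=\lambda\gamma_1$) fixed and pinning it down afterward through (\ref{optimalgammaeq}), so that the two optimizations separate. Because the coverage derivative is exponential in $p_c(i)$, taking logarithms linearizes this condition: $\log p_r(i)-\pi\beta({\T},4)\lambda_t p_c(i)/\sqrt{\mu\T\sigma^2}$ is independent of $i$. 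Solving for $p_c(i)$ gives $p_c(i)=C+\frac{\sqrt{\mu\T\sigma^2}}{\pi\lambda_t\beta({\T},4)}\log p_r(i)$ for a constant $C$. Substituting the Zipf form $p_r(i)\propto i^{-\gamma_r}$, so that $\log p_r(i)=-\gamma_r\log i+\text{const}$, collapses this to $p_c(i)=C'-b\log i$ with precisely $b=\sqrt{\mu\T\sigma^2}\,\gamma_r/(\pi\lambda_t\beta({\T},4))$.

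It then remains to fix the constant and recast the result. Imposing $\sum_{i=1}^M p_c(i)=1$ on $p_c(i)=C'-b\log i$ gives $MC'-b\log(M!)=1$, hence $C'=\frac{1}{M}+\frac{b}{M}\log(M!)$ and $p_c(i)=\frac{1}{M}+\frac{b}{M}\log(M!)-b\log i$. Using $-\log i=\log\frac{i+1}{i}-\log(i+1)$ and absorbing the constant together with the $-b\log(i+1)$ term into $a_i=\frac{1}{M}+\frac{b}{M}\sum_{j=1}^M\log\frac{j}{i+1}$ reproduces the stated form $p_c(i)=a_i+b\log\frac{i+1}{i}$; a short check that $\sum_i a_i=1-b\log(M+1)$ while $b\sum_i\log\frac{i+1}{i}=b\log(M+1)$ telescopes confirms normalization. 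Finally, since $p_c(i)$ is strictly decreasing in $i$, nonnegativity of the pmf binds only at $i=M$, and $p_c(M)=\frac{1}{M}-\frac{b}{M}\big(M\log M-\log(M!)\big)\ge0$ gives exactly the admissibility bound $b\le[M\log M-\log(M!)]^{-1}$.

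The hard part will be step one: choosing a $\Q$-function approximation that is at once tight and simple enough that $\log(\partial\pcov/\partial p_c(i))$ is affine in $p_c(i)$ with the precise slope $\pi\beta({\T},4)\lambda_t/\sqrt{\mu\T\sigma^2}$, since both the logarithmic (Benford) shape and the exact value of $b$ hinge on that slope. I would justify it by matching the small-argument behavior of $u\,e^{u^2/2}\Q(u)$ to the saturating-exponential surrogate and by the kind of numerical fit already invoked for Assumption \ref{assumptionbeta}. Two minor points remain: verifying that the stationary point is a maximum, which follows from concavity of $\pcov$ in its density argument, and emphasizing that, unlike the exact Zipf caching law of Lemma \ref{ArbitraryNoise}, this is an $\alpha=4$-specific approximation that happens to coincide with a bounded Benford distribution.
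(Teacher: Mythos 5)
Your proposal is correct and follows essentially the same route as the paper's proof: the closed-form $\alpha=4$ coverage $\propto x\,e^{x^2/2}\Q(x)$, the tight Karagiannidis-type $\Q$-function approximation yielding a saturating exponential in $p_c(i)$ (the paper keeps the $1/1.135$ prefactor and implicitly rounds the rate $1.4/\sqrt{2}\approx 1$, exactly the slope issue you flag), the Lagrangian stationarity condition that linearizes under the logarithm to give $p_c(i)-p_c(j)=-b\log(i/j)$, and the same normalization and $p_c(M)\ge 0$ argument for the admissibility bound $b\le[M\log M-\log(M!)]^{-1}$. Your telescoping normalization check and the concavity remark are small additions beyond what the paper writes out, but the substance is identical.
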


\begin{proof}
See Appendix \ref{App:AppendixD}.
\end{proof}

The distribution $p_c(\cdot)$ in (\ref{arbitNoiseApprox}) of Lemma \ref{ArbitraryNoiseApprox}  is a variety of Benford's law \cite{Pietronero2001}, which is a special bounded case of Zipf's law. Benford's law refers to the frequency distribution of digits in many real-life sources of data and is characterized by the pmf $F_B(i)=\log_{10}\left(\frac{i+1}{i}\right),\,\, i\in\{1, \hdots, 9\}$. In distributed caching problems, the number of files, $M$, is generally much greater than 9. Therefore, we generalize the law as $F_B(i)=\log_{M+1}\left(\frac{i+1}{i}\right)$, $i\in \{1, \hdots, M\}$. The result in (\ref{arbitNoiseApprox}) has a very similar form as the Benford law with shift parameter $a_i$ for file $i$ and a scaling parameter $b$, as determined in Lemma \ref{ArbitraryNoiseApprox}.

\section{A Lower and Upper Bound on the $\dsr$ and Different Caching Strategies}
\label{bounds}
The analysis of the $\dsr$ becomes intractable for the multiple file case when the caching pdf does not have a simple form. Therefore, we derive a lower and upper bound to characterize the $\dsr$ for the  sequential serving model and provide two different caching strategies to maximize $\DSRs$.  

\subsection{Bounds on $\DSRs$}
\label{boundsSequential} 
We provide a lower and upper bound for $\DSRs$, the $\dsr$ of the sequential serving-based transmission model with multiple files. We discussed the optimal file caching problem for multiple file scenarios in \cite{Malak2014}. Here, we compare our solution to the several bounds and other caching strategies.

\subsubsection{Upper Bound (UB)} Using the concavity of $\pcov({\T},\lambda_t p_c(i),\alpha)$ in $p_c(i)$, a UB is found as
\begin{eqnarray}
\sum\nolimits_{i=1}^{M}{p_r(i)\pcov({\T},\lambda_t p_c(i),\alpha)}\,\substack{(a)\\<}\,\pcov\Big({\T},\lambda_t \sum\nolimits_{i=1}^{M}{p_r(i)p_c(i)},\alpha\Big)
\,\substack{(b)\\ \leq}\, \pcov({\T},\lambda_t p_r(1),\alpha),
\end{eqnarray}
where ($a$) follows from Jensen's inequality, and ($b$) follows from the assumption $p_r(1)>p_r(i)$ for $1<i\leq M$ that yields $\sum\nolimits_{i=1}^{M}{p_r(i)p_c(i)}<p_r(1)\sum\nolimits_{i=1}^{M}{p_c(i)}=p_r(1)$, where $p_r(1)=\Big(\sum\nolimits_{j=1}^{M}{j^{-\gamma_r}}\Big)^{-1}$.

\subsubsection{Lower Bound (LB)} 
Using the fact that given $p_r(\cdot)$ is Zipf distributed, the optimal $p_c(\cdot)$ also has Zipf distribution as proven in Lemma \ref{ArbitraryNoise} as a solution of the $\DSRs$ maximization problem in (\ref{main-opt}). As a result, any distribution that is not skewed towards the most popular files will yield a suboptimal $\DSRs$. Hence a uniform caching distribution performs worse than the Zipf law, and a LB is found as
\begin{eqnarray}
\label{LB}
\sum\nolimits_{i=1}^{M}{p_r(i)\pcov({\T},\lambda_t p_c(i),\alpha)}>\sum\nolimits_{i=1}^{M}{p_r(i)\pcov\Big({\T},\frac{\lambda_t}{M},\alpha\Big)}=\pcov\Big({\T},\frac{\lambda_t}{M},\alpha\Big).
\end{eqnarray}

\subsection{Caching Strategies for the Sequential Serving Model with Multiple Files}
\label{opt-multi-file}
We propose two optimization formulations to maximize $\DSRs$ in the presence of multiple files, where the request and caching probabilities are known a priori because in general the optimal $\DSRs$ and the optimal caching distribution is not tractable. The first strategy, where we maximize the $\dsr$ for the least popular file, favors the least desired file, i.e., the file with the lowest popularity, to prevent from fading away in the network. Therefore, we introduce the variables $0\leq \rho_i \leq 1$ for files $i\in\{1,\cdots, M\}$ to weight the caching pmf $p_c(\cdot)$. The second strategy aims to maximize the $\dsr$ of all files by optimizing the fraction $\rho_i$'s of the users for each file type. We assume the caching distribution is given. Then, we provide iterative techniques to solve the problems presented in this section.

\subsubsection{Maximum $\dsr$ of the Least Desired File} 
\label{mul-file-opt}
Our motivation behind maximizing the $\dsr$ of the least desired file is to prevent the files with low popularity from fading away in the network. 

\begin{lem}
\label{mul-file-optlemma}
The caching probability of each file is weighted by $\rho_i<1$ so that the total fraction of transmissions for all files, denoted by $\xi$ satisfies $\xi=\sum\nolimits_{i=1}^{M}{\rho_i p_c(i)}\leq 1$. Given $\eta=\underset{i,\,\,\rho_i=1}{\max}\,\, {p_r(i)p_c(i)}=p_r(j)p_c(j)$ for some $j$, the optimal solution is given by $\rho_i=1_{\{i\geq j\}}+\frac{\eta}{p_r(i)p_c(i)} 1_{\{1\leq i < j\}}$.
\end{lem}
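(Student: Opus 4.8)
The plan is to treat this as a constrained resource-allocation problem and solve it by a Lagrangian / water-filling argument. First I would make the objective precise: the weights $\rho_i \in [0,1]$ reshape the effective transmitter density of file $i$ from $\lambda_t p_c(i)$ to $\lambda_t \rho_i p_c(i)$, and the total activity is constrained by $\xi = \sum_{i=1}^{M} \rho_i p_c(i) \le 1$. Writing each file's contribution to the reception rate as proportional to $\rho_i\,p_r(i)\,p_c(i)$, maximizing the rate of the least-desired file amounts to preventing the most popular files from monopolizing the transmission budget; concretely I would phrase the target as leveling the per-file service $\rho_i p_r(i) p_c(i)$ as much as the box constraints $\rho_i \le 1$ permit, subject to $\sum_i \rho_i p_c(i) \le 1$. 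Since this objective is concave in $\rho$ over a convex feasible set, the KKT conditions are necessary and sufficient.

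Next I would form the Lagrangian, attaching a multiplier $\nu$ to the budget constraint $\sum_i \rho_i p_c(i) \le 1$ and multipliers to the box constraints $0 \le \rho_i \le 1$, and read off stationarity. For every file whose upper bound is inactive ($\rho_i < 1$) stationarity forces the per-file service onto a common ``water level,'' giving $\rho_i\,p_r(i)\,p_c(i) = \eta$ for a constant $\eta$ fixed by $\nu$, i.e. $\rho_i = \eta/(p_r(i) p_c(i))$. For every file whose upper bound is active ($\rho_i = 1$), complementary slackness yields $p_r(i) p_c(i) \le \eta$. These two branches already reproduce the shape of the claimed formula.

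The remaining work is to show that the active/inactive split is a single contiguous threshold, which is where the structure of the demand and caching laws enters. Because both $p_r(\cdot)$ and $p_c(\cdot)$ are Zipf, the product $p_r(i) p_c(i)$ is strictly decreasing in $i$; hence $\{i : p_r(i) p_c(i) > \eta\}$ is an initial segment $\{1, \dots, j-1\}$ and $\{i : p_r(i) p_c(i) \le \eta\}$ is the tail $\{j, \dots, M\}$. The popular head is capped, $\rho_i = \eta/(p_r(i) p_c(i)) < 1$, while the unpopular tail saturates, $\rho_i = 1$, and by construction $\eta = \max_{i : \rho_i = 1} p_r(i) p_c(i) = p_r(j) p_c(j)$, which is exactly $\rho_i = 1_{\{i \ge j\}} + \tfrac{\eta}{p_r(i) p_c(i)} 1_{\{1 \le i < j\}}$.

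I expect the main obstacle to be closing the loop between $\eta$, the threshold index $j$, and the budget: $\eta$ must be chosen so that $\xi(\eta) = \sum_{i < j} \eta/p_r(i) + \sum_{i \ge j} p_c(i)$ meets $\xi \le 1$ with the correct slackness, and I would need to check that $\xi(\eta)$ is monotone in $\eta$ so that the threshold, and hence the solution, is unique. A secondary technical point is justifying that the per-file $\dsr$ is proportional to $\rho_i p_r(i) p_c(i)$ in the first place; this is where I would invoke the low-$\SNR$ regime, in which $\pcov(\T, \lambda_t \rho_i p_c(i), \alpha)$ scales linearly in the effective density, rather than carrying the full integral expression of (\ref{thinning}) through the KKT system.
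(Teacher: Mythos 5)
Your high-level template (equalize the per-file service, cap at the box constraints, get a contiguous threshold from the monotonicity of $p_r(i)p_c(i)$) is the same water-filling structure the paper uses, but the simplification you introduce at the outset creates a genuine gap. In the paper's formulation the quantity being max-minned is $\xi^{-1}p_r(i)\rho_i p_c(i)\pcov({\T},\lambda_t\xi,\alpha)$: the coverage factor is evaluated at the \emph{aggregate} activity $\xi=\sum_i\rho_i p_c(i)$, is common to all files, and $\pcov({\T},\lambda_t\xi,\alpha)/\xi$ is non-increasing in $\xi$ (by the same concavity of $\pcov$ in the density that the paper uses for its Jensen upper bound). This factor is not a technicality --- it is the only reason the popular files get capped. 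If you linearize it away, as you propose in your last paragraph, the problem collapses to: maximize $\min_i \rho_i p_r(i)p_c(i)$ over $0\le\rho_i\le1$ with $\sum_i\rho_i p_c(i)\le 1$. Since $\sum_i p_c(i)=1$, the all-ones vector is feasible and dominates every feasible point coordinate-wise, so the optimum of your simplified problem is $\rho_i=1$ for all $i$, not the threshold solution claimed in the lemma.

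Relatedly, your KKT mechanism for producing the water level cannot fire. Whenever some file is capped ($\rho_i<1$ for $i<j$), the budget is slack, $\xi<\sum_i p_c(i)=1$, so its multiplier $\nu$ is zero, and with $\nu=0$ stationarity places no equalization requirement on the uncapped coordinates; your plan to pin down $\eta$ by enforcing $\xi(\eta)=1$ therefore solves for the wrong quantity. In the paper both the equalization and the value of $\eta$ come from the coverage factor: holding the minimum fixed, lowering the over-served files' $\rho_i$ shrinks $\xi$ and cannot decrease $\pcov({\T},\lambda_t\xi,\alpha)/\xi$, so the optimum can be taken equalized; and under equalization the objective equals $\frac{\eta}{\xi}\pcov({\T},\lambda_t\xi,\alpha)$ with $\xi\propto\eta$, which is increasing in $\eta$ because $\pcov$ increases with transmitter density --- this is what selects $\eta=\underset{i,\,\rho_i=1}{\max}\, p_r(i)p_c(i)$ and hence the threshold $j$. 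If you keep the full objective with the $\xi$-dependent coverage term, the rest of your outline (the two KKT branches and the contiguous split from monotone $p_r(i)p_c(i)$) goes through essentially as in the paper.
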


\begin{proof}
See Appendix \ref{AppendixMINdsr}.
\end{proof}

\subsubsection{Maximum $\dsr$ of All Files}
\label{multifileopt2}
We maximize the $\dsr$ for all files without any prioritization.
\begin{lem}
\label{multifileopt2lemma}
The optimal solution to maximize the $\dsr$ for all files is given by
$\rho_i=1$ for all $i$.
\end{lem}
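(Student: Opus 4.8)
The plan is to exploit the additive separability of the weighted objective across files. Writing the weighted coverage so that the transmitter density carrying file $i$ is $\lambda\gamma_1\rho_i p_c(i)$, the quantity to be maximized is
\begin{equation*}
\DSRs=\lambda\gamma_2\sum\nolimits_{i=1}^{M}p_r(i)\,\pcov\big(\T,\lambda\gamma_1\rho_i p_c(i),\alpha\big),
\end{equation*}
over $0\le\rho_i\le1$ with the budget $\xi=\sum_{i=1}^{M}\rho_i p_c(i)\le1$ carried over from Lemma~\ref{mul-file-optlemma}. First I would note that, in contrast to the max--min objective of Sect.~\ref{mul-file-opt}, here the $i$th summand involves $\rho_i$ alone and the weights $p_r(i)$ are nonnegative, so the problem is a sum of decoupled scalar programs. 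The only coupling is through $\xi$, but since $\rho_i\le1$ already forces $\xi\le\sum_i p_c(i)=1$, the budget is automatically satisfied at every feasible point and imposes no active constraint.

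The crux will be to show that $\pcov(\T,\ell,\alpha)$ is increasing in the density $\ell$, from which each summand inherits monotonicity in $\rho_i$. I would establish this by the substitution $u=\pi\ell r$ in~(\ref{pcov}), which moves $\ell$ entirely into the noise term:
\begin{equation*}
\pcov(\T,\ell,\alpha)=\int_{0}^{\infty}e^{-u\beta(\T,\alpha)}\exp\!\Big(-\frac{\mu\T\sigma^2\,u^{\alpha/2}}{(\pi\ell)^{\alpha/2}}\Big)\,\mathrm{d}u.
\end{equation*}
For $\sigma^2>0$ the integrand is increasing in $\ell$ pointwise in $u$, so $\pcov$ is strictly increasing in $\ell$, with limit $\beta(\T,\alpha)^{-1}$ as $\ell\to\infty$ in agreement with the no-noise corollary. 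Hence $\pcov(\T,\lambda\gamma_1\rho_i p_c(i),\alpha)$ is increasing in $\rho_i$ on $[0,1]$.

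Each decoupled summand being increasing in $\rho_i$, its maximizer on $[0,1]$ is the right endpoint $\rho_i=1$; doing this for every $i$ yields the claimed optimum. Finally I would verify feasibility at the corner: $\xi=\sum_{i=1}^{M}p_c(i)=1\le1$, so $\rho_i=1$ for all $i$ is feasible and optimal. The only nontrivial ingredient is the density-monotonicity of $\pcov$; everything else follows from separability. I would also remark that when $\sigma^2=0$ the coverage is density-invariant ($\pcov=\beta(\T,\alpha)^{-1}$), so $\rho_i=1$ stays optimal though no longer uniquely---making the statement most meaningful in the noisy regime.
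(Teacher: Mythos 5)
Your argument proves the right conclusion for the wrong objective. You take the quantity to be maximized as the decoupled, sequential-model sum $\lambda\gamma_2\sum_i p_r(i)\,\pcov({\T},\lambda\gamma_1\rho_i p_c(i),\alpha)$, i.e., each file's coverage driven only by the density of transmitters carrying that particular file — a $\rho$-weighted version of (\ref{main-opt}). But the formulation this lemma actually refers to, eq. (\ref{optimization-sum-initial}) in Appendix \ref{AppendixMAXdsr}, inherits the weighting model of Lemma \ref{mul-file-optlemma}:
\begin{equation*}
\DSR_{\tot}^{*}=\max_{0\le\rho_i\le1}\ \xi^{-1}\pcov({\T},\lambda_t\xi,\alpha)\sum\nolimits_{i=1}^{M}p_r(i)\rho_ip_c(i),\qquad \xi=\sum\nolimits_{i=1}^{M}\rho_ip_c(i),
\end{equation*}
where all weighted transmitters are active simultaneously, coverage is evaluated at the aggregate density $\lambda_t\xi$, and $\xi^{-1}\rho_ip_c(i)$ plays the role of the probability that the serving transmitter holds file $i$. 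Here $\xi$ is not a budget constraint that conveniently stays inactive; it is an argument of the objective itself and couples all the $\rho_i$. Your claim that "the $i$th summand involves $\rho_i$ alone," and hence that the problem splits into decoupled scalar programs, fails for this objective.

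The coupling creates precisely the tension your proof never confronts: raising $\rho_j$ for an unpopular file increases $\xi$, which increases $\pcov({\T},\lambda_t\xi,\alpha)$ but decreases the prefactor $\xi^{-1}\pcov({\T},\lambda_t\xi,\alpha)$ (the paper notes this factor is decreasing in $\xi$) and dilutes the weighted average $\xi^{-1}\sum_i p_r(i)\rho_ip_c(i)$ towards less popular files — this is exactly the consideration that motivates the popularity-based strategy of Sect. \ref{PUF}, which deliberately sets the analogue of $\rho_j$ to zero for unpopular content. The substance of Lemma \ref{multifileopt2lemma} is that the density gain outweighs this dilution, and establishing that requires an argument about the coupled problem: the paper fixes $\xi$, bounds the inner box-constrained problem (\ref{box}), and shows the resulting bound $\max_\xi\,\pcov({\T},\lambda_t\xi,\alpha)\sum_i p_r(i)p_c(i)$ is attained at $\xi=1$, which forces $\rho_i=1$ for all $i$. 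Your density-monotonicity observation for $\pcov$ (which is correct, and is implicitly what makes the paper's outer maximization over $\xi$ peak at $\xi=1$) would be one ingredient of such a proof, but on its own it yields the corner solution only for the separable objective you substituted, where the result is immediate and the lemma would carry no content.
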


\begin{proof}
See Appendix \ref{AppendixMAXdsr}.
\end{proof}

As well as maximizing the $\dsr$ for the sequential model, one might wish to select a file with a particular request probability, and use $\DD$ to distribute this file and all files with higher probability or simultaneously cache all files using $\DD$ as detailed in Sect. \ref{mostgeneralcase}. In the next section, we describe the simultaneous transmission of multiple files, and derive expressions for $\SINR$ distribution and $\dsr$.

\section{Simultaneous Transmissions of Different Files with Arbitrary Noise}
\label{mostgeneralcase}
We consider the multiple file case, where a typical receiver requires a specific set of files, and the set of its transmitter candidates are the ones that contain any of the requested files. Each receiver gets the file from the closest transmitter candidate. The rest of the active transmitters that do not have the files requested are the interferers. We provide a detailed analysis for the $\SINR$ coverage next.

Assume that each receiver has a state, determined by the set of files it requests. For a receiver in state $j$, the set of requested files is $f_r(j)$. Let the tagged receiver be $y\in\Phi_r$ and in state $j$, and $\Phi_{t}(j)$ be the set of transmitters that a receiver in state $j$ can get data from. Hence, the set of transmitter candidates for user in state $j$ is the superposition given by $\Phi_{t}(j)=\sum_{i\in f_r(j)}\Phi_{t,i}$, where $\Phi_{t,i}$ is the set of transmitters containing file $i$. Let $\lambda_j$ be the density of $\Phi_{t}(j)$, where $\lambda_j=\lambda_tp_j=\lambda\gamma_1p_j$. The rest of the transmitters, i.e., $\sum_{i\notin f_r(j)}\Phi_{t,i}$, is an independent process with density $\lambda_t-\lambda_j=\lambda_t(1-p_j)=\lambda\gamma_1(1-p_j)$.

The sum $p_j=\sum\nolimits_{i\in f_r(j)}p_c(i)$ gives the probability that the user has at least one of the files requested by any receiver in state $j$. Hence, the density of the transmitter candidates $\lambda_j$ for a receiver in state $j$ are given by the product of $\lambda\gamma_1$ and $\sum\nolimits_{i\in f_r(j)}p_c(i)$, i.e., $\lambda_j=\lambda_tp_j=\lambda\gamma_1\sum\nolimits_{i\in f_r(j)}p_c(i)$. Hence, using the nearest neighbor distribution of the typical receiver in state $j$, the distance to its nearest transmitter is distributed as ${\Rayleigh}(\sigma_j)\sim \frac{r}{\sigma_j^2}\exp\Big({-\frac{r^2}{2\sigma_j^2}}\Big)$, for $\sigma_j=1/\sqrt{2\pi\lambda_j}$ and $r\geq 0$. 

We assume that all users experience Rayleigh fading with mean $1$, and constant transmit power of $1/\mu$. Assuming user $y$ is at o, in state $j$ and is a receiver, and x is the tagged transmitter denoted by $b_o$, and the distance between them is $r$, then the $\SINR$ at user $y$ is ${\SINR}_j = \frac{hr^{-\alpha}}{\sigma^2+I_{r(j)}}$, where $h$ is the channel gain parameter between $x$ and $y$, $\sigma^2$ is the white Gaussian noise, and $I_{r(j)}$ is the total interference at node $y$ in state $j$, and given by the following expression: $I_{r(j)}=\sum\nolimits_{z\in \Phi_{t}\backslash b_o}{g_{z}r_z^{-\alpha}}=\sum\nolimits_{z\in \Phi_{t}(j)\backslash b_o}{g_zr_z^{-\alpha}}+\sum\nolimits_{z\in \Phi_{t}\backslash \Phi_t(j)}{g_zr_z^{-\alpha}}$, where $g_{z}$ is the channel gain from the interferer $z$ and the receiver $y$, $r_z$ is the interferer $z$ to receiver distance, on $\RHS$, the first term is the interference due to the set of transmitters that has the files requested by the receiver, and the second term is the interference due to the rest of the transmitters that do not have any of the desired files by the receiver. The total interference depends on the transmission scheme. Compared to the nearest user association \cite{Andrews2011}, it is hard to characterize the interference in dynamic caching models with different association techniques.

\begin{theo}
\label{mainProbCov}
The probability of coverage of a typical user conditioned on being at state $j$ is given by\footnote{The definition of $\PCOV({\T},\lambda_j,\alpha)$ here is different from the definition of the classical downlink coverage probability $\pcov({\T},\lambda,\alpha)$ given in (\ref{pcov}) due to the possibility of simultaneous transmissions of different file types.}
\begin{eqnarray}
\label{maincoverageprobability}
{\PCOV({\T},\lambda_j,\alpha)}=\pi\lambda_j \int\nolimits_{0}^{\infty}{ e^{-\pi\lambda_jv(1-\rho_2({\T},\alpha))-\pi\lambda_t v(\rho_1({\T},\alpha)+\rho_2({\T},\alpha))-{\T}\sigma^2v^{\alpha/2}}\, \mathrm{d}v},
\end{eqnarray}
where $\rho_1(\T,\alpha)={\T}^{2/\alpha} \int\nolimits_{{\T}^{-2/\alpha}}^{\infty}{ \frac{1}{1+u^{\alpha/2}}\, \mathrm{d}u}$ and $\rho_2(\T,\alpha)={\T}^{2/\alpha} \int\nolimits_{0}^{{\T}^{-2/\alpha}}{ \frac{1}{1+u^{\alpha/2}}\, \mathrm{d}u}$.
\end{theo}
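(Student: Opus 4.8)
The plan is to condition on the distance $r$ from the tagged receiver (in state $j$) to its serving transmitter $b_o$, the nearest point of $\Phi_{t}(j)$, and then average the conditional success probability over $r$. Since $\Phi_{t}(j)$ is a homogeneous $\PPP$ of density $\lambda_j$, this distance carries the nearest-neighbor density $2\pi\lambda_j r\, e^{-\pi\lambda_j r^2}$, i.e. the $\Rayleigh(\sigma_j)$ law quoted above. Writing ${\SINR}_j = h r^{-\alpha}/(\sigma^2 + I_{r(j)})$ and using that the desired-link fading power $h$ is exponential with mean $1$, the event $\{{\SINR}_j > \T\}$ becomes $\{h > \T r^\alpha(\sigma^2 + I_{r(j)})\}$, whose conditional probability is $\exp(-\T r^\alpha\sigma^2)\,\mathbb{E}[\exp(-\T r^\alpha I_{r(j)})]$. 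This reduces the whole problem to evaluating the Laplace transform of the interference $I_{r(j)}$ at $s=\T r^\alpha$ and integrating it against the distance density.

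Next I would exploit the structure of $I_{r(j)}$: it splits into $I_1$, the contribution of $\Phi_{t}(j)\setminus b_o$ (candidate transmitters carrying a requested file, density $\lambda_j$), and $I_2$, the contribution of $\Phi_{t}\setminus\Phi_{t}(j)$ (density $\lambda_t-\lambda_j$). Because each transmitter independently draws its cached file from $p_c(\cdot)$, independent thinning/coloring makes these two fields independent $\PPP$s, so the Laplace transform factorizes as $\mathcal{L}_{I_{r(j)}}(s)=\mathcal{L}_{I_1}(s)\,\mathcal{L}_{I_2}(s)$. The crucial asymmetry lies in the integration domains: since $b_o$ is the nearest point of $\Phi_{t}(j)$, every interferer of $I_1$ lies at distance $>r$, whereas the out-of-band interferers of $I_2$ are under no association constraint and may lie at any distance $>0$.

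Each factor I would compute from the $\PPP$ probability generating functional. With interferer fading $g_z\sim\exp(1)$ the per-point term is $1-\mathbb{E}_g[e^{-s g x^{-\alpha}}]=s/(x^\alpha+s)$, so $\mathcal{L}_{I_1}(s)=\exp(-2\pi\lambda_j\int_r^\infty \frac{s}{x^\alpha+s}\,x\,\mathrm{d}x)$ and $\mathcal{L}_{I_2}(s)=\exp(-2\pi(\lambda_t-\lambda_j)\int_0^\infty \frac{s}{x^\alpha+s}\,x\,\mathrm{d}x)$. Setting $s=\T r^\alpha$ and substituting $w=(x/r)^2$ followed by a scaling $w\mapsto \T^{2/\alpha}w$, the tail integral over $x>r$ reproduces exactly $\tfrac12 r^2\rho_1(\T,\alpha)$ while the complementary integral over $x<r$ gives $\tfrac12 r^2\rho_2(\T,\alpha)$. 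Hence $\mathcal{L}_{I_1}(\T r^\alpha)=e^{-\pi\lambda_j r^2\rho_1(\T,\alpha)}$ and $\mathcal{L}_{I_2}(\T r^\alpha)=e^{-\pi(\lambda_t-\lambda_j)r^2(\rho_1(\T,\alpha)+\rho_2(\T,\alpha))}$, the combination $\rho_1+\rho_2$ reflecting that the out-of-band field occupies the whole plane.

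Finally I would multiply the distance density, the noise factor and the two Laplace factors, collect the coefficient of $\pi r^2$ in the exponent as $\lambda_j(1-\rho_2)+\lambda_t(\rho_1+\rho_2)$ (the $-\lambda_j\rho_2$ arising because the candidate field is absent from the serving disk of radius $r$), and change variables $v=r^2$ to reach (\ref{maincoverageprobability}). The main obstacle is the interference-domain bookkeeping: one must track that \emph{only} the candidate process is excluded from the disk of radius $r$, verify that the split of $[0,\infty)$ into $[0,r)$ and $[r,\infty)$ matches precisely the definitions of $\rho_2$ and $\rho_1$ after the normalizing substitution, and justify the independence of the two sub-processes. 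This is exactly what produces the asymmetric $\lambda_j(1-\rho_2)$ versus $\lambda_t(\rho_1+\rho_2)$ structure that distinguishes $\PCOV$ from the single-association formula $\pcov$ in (\ref{pcov}).
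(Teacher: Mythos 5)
Your proposal is correct and follows essentially the same route as the paper's proof: conditioning on the nearest candidate-transmitter distance, splitting the interference into the candidate process (excluded from the disk of radius $r$) and the non-candidate process (over the whole plane), factorizing the Laplace transform via the $\PPP$ generating functional with exponential fading, and recovering $\rho_1$ and $\rho_2$ through the same normalizing substitutions before setting $v=r^2$. Your bookkeeping of the exponent coefficient $\lambda_j(1-\rho_2)+\lambda_t(\rho_1+\rho_2)$ matches the paper's computation exactly.
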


\begin{proof}
See Appendix \ref{App:AppendixA0}.
\qedhere
\end{proof}

We now consider the special case of the path loss exponent $\alpha=4$, which is more tractable. 

\begin{cor} 
\label{ProbCov0}
Letting $H({\T},\lambda_t,p_j)=\Big(\frac{p_j}{\sqrt{\T}}-p_j\tan^{-1}\left(\frac{1}{\sqrt{\T}}\right)+\frac{\pi}{2}\Big)\frac{\pi \lambda_t}{\sqrt{2\sigma^2}}$, the probability of coverage of a typical user conditioned on being at state $j$ for the special case of $\alpha=4$ and $\mu=1$ is given by
\begin{eqnarray}
\label{coverageprobability}
{\PCOV({\T},\lambda_j,4)}=\pi\lambda_tp_j\sqrt{\frac{\pi}{\T\sigma^2}}e^{\frac{H({\T},\lambda_t, p_j)^2}{2}}{\Q} \big(H({\T},\lambda_t,p_j)\big).
\end{eqnarray}
\end{cor}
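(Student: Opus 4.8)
The plan is to specialize the general coverage integral of Theorem \ref{mainProbCov} to $\alpha=4$, where the term $v^{\alpha/2}$ becomes $v^2$, so that the integral collapses to a shifted Gaussian integral admitting a closed form in terms of the $\Q$-function. Setting $\mu=1$ is consistent with the mean-one Rayleigh assumption used to derive (\ref{maincoverageprobability}).

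First I would evaluate $\rho_1$ and $\rho_2$ at $\alpha=4$. Since $2/\alpha=1/2$ and $u^{\alpha/2}=u^2$, the antiderivative of $1/(1+u^2)$ is $\arctan$, giving $\rho_2(\T,4)=\sqrt{\T}\arctan(1/\sqrt{\T})$ and $\rho_1(\T,4)=\sqrt{\T}\big(\tfrac{\pi}{2}-\arctan(1/\sqrt{\T})\big)$; in particular $\rho_1+\rho_2=\tfrac{\pi}{2}\sqrt{\T}$. Substituting $\lambda_j=\lambda_t p_j$, the exponent in (\ref{maincoverageprobability}) becomes $-Av-Bv^2$ with $B=\T\sigma^2$ and $A=\pi\lambda_t\big[p_j(1-\rho_2)+\rho_1+\rho_2\big]$.

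Next I would apply the standard identity $\int_0^\infty e^{-Av-Bv^2}\,\mathrm{d}v=\tfrac12\sqrt{\pi/B}\,e^{A^2/(4B)}\,\mathrm{erfc}\big(A/(2\sqrt{B})\big)$ (completing the square and changing variables), and convert to the $\Q$-function via $\mathrm{erfc}(x)=2\Q(\sqrt{2}\,x)$, which yields $\int_0^\infty e^{-Av-Bv^2}\,\mathrm{d}v=\sqrt{\pi/B}\,e^{A^2/(4B)}\,\Q\big(A/\sqrt{2B}\big)$. Multiplying by the prefactor $\pi\lambda_j=\pi\lambda_t p_j$ immediately reproduces the leading factor $\pi\lambda_t p_j\sqrt{\pi/(\T\sigma^2)}$ in (\ref{coverageprobability}).

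The remaining step, and the only place any real care is needed, is to verify that the $\Q$-argument and the Gaussian exponent collapse to $H$ and $H^2/2$. Using $\rho_2/\sqrt{\T}=\arctan(1/\sqrt{\T})$ and $(\rho_1+\rho_2)/\sqrt{\T}=\pi/2$, I would factor $A=\pi\lambda_t\sqrt{\T}\big(\tfrac{p_j}{\sqrt{\T}}-p_j\arctan(1/\sqrt{\T})+\tfrac{\pi}{2}\big)$, whence $A/\sqrt{2B}=A/\sqrt{2\T\sigma^2}=H(\T,\lambda_t,p_j)$ exactly, and consequently $A^2/(4B)=H^2/2$. Substituting these back gives (\ref{coverageprobability}). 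I expect no genuine obstacle beyond the algebraic bookkeeping of collecting the coefficient $A$ linear in $v$ and tracking the $\mathrm{erfc}$-to-$\Q$ conversion factor $\sqrt{2}$ correctly.
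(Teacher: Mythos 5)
Your proposal is correct and follows essentially the same route as the paper: the paper also reduces the coverage at $\alpha=4$ to an integral of the form $\int_0^\infty e^{-ar^2-b r^4}\,2\pi\lambda_j r\,\mathrm{d}r$, changes variables $u=r^2$, and invokes the identity $\int_0^\infty e^{-au-bu^2}\,\mathrm{d}u=\sqrt{\pi/b}\,e^{a^2/(4b)}\Q\bigl(a/\sqrt{2b}\bigr)$, with the $\arctan$ terms arising from evaluating the interference Laplace transforms exactly as your closed forms for $\rho_1(\T,4)$ and $\rho_2(\T,4)$ do. The only cosmetic difference is your entry point (the final statement of Theorem \ref{mainProbCov} rather than the intermediate Laplace-transform expression) and your detour through $\mathrm{erfc}$ before converting to $\Q$; the algebra collecting the linear coefficient into $H$ is identical.
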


\begin{proof}
See Appendix \ref{App:AppendixB0}.
\qedhere
\end{proof}
  
Since the term $\sqrt{\T}\tan^{-1}\big(\frac{1}{\sqrt{\T}}\big)$ is increasing in $\T$ and converges to $1$ in the limit as $\T$ goes to infinity, $H({\T},\lambda_t,\cdot)$ is increasing in $p_j$, and positive. Furthermore, $\PCOV({\T},\lambda_j,\alpha)$ is monotonically increasing in $p_j$. This observation is essential in the characterization of the $\dsr$ under different user criteria.

We consider two different strategies for the simultaneous transmission of multiple files, namely popularity-based and global models, which differ mainly in the set of files cached at the transmitters.

\subsection{Popularity-based $\dsr$}
\label{PUF}
In this approach, a set of files corresponding to the most popular ones in the network is cached simultaneously at all transmitters. We define $\DSRp$, which stands for the $\dsr$ of the popularity-based approach, and is calculated over the set of most popular files as 
\begin{eqnarray}
\label{UP}
{\DSRp}=\lambda\gamma_2 \sum\nolimits_{k\in\mathcal{K}}p_r(k)\PCOV({\T},\xi_l,\alpha),
\end{eqnarray}
where $\mathcal{K}$ is the set of the $K$ most popular files, and $\xi_l=\lambda\gamma_1 \sum\nolimits_{i\in\mathcal{L}}p_c(i)$, where $\mathcal{L}$ is a set corresponding to the most popular $K$ files cached at the transmitters among the set of available files in the caches.

Consider the special case of (\ref{UP}), where only the most popular file in the network is cached at all the transmitters if available, i.e., $|\mathcal{K}|=1$, which modifies (\ref{UP}) as ${\DSRp}=\lambda\gamma_2 p_r(k)\PCOV({\T},\lambda\gamma_1 p_c(k),\alpha)\stackrel{(a)}{=}\lambda\gamma_2 p_r(k)\pcov({\T},\lambda\gamma_1 p_c(k),\alpha)$, where $(a)$ follows from the fact that for $|\mathcal{K}|=1$, the coverage probability becomes same as the sequential serving-based model in Sect. \ref{MultipleFiles}, and the most popular file index $k$ can be found from the demand distribution and is given by $k=\underset{i\in \{1,\hdots,M\}}{\argmax}\hspace{0.2cm} p_r(i)$, and hence the corresponding density of the transmitters is $\lambda\gamma_1 p_c(k)$, where $p_r(k)\geq p_r(l)$ for all $l=1,\hdots, M$.

\subsection{Global $\dsr$}
\label{GUF}
Global $\dsr$ is defined as the average performance of all users in the network, which is determined by the spatial characteristics of file distributions and the coverage of a typical user. The $\dsr$ function in our model is state dependent since the coverage probability of a user is determined according to the files requested by the user. The expected global $\dsr$ is given as follows:
\begin{eqnarray}
\label{expectedutility}
{\DSRg}=\lambda\gamma_2\sum\nolimits_{i=1}^M{p_r(i)\PCOV({\T},\gamma_1\lambda p_c(i),\alpha)}.
\end{eqnarray}

{\bf A Discussion on the Various Transmission Models.}
Popularity-based transmission and global model in this section do not depend on the cache states. Instead, they both depend on the global file popularity distributions, and have similar characteristics as given in (\ref{UP}) and (\ref{expectedutility}). It is intuitive to observe that the optimal caching distributions in both models follow similar trends as the request distribution. Sequential serving-based model in Sect. \ref{SUF} boils down to the scenario characterized in \cite{Andrews2011} where only a subset of transmitters and their candidate receivers are active simultaneously. Hence, this model mitigates interference and provides higher coverage than the other models. However, since the $\dsr$ is a weighted function of the file transmit pmf $p_c(\cdot)$, the $\dsr$ of the model is reduced.  

\begin{figure}[t!]
\centering
\begin{minipage}{.49\textwidth}
  \centering
  \includegraphics[width=0.8\linewidth]{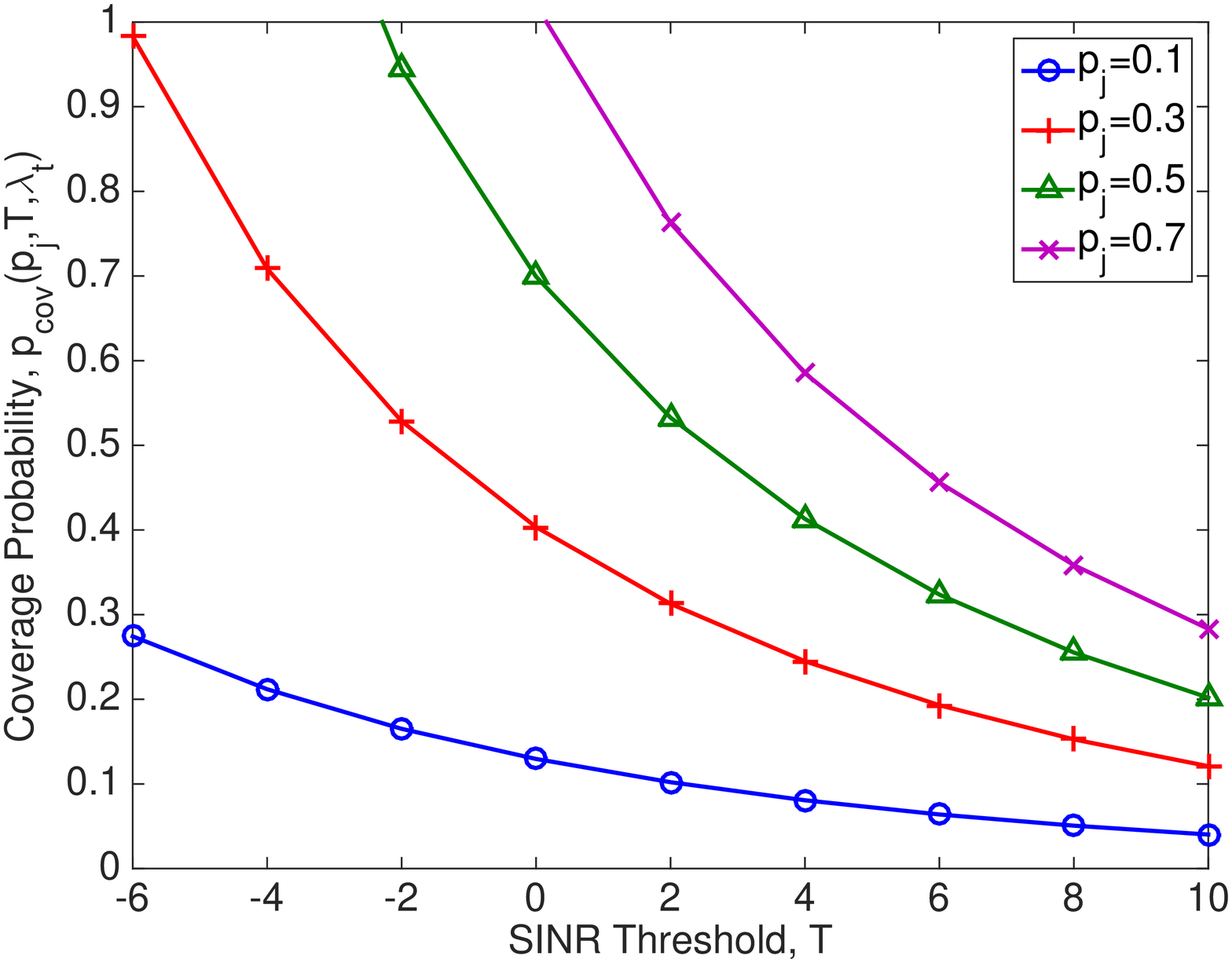}
  \caption{\small{Analytical model for the $\SINR$ coverage probability for different transmitter densities where $\lambda=1$ and $\gamma_1=0.4$.}}
  \label{SINRcoverageprob}
\end{minipage}
\hfill
\begin{minipage}{.49\textwidth}
  \centering
  \includegraphics[width=0.8\linewidth]{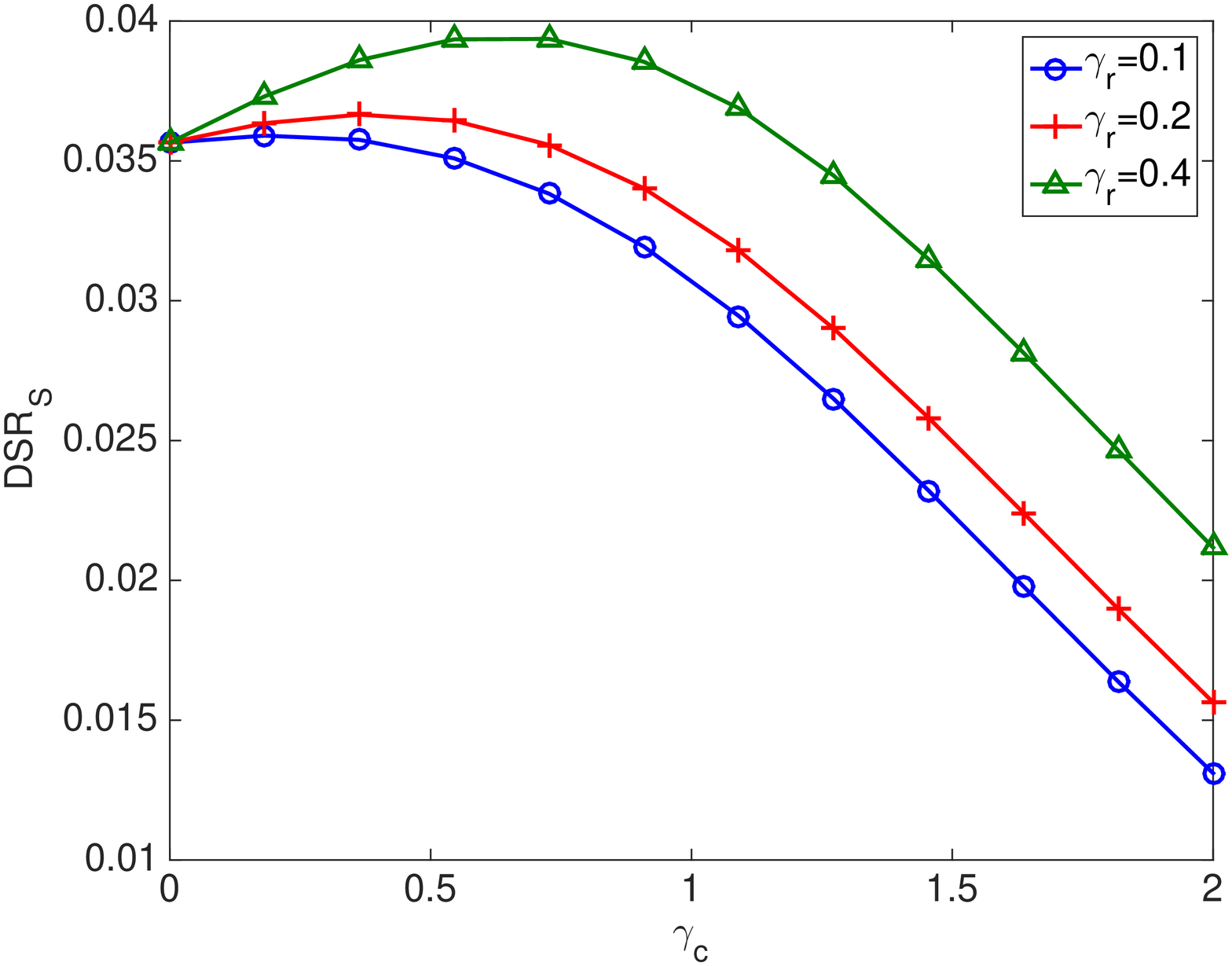}
  \caption{\small{Average $\dsr$ for the sequential model, $\DSRs$ versus $\gamma_c$ for Zipf request and Zipf caching distributions.}}
  \label{sequtility}
\end{minipage}
\end{figure}

\begin{figure}[t!]
\centering
\begin{minipage}{.49\textwidth}
  \centering
  \includegraphics[width=0.8\linewidth]{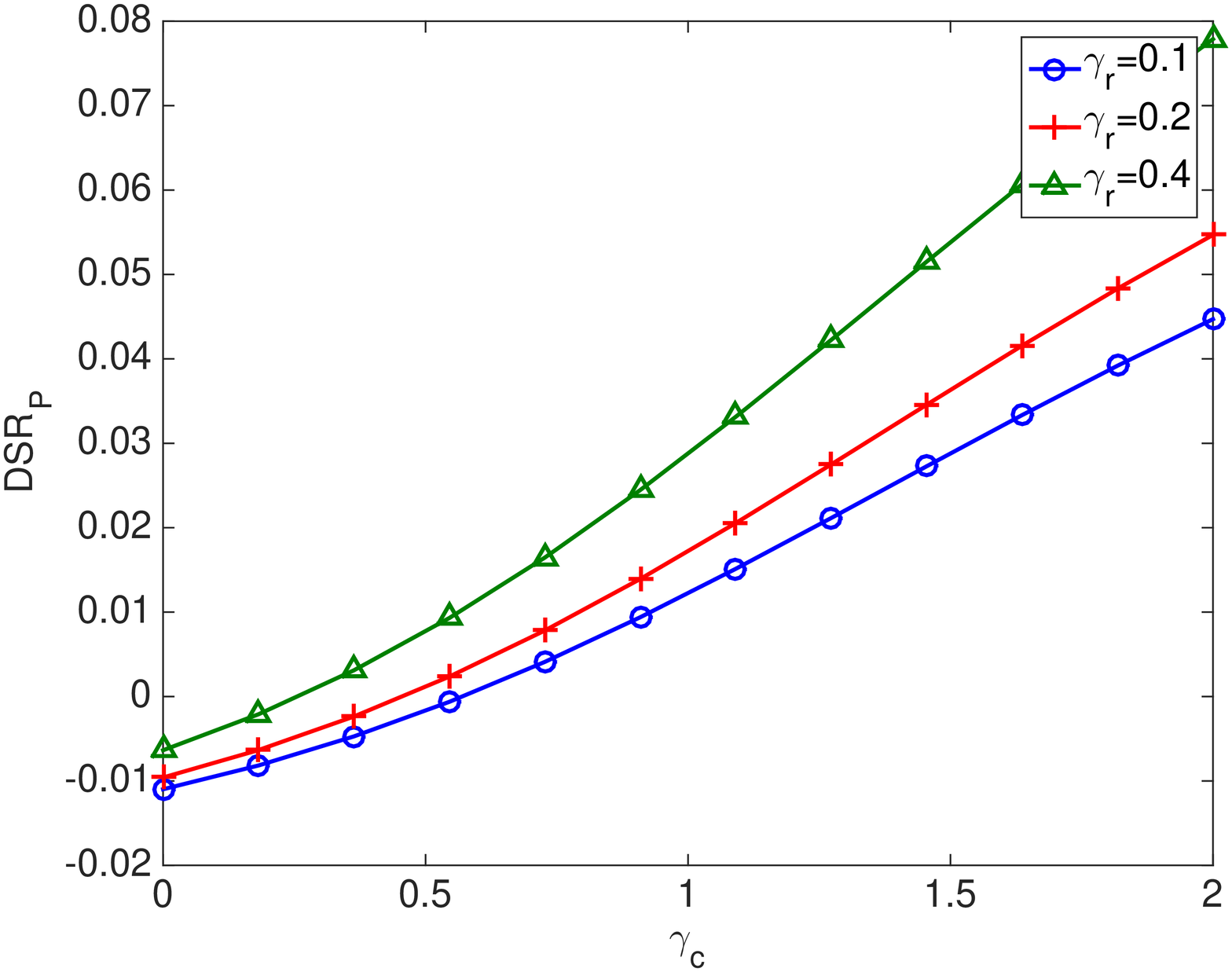}
  \caption{\small{Average $\dsr$ for the popularity-based model, $\DSRp$ versus $\gamma_c$ for Zipf request and Zipf caching distributions.}}
  \label{poputility}
\end{minipage}
\hfill
\begin{minipage}{.49\textwidth}
  \centering
  \includegraphics[width=0.8\linewidth]{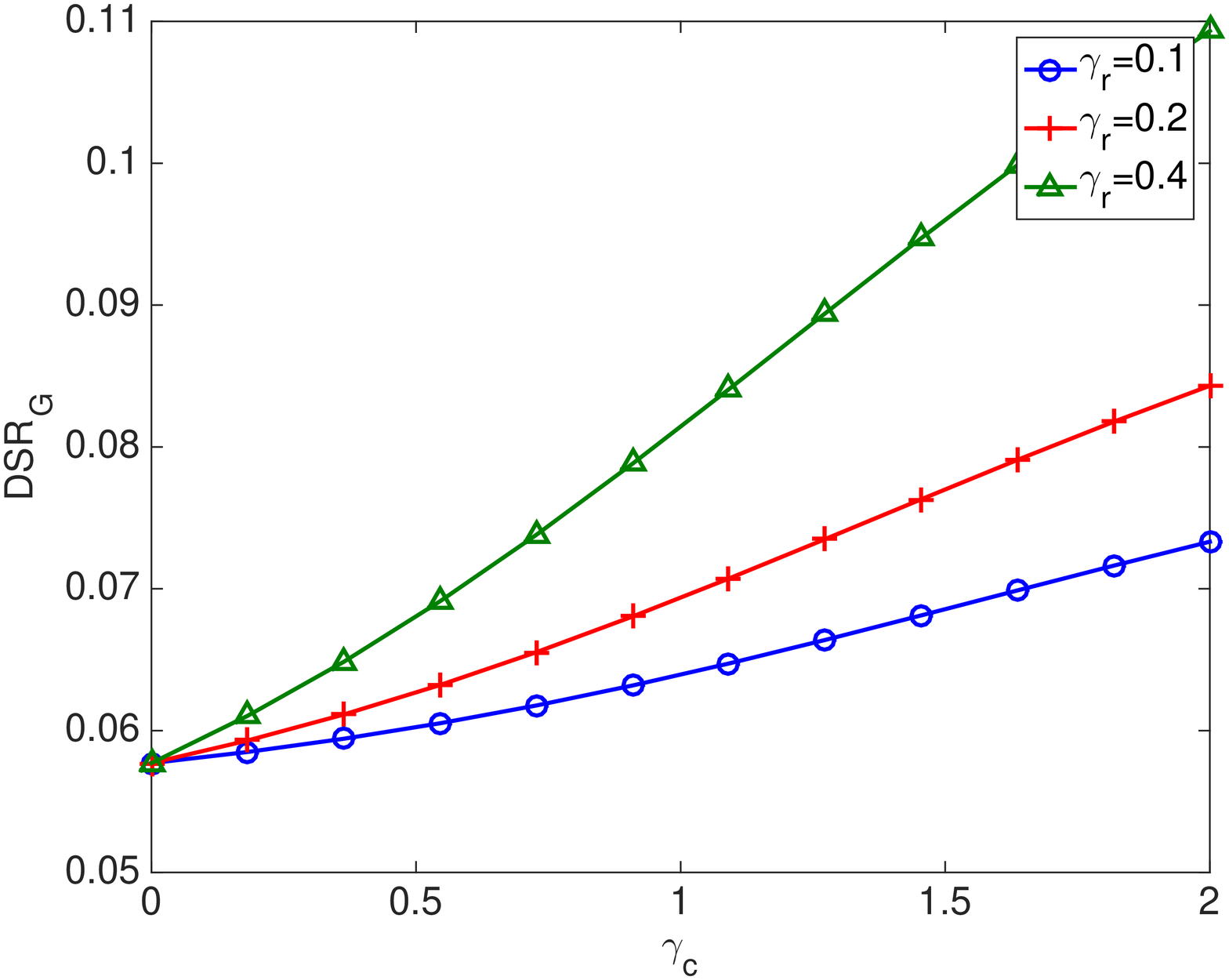}
  \caption{\small{Average $\dsr$ for the global model, $\DSRg$ versus $\gamma_c$ for Zipf request and Zipf caching distributions.}}
  \label{globalutility}
\end{minipage}
\end{figure}

Now, we present some numerical results on the general transmission models discussed and present results related to the popularity-based $\dsr$, global $\dsr$ and sequential $\dsr$.

{\bf State dependent coverage probability.} We illustrate the $\SINR$ coverage probability for varying $p_j$ for a fixed fraction of transmitters ($\gamma_1=0.4$) in Fig. \ref{SINRcoverageprob}. The coverage probability is state dependent\footnote{The receiver's state refers to the collection of files it requests.} and for the receiver in state $j$, the density of transmitters is given by $\lambda_j=\lambda p_j$ where $p_j=\gamma_1\sum\limits_{i\in f_r(j)}p_c(i)$. If the requested files are available in the set of transmitters, then the receiver has higher coverage. Therefore, for higher fraction of transmitters $\gamma_1$, the coverage probability is higher.

{\bf Caching performance of the proposed transmission models.} The optimal caching strategies that maximize the caching problems of Sect. \ref{mostgeneralcase} given in (\ref{UP}) and (\ref{expectedutility}) are not necessarily Zipf distributed. However, without the Zipf distribution assumption, the optimization formulations become intractable since $\pcov({\T},\lambda_j,\alpha)$ in (\ref{maincoverageprobability}) is nonlinear in the density of the users. Therefore, for simulation purposes, we find the optimal Zipf caching exponents that maximize the proposed functions.

{\bf $\dsr$ comparison.} We investigate the variation of the sequential model $\DSRs$ with respect to the caching parameter $\gamma_c$. From Fig. \ref{sequtility}, we observe that $\gamma_c$ increases with the request distribution parameter $\gamma_r$, assuming both distributions are Zipf. In Figs. \ref{poputility} and \ref{globalutility}, we illustrate the variation of the popularity-based model $\DSRp$ and the global model $\DSRg$ with $\gamma_c$. In both figures, it is clearly seen that as the requests become more skewed (higher $\gamma_r$), the $\dsr$ increases. It also increases with $\gamma_c$, which implies that the optimal caching distribution should also be skewed towards the highly popular files.

\section{Numerical Results and Discussion}
\label{Performance}
We evaluate the optimal caching distributions that maximize the $\dsr$. The simulation results are based on Sects. \ref{MultipleFiles} and \ref{bounds}. We consider a general $\PPP$ network model with Rayleigh fading distribution with $\mu=1$ and $\alpha=4$ for small and general noise solutions. The requests are modeled by $\textrm{Zipf}(\gamma_r)$. 

\begin{figure}[t!]
\centering
\begin{minipage}{.49\textwidth}
\centering
\includegraphics[width=0.8\linewidth]{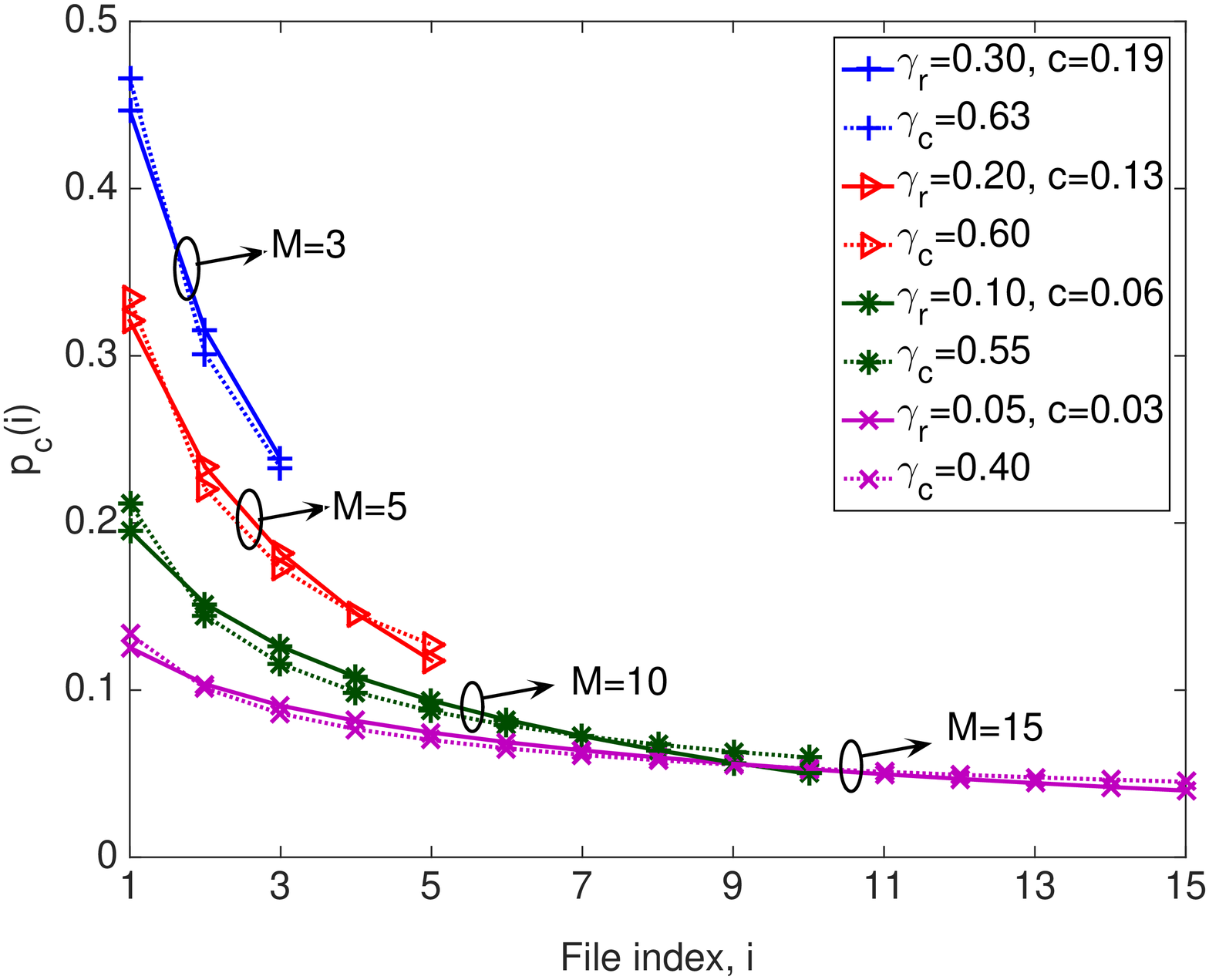}
\caption{\small{For a $\textrm{Zipf}(\gamma_r)$ popularity distribution, Benford law and approximate $\textrm{Zipf}(\gamma_c)$ caching pmf for various $M$.}} 
\label{benfordfig}
\end{minipage}
\hfill
\begin{minipage}{.49\textwidth}
\centering
\includegraphics[width=0.8\linewidth]{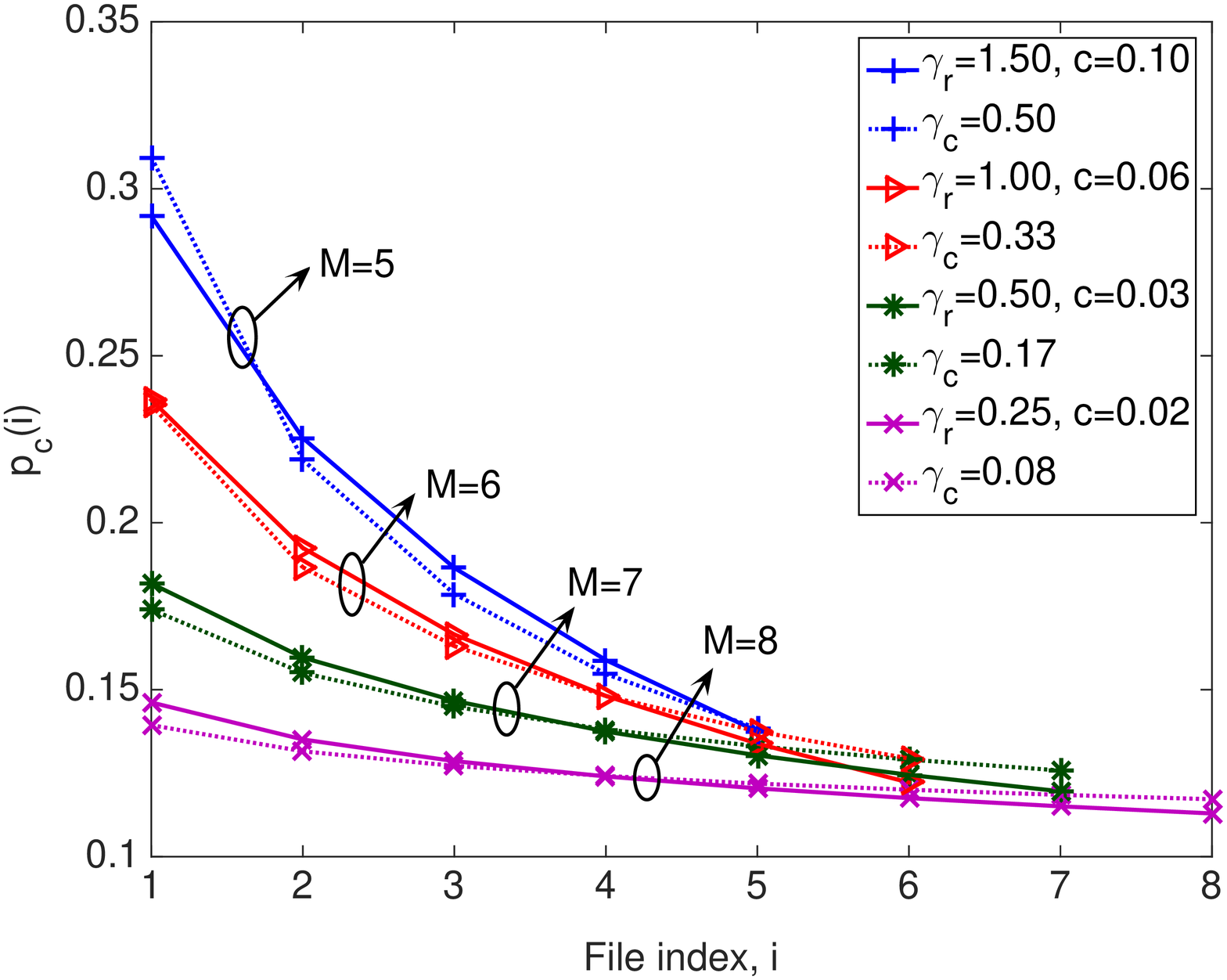}
\caption{\small{For a $\textrm{Zipf}(\gamma_r)$ popularity distribution, Benford law and optimal $\textrm{Zipf}(\gamma_c)$ pmf for different $M$ ($\SNR=30$ dB).}}
\label{benfordZipf} 
\end{minipage}
\end{figure}

{\bf Benford versus Zipf distributions.}
In Figs. \ref{benfordfig} and \ref{benfordZipf}, we illustrate the trend of optimal Zipf caching distribution and the Benford law developed in Sect. \ref{MultipleFiles} for different numbers of total files. As seen from Fig. \ref{benfordfig}, these two distributions have similar characteristics. However, as $\gamma_r$ increases, the range of $M$ for which Benford caching distribution in (\ref{arbitNoiseApprox}) and Zipf laws are comparable becomes narrower. For $\gamma_r>0.3$, it is not practical to approximate the Benford law with a Zipf distribution. In fact, as described in Sect. \ref{MultipleFiles}, as the noise level decreases, i.e., $b={\sqrt{\mu \T \sigma^2}\gamma_r}/{(\pi \lambda_t\beta({\T},4))}$ drops, the optimal caching strategy converges to Zipf distribution. As seen in Fig. \ref{benfordZipf}, for small noise, i.e., for high $\SNR$, these laws behave similarly for relatively high $\gamma_r$ values compared to the general noise case. 

We now compare the $\dsr$ of the sequential serving model for various $\gamma_r$ based on the optimal solutions that are also Zipf distributed, as derived in Sect. \ref{MultipleFiles}, and the lower and upper bounds obtained in Sect. \ref{bounds}. The numerical solutions are obtained by calculating the $\dsr$ of various (random) caching distributions and picking the best one that achieves the highest $\dsr$.

{\bf Zipf caching with $\gamma_c=\frac{\gamma_r}{(\alpha/2+1)}$ is a good approximation to maximize the $\dsr$.} In Fig. \ref{05SNR1new}, we compare the performances of different caching strategies for a Zipf request distribution with parameter $\gamma_r=0.5$ and $\SNR=1$. The Zipf caching distribution with parameter $\gamma_r/3$ is very close to the optimal solution evaluated numerically that is also very close to the simple lower bound derived in (\ref{LB}). Furthermore, Benford distribution has very similar characteristics as the optimal caching distribution solution. There is a huge gap between the UB and the no noise in terms of the $\dsr$, and the $\dsr$ for the no noise case is the highest among all for all $\SNR$ or $\T$ values.

\begin{figure}[t!]
\centering
\begin{minipage}{.49\textwidth}
\centering
\includegraphics[width=0.8\linewidth]{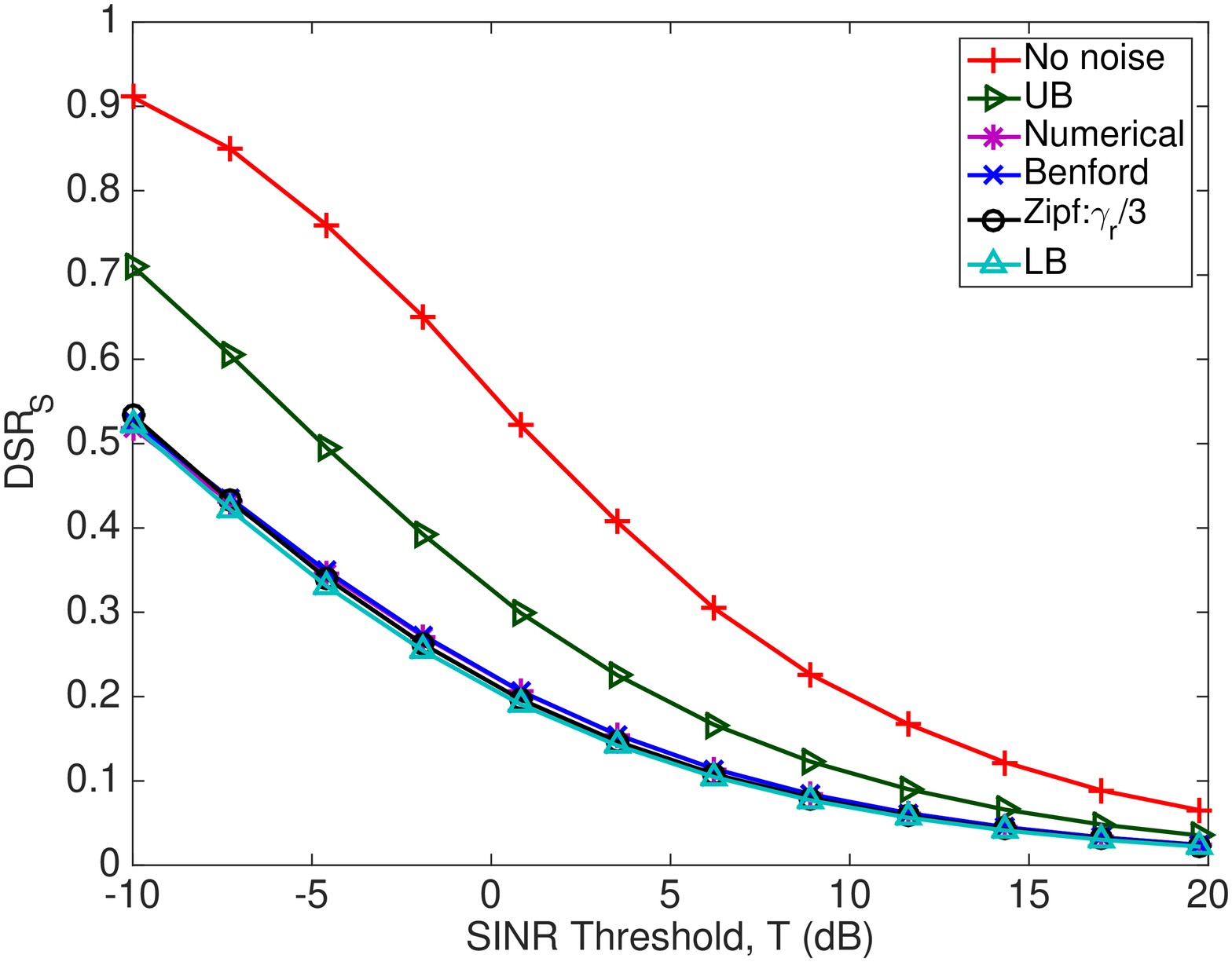}
\caption{\small{Bounds and approximations to the optimal $\DSRs$ for $M=10$, $\SNR=1, \lambda=1$, Zipf request pmf with $\gamma_r=0.5$.}}
\label{05SNR1new}
\end{minipage}
\hfill
\begin{minipage}{.49\textwidth}
\centering
\includegraphics[width=0.8\linewidth]{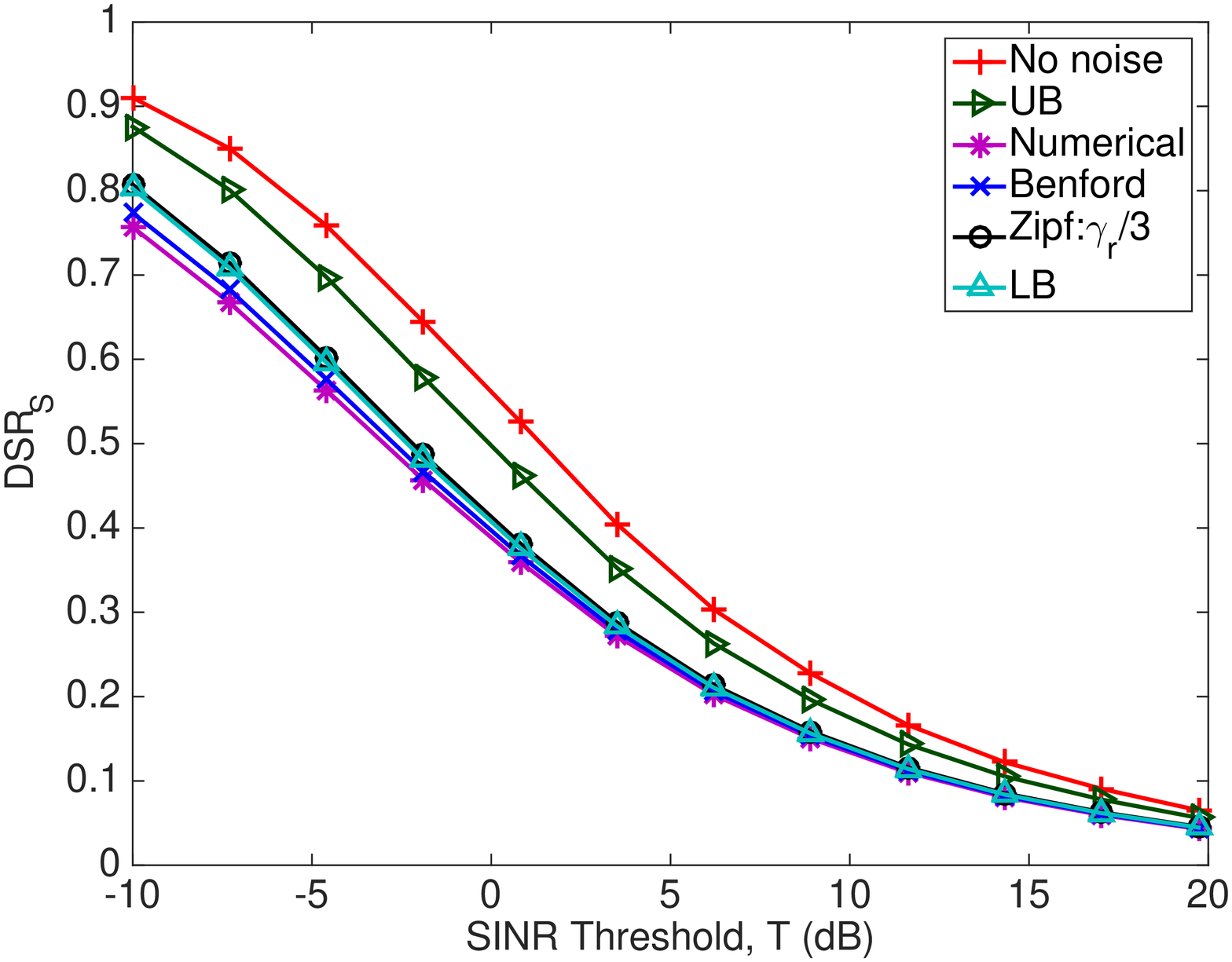}
\caption{\small{\!Bounds and approximations to the \!optimal \!$\DSRs$ for $M=10$, $\SNR=10, \lambda=1$, Zipf request pmf with $\gamma_r=0.5$.}}
\label{05SNR10new} 
\end{minipage}
\end{figure}

\begin{figure}[t!]
\centering
\begin{minipage}{.49\textwidth}
\centering
\includegraphics[width=0.8\linewidth]{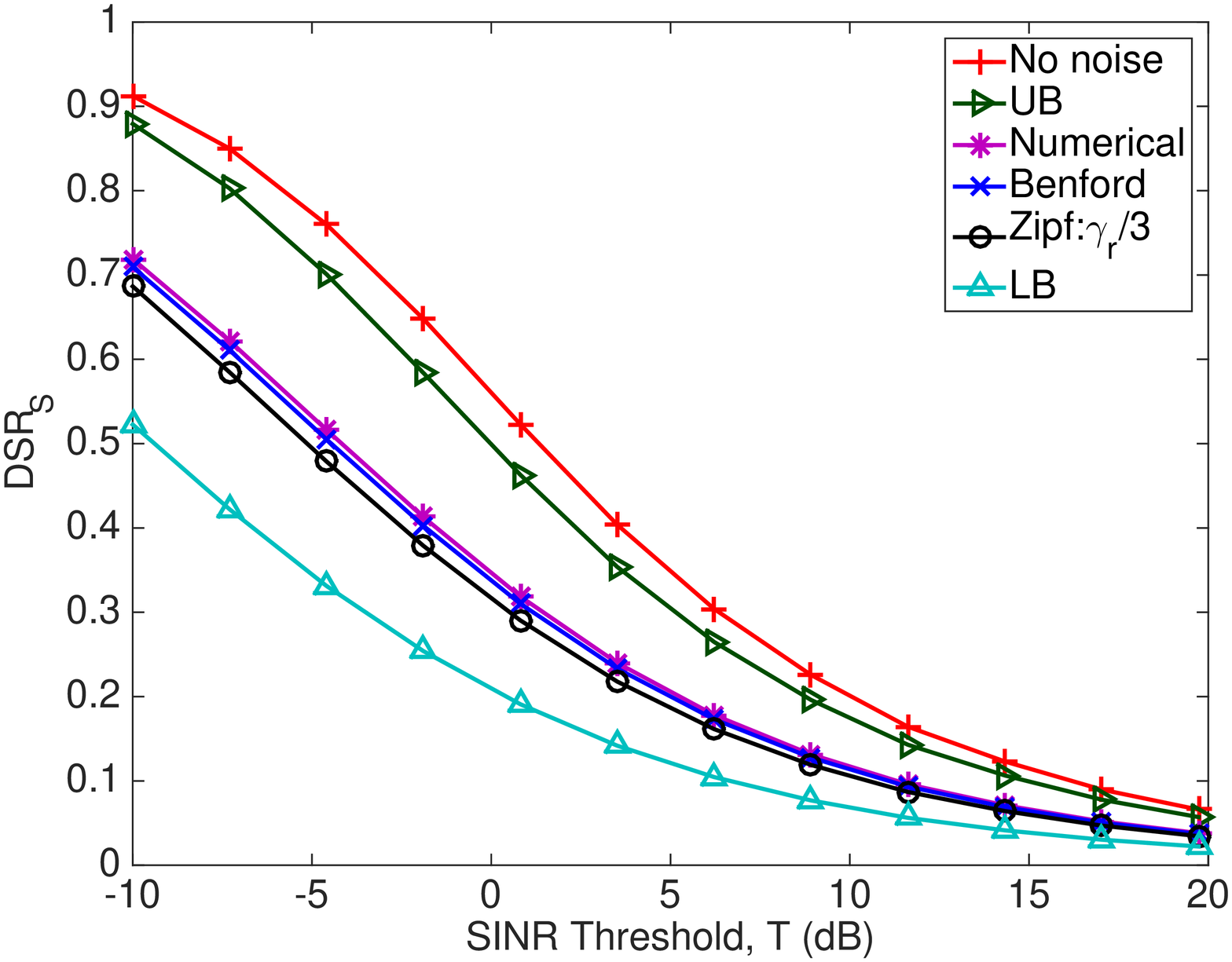}
\caption{\small{Bounds and approximations to the optimal $\DSRs$ for $M=10$, $\SNR=1, \lambda=1$, Zipf request pmf with $\gamma_r=2$.}}
\label{2SNR1new}
\end{minipage}
\hfill
\begin{minipage}{.49\textwidth}
\centering
\includegraphics[width=0.8\linewidth]{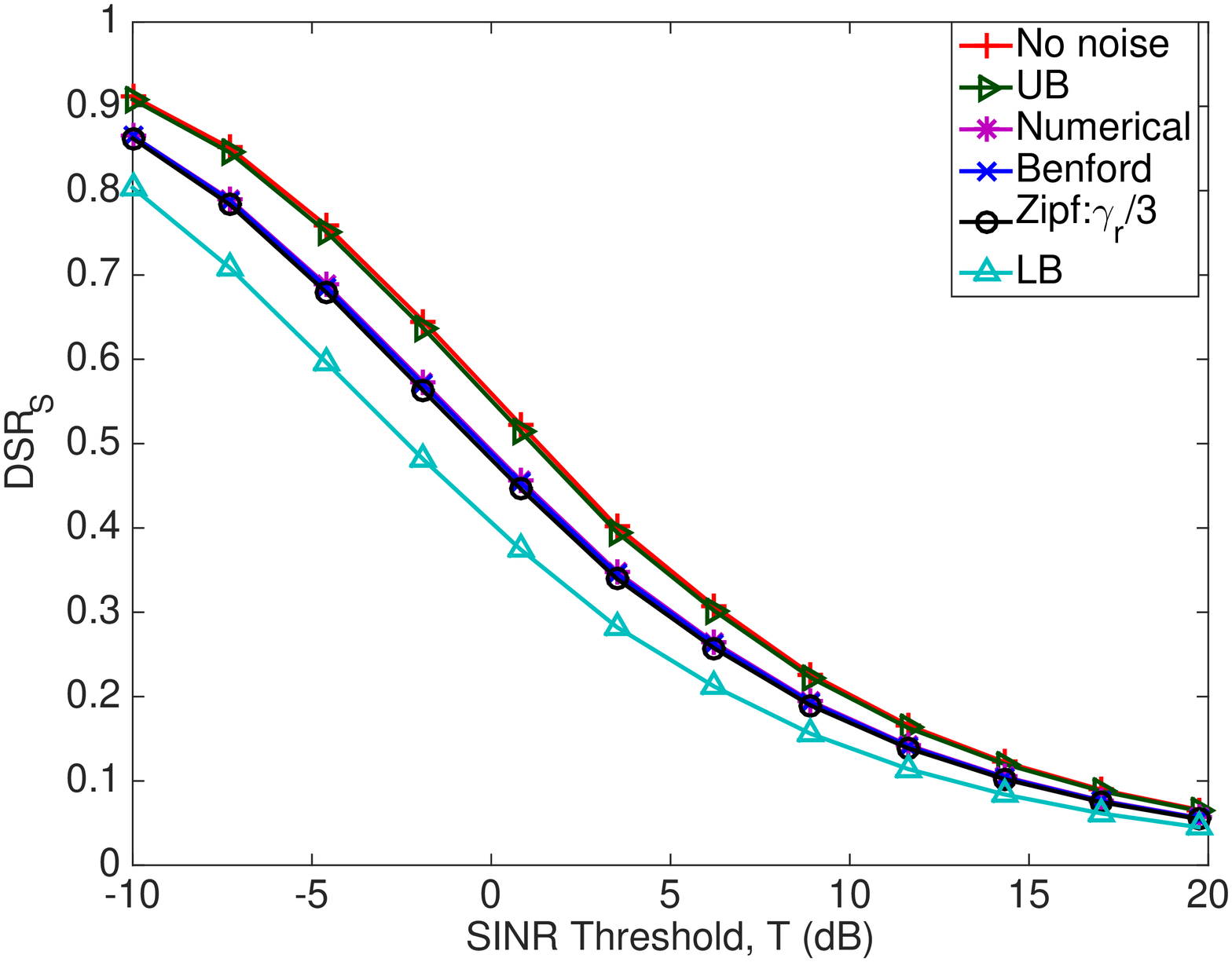}
\caption{\small{Bounds and approximations to the \!optimal $\DSRs$ for $M=10$, $\SNR=10, \lambda=1$, Zipf request pmf with $\gamma_r=2$.}}
\label{2SNR10new} 
\end{minipage}
\end{figure}

{\bf LB and UB get closer together as the $\SNR$ increases.}
In Fig. \ref{05SNR10new}, we compare the performance of the caching distributions for a Zipf request pmf with parameter $\gamma_r=0.5$ and $\SNR=10$. At high $\SNR$, the UB and LB are closer. Still, the numerical solution and the Zipf caching pmf with parameter $\gamma_r/3$ give similar densities of successful communication, which is very close to the lower bound because for that choice of $\gamma_r$, the request distribution converges to a uniform distribution. Benford caching distribution does not perform as well as the Zipf caching distribution, and is even worse than the LB. In Fig. \ref{2SNR1new}, where $\gamma_r=2$ and $\SNR=1$, the Zipf caching pmf with parameter $\gamma_r/3$ does not have the same performance as the optimal solution evaluated numerically. Benford distribution has also similar performance as the Zipf caching pmf. In Fig. \ref{2SNR10new}, we also show that Zipf caching pmf and Benford distributions have similar performance as the numerical solution for $\gamma_r=2$ and $\SNR=10$.

{\bf Transmit Diversity.} In the sequential serving model, where only one file is transmitted at a time network-wide, as discussed in Sect. \ref{MultipleFiles}, using a transmitter diversity scheme will improve the $\dsr$. For the second scenario presented in Sect. \ref{mostgeneralcase}, in which different files are transmitted simultaneously, a similar diversity scheme can be applied instead of treating the other transmitters as interferers. Diversity combining techniques include the maximal-ratio combining (e.g., of the $k$ closest transmitters \cite{Keeler2013}), where the received signals are weighted with respect to their $\SINR$ and then summed, equal-gain combining, where all the received signals are summed coherently, i.e., the shot-noise model \cite[Ch. 2]{BaccelliBook1}, and the selection combining, which is based on the strongest $\DD$ user association, in which the received signal power (e.g., from the $k$ strongest users \cite{Keeler2013}) is considered. 

Although diversity can decrease the outage probability, how to achieve this in practice is a critical issue. Diversity would seem to require synchronization of all transmitting devices at the physical layer unless higher layer coding is used, which might not be very practical for content distribution. Assuming full synchronization provides an upper bound on what could be achieved. Due to space constraints, we leave such analysis to future work.

\section{Conclusions}
\label{Conc}
Content distribution using direct $\DD$ communications is a promising approach for optimizing the utilization of air-interface resources in 5G network. This work is the first attempt to derive closed form expressions for the optimal content caching distribution and the optimal caching strategies providing maximum $\dsr$ in terms of the optimal fractions of transmitters and receivers in a $\DD$ network by using a homogeneous $\PPP$ model with realistic noise, interference and Rayleigh fading. We derive the $\SINR$ coverage for different transmission strategies in $\DD$ networks with some idealized modeling aspects, i.e., simultaneous scheduling of the users containing the same type of files and Zipf distributed content caching assumption for the general multi-file transmissions. Our results for the sequential transmission model show that the optimal caching pmf can also be modeled using the Zipf law and its exponent $\gamma_c$ is related to $\gamma_r$ through a simple expression involving the path loss exponent: $\gamma_c=\frac{\gamma_r}{(\alpha/2+1)}$. The optimal content placement for more general demand profiles under Rayleigh, Ricean and Nakagami fading distributions suggests to flatten the request distribution to optimize the caching performance.

The limitations of the model can be overcome by investigating the optimal caching distributions that maximize the $\dsr$ for the more general transmission settings incorporating the transmit diversity,  and developing intelligent scheduling techniques, which are left as future work. The dynamic settings capturing the changes in the file popularities over time and the interference caused by simultaneous transmissions should also be considered. Future issues include the minimization of backhaul transmissions and $\BS$ overhead to optimize resource utilization through $\DD$ collaboration. Future work could also include the design of distributed caching strategies to maximize the hit probability for users by using an $\SINR$ coverage model or a distance-based coverage process given the limited range of $\DD$.

\begin{appendix}
\subsection{Proof of Lemma \ref{txfrac}} 
\label{App:AppendixFirst}
The problem in (\ref{optimization1}) is equivalent to ${\DSR^*}= \underset{\gamma_1<a \leq 1}{\max} \quad\lambda\left(\frac{a-\gamma_1}{1-\gamma_1}\right) {\DSR_{\T}}(\gamma_1)$, where ${\DSR_{\T}}(\gamma_1)=\gamma_2 \pcov({\T},\lambda\gamma_1,\alpha)$ is the $\dsr$ for $a=1$, and ${\DSR_{\T}}(\gamma_1)$ can be rewritten as ${\DSR_{\T}}(\gamma_1)=(1-\gamma_1)\pi\lambda\gamma_1 \int_{0}^{\infty}{ e^{-\pi\lambda\gamma_1 r\beta({\T},\alpha)-\mu \T\sigma^2 r^{\frac{\alpha}{2}}} \, \mathrm{d}r}$. Taking the derivative with respect to $\gamma_1$, we obtain
\begin{eqnarray}
\label{diffeqn}
\frac{\partial {\DSR_{\T}}(\gamma_1)}{\partial \gamma_1}+\frac{(a-1)}{(1-\gamma_1)(a-\gamma_1)}{\DSR_{\T}}(\gamma_1)=0.
\end{eqnarray}
Using (\ref{pcov}), the solution of (\ref{diffeqn}) yields the following expression: $0<\frac{\left[\frac{1}{\gamma_1}-\frac{1}{a-\gamma_1}\right]}{\pi \lambda \beta({\T},\alpha)}=\frac{\int_{0}^{\infty}{ e^{-\pi\lambda\gamma_1 r\beta({\T},\alpha)-\mu \T\sigma^2 r^{\frac{\alpha}{2}}}r \, \mathrm{d}r}}{\int_{0}^{\infty}{ e^{-\pi\lambda\gamma_1 r\beta({\T},\alpha)-\mu \T\sigma^2 r^{\frac{\alpha}{2}}} \, \mathrm{d}r}}$. Thus, the positivity condition yields $a-\gamma_1>\gamma_1$, implying $\gamma_1<a/2$ and $\gamma_2>a/2$.


\subsection{Proof of Lemma \ref{arbitrarynoiselemma}}
\label{App:Appendix0}
We use the relation $\int_{0}^{\infty}{\, e^{-(x-b)^2}x\mathrm{d}x}
=\frac{1}{2}e^{-b^2}+\sqrt{\pi}b{\Q}(-\sqrt{2}b)$, which follows from a change of variables $u=x-b$ and separating into two integrals, where ${\Q}(x)=\frac{1}{\sqrt{2\pi}}\int_{x}^{\infty}{\, e^{-y^2/2}\mathrm{d}y}$ is the standard Gaussian tail probability. Using this relation, $\pcov({\T},\lambda\gamma_1,4)$ is given as
\begin{eqnarray}
\label{complicatedintegral2}
\pcov({\T},\lambda\gamma_1,4)=\pi\lambda\gamma_1\int_{0}^{\infty}{ e^{-\pi\lambda\gamma_1 r\beta({\T},4)-\mu \T\sigma^2 r^{2}} \, \mathrm{d}r}
=\frac{\pi^{\frac{3}{2}}\lambda\gamma_1}{\sqrt{\mu \T\sigma^2}}e^{\frac{(\lambda\gamma_1\pi\beta({\T},4))^2}{4\mu \T\sigma^2}}{\Q}\Big(\frac{\lambda\gamma_1\pi\beta({\T},4)}{\sqrt{2\mu \T\sigma^2}}\Big).
\end{eqnarray}

The solution of (\ref{diffeqn}) in Appendix \ref{App:AppendixFirst} for $\alpha=4$ yields the following expression:
\begin{eqnarray}
\label{complicatedintegral3}
\frac{\left[\frac{1}{\gamma_1}-\frac{1}{a-\gamma_1}\right]}{\pi \lambda \beta({\T},4)}
=\frac{e^{\mu T\sigma^2 u_0^2}\int_{0}^{\infty}{ e^{-\mu \T\sigma^2 \left(r+u_0\right)^2}r \, \mathrm{d}r}}{e^{\mu T\sigma^2 u_0^2}\int_{0}^{\infty}{ e^{-\mu \T\sigma^2 \left(r+u_0\right)^2} \, \mathrm{d}r}}
\,\stackrel{(a)}{=}\,\frac{\int_{u_0}^{\infty}{ e^{-\mu \T\sigma^2 u^2}u \, \mathrm{d}u}-u_0\int_{u_0}^{\infty}{ e^{-\mu \T\sigma^2 u^2} \, \mathrm{d}u}}{\int_{u_0}^{\infty}{ e^{-\mu \T\sigma^2 u^2} \, \mathrm{d}u}}\nonumber\\
\stackrel{(b)}{=}\frac{\frac{1}{2\mu\T\sigma^2}\int_{\mu \T\sigma^2u_0^2}^{\infty}{ e^{-v_1} \, \mathrm{d}v_1}-\frac{\pi\lambda\gamma_1\beta({\T},4)}{2(\mu \T\sigma^2)^{1.5}}\int_{\sqrt{\mu \T\sigma^2}u_0}^{\infty}{ e^{-v_2^2} \, \mathrm{d}v_2}}{\frac{1}{\sqrt{\mu \T\sigma^2}}\int_{\sqrt{\mu \T\sigma^2}u_0}^{\infty}{ e^{-v_2^2} \, \mathrm{d}v_2}}
\stackrel{(c)}{=}\,\frac{\pi\lambda\gamma_1}{2\mu\T\sigma^2}\left[\pcov({\T},\lambda\gamma_1,4)^{-1}-\beta({\T},4)\right],
\end{eqnarray}
where $u_0=\frac{\pi\lambda\gamma_1\beta({\T},4)}{2\mu \T\sigma^2}$, $(a)$ follows from employing a change of variables $u=r+\frac{\pi\lambda\gamma_1\beta({\T},4)}{2\mu \T\sigma^2}$, $(b)$ also follows from employing change of variables $v_1=\mu\T\sigma^2u^2$ and $v_2=\sqrt{\mu\T\sigma^2}u$, and $(c)$ follows from employing the definition of standard Gaussian tail probability, and employing the definition of the coverage probability for Rayleigh fading, noise and $\alpha=4$ \cite{Andrews2011}. 

The relation in (\ref{complicatedintegral3}) can be rearranged to obtain the following equality:
\begin{eqnarray}
\label{gammaeq}
\frac{1}{\gamma_1}\left[\frac{1}{\gamma_1}-\frac{1}{a-\gamma_1}\right]=\frac{(\pi \lambda)^2 \beta({\T},4)^2}{{2\mu \T\sigma^2}}\left[\frac{\beta({\T},4)^{-1}}{\pcov({\T},\lambda\gamma_1,4)}-1\right]\substack{(a) \\ >}\, 0,
\end{eqnarray}
where ($a$) follows from the fact that $\pcov({\T},\lambda\gamma_1,4)\leq \beta({\T},4)^{-1}$, where $\beta({\T},4)^{-1}$ is the no noise coverage. Given that $\gamma_1$ is optimal, i.e., it satisfies (\ref{gammaeq}), the maximum of the $\dsr$ $\pcov({\T},\lambda\gamma_1,4)$ is 
\begin{eqnarray}
\label{optimalpcovalpha4}
\pcov({\T},\lambda\gamma_1,4)=\left(\frac{1}{\gamma_1}\Big[\frac{1}{\gamma_1}-\frac{1}{a-\gamma_1}\Big]\frac{{2\mu \T\sigma^2}}{(\pi \lambda)^2 \beta({\T},4)}+\beta({\T},4)\right)^{-1},
\end{eqnarray}
using which ${\DSR^*}$ can be obtained. Combining (\ref{optimalpcovalpha4}) with (\ref{complicatedintegral2}), the optimal value of $\gamma_1$ is found.

\subsection{Proof of Lemma \ref{maximumcoveragerayleighlemma}}\label{App:Appendix1}
The problem is equivalent to the formulation ${\DSR^*}=\underset{\gamma_1}{\max}\,\lambda(1-\gamma_1)\big[\frac{1}{\beta({\T},\alpha)}-\frac{\mu {\T}\sigma^2\left(\lambda\gamma_1\pi\right)^{-\frac{\alpha}{2}}}{{\beta({\T},\alpha)}^{\frac{\alpha}{2}+1}}\Gamma\left(1+\frac{\alpha}{2}\right)+o\left(\sigma^2\right)\big]$, which is concave in $\gamma_1$. Hence, taking its derivative with respect to $\gamma_1$, we obtain $\frac{1}{\beta({\T},\alpha)}+o\left(\sigma^2\right)=\frac{\mu \T\sigma^2(\lambda\pi)^{-\frac{\alpha}{2}}}{{\beta({\T},\alpha)}^{\frac{2}{\alpha}+1}}\Gamma\left(1+\frac{\alpha}{2}\right)\left[\frac{\alpha}{2}\gamma_1^{-1}+\left(1-\frac{\alpha}{2}\right)\right]\gamma_1^{-\frac{\alpha}{2}}$. The solution $\gamma_1^*$ satisfies the polynomial equation $\big[\frac{2{\beta({\T},\alpha)}^{\frac{2}{\alpha}}}{2-\alpha}\frac{\left(\lambda \pi\right)^{\frac{\alpha}{2}}}{\mu \T\sigma^2\Gamma\left(1+\frac{\alpha}{2}\right)}\big]\left(\gamma_1^*\right)^{\frac{\alpha}{2}+1}o(\sigma^2)-\gamma_1^*= \frac{\alpha}{2-\alpha}$, using which ${\DSR^*}$ is found.

\subsection{Proof of Lemma \ref{maxcoversmallnoise}}
\label{App:AppendixA}
Using (\ref{diffeqn}), and letting $\alpha=4$, we have
\begin{eqnarray}
\label{alpha4approx}
\frac{\left[\frac{1}{\gamma_1}-\frac{1}{a-\gamma_1}\right]}{\pi \lambda \beta({\T},4)}&=&\frac{\int_{0}^{\infty}{ e^{-\pi\lambda\gamma_1 r\beta({\T},4)-\mu \T\sigma^2 r^2}r \, \mathrm{d}r}}{\int_{0}^{\infty}{ e^{-\pi\lambda\gamma_1 r\beta({\T},4)-\mu \T\sigma^2 r^2} \, \mathrm{d}r}}
\stackrel{(a)}{=} \frac{\int_{0}^{\infty}{ e^{-\pi\lambda\gamma_1 r\beta({\T},4)}(1-\mu \T\sigma^2 r^2+o(\sigma^2))r \, \mathrm{d}r}}{\int_{0}^{\infty}{ e^{-\pi\lambda\gamma_1 r\beta({\T},4)}(1-\mu \T\sigma^2 r^2+o(\sigma^2)) \, \mathrm{d}r}},\nonumber\\
&=&\frac{(\pi\lambda\gamma_1 \beta({\T},4))^2-6(\mu \T\sigma^2)+o(\sigma^2)}{{(\pi\lambda\gamma_1 \beta({\T},4))}^3-2(\mu \T\sigma^2)(\pi\lambda\gamma_1 \beta({\T},4))+o(\sigma^2)},
\end{eqnarray}
where $(a)$ follows from $\exp(-x)=1-x+o(x)$ for $x\rightarrow 0$. Then, using (\ref{alpha4approx}), we obtain 
\begin{eqnarray}
\label{gamma1relation}
\frac{2a}{\gamma_1}=\frac{(\pi\lambda\beta({\T},4))^3\gamma_1^2+(\mu \T\sigma^2)(\pi\lambda\beta({\T},4))2+o(\sigma^2)}{(\mu \T\sigma^2)(\pi\lambda\beta({\T},4))2+o(\sigma^2)}
=\frac{(\pi\lambda\gamma_1\beta({\T},4))^2+2\mu \T\sigma^2+o(\sigma^2)}{2\mu \T\sigma^2+o(\sigma^2)}.
\end{eqnarray}
	
Given that $\gamma_1$ is optimal, i.e., it satisfies (\ref{gamma1relation}), $\pcov({\T},\lambda\gamma_1,4)$ is given by
\begin{eqnarray}
\pcov({\T},\lambda\gamma_1,4)=\frac{1}{\beta({\T},4)}-\frac{2\mu \T\sigma^2\left(\lambda\gamma_1\pi\right)^{-2}}{\beta({\T},4)^{3}}+o\left(\sigma^2\right)
=\frac{2}{\beta({\T},4)}\left[\frac{\mu \T\sigma^2(a-\gamma_1)+o(\sigma^2)}{\mu \T\sigma^2(2a-\gamma_1)+o(\sigma^2)}\right]+o(\sigma^2),\nonumber
\end{eqnarray}
using which the final result can be obtained. As $\sigma^2\to 0$, $\underset{\sigma^2\to 0}{\lim}\quad \pcov({\T},\lambda\gamma_1,4)=\frac{2}{\beta({\T},4)}\left[\frac{a-\gamma_1}{2a-\gamma_1}\right]$.

\subsection{Proof of Lemma \ref{smallnoiselemma}}
\label{App:AppendixB}
For small but non-zero noise case, and given that $\sum\nolimits_{i=1}^{M} p_c(i)=1$, (\ref{main-opt}) can be rewritten as
\begin{equation}
\begin{aligned}
& \underset{p_c}{\max}
& & \lambda\Big[\frac{1}{\beta({\T},\alpha)}+o\left(\sigma^2\right)\Big]-\Gamma\left(1+\frac{\alpha}{2}\right)\Big[\frac{\mu \T\sigma^2\left(\lambda\pi\right)^{-\alpha/2}}{\beta({\T},\alpha)^{\alpha/2+1}}\Big]\sum\limits_{i=1}^M {\lambda p_r(i)p_c(i)^{-\alpha/2}},\nonumber 
\end{aligned}
\end{equation}
equivalent to minimizing $\sum\nolimits_{i=1}^M {p_r(i)p_c(i)^{-\frac{\alpha}{2}}}=\big(\sum\nolimits_{j=1}^{M}{\frac{1}{j^{\gamma_r}}}\big)^{-1}\sum\nolimits_{i=1}^M {\frac{p_c(i)^{-\frac{\alpha}{2}}}{i^{\gamma_r}}}$ subject to $\sum\nolimits_{i=1}^{M} p_c(i)=1$.

Using the Lagrange multiplier method \cite{Bertsekas1999}, we define, $\Lambda({\bf p_c},\eta)=\sum\limits_{i=1}^{M}{\frac{1}{i^{\gamma_r}}p_c(i)^{-\alpha/2}}+\eta\Big(\sum\limits_{i=1}^M{p_c(i)}-1\Big)$, where ${\bf p_c}=[p_c(1) \hdots p_c(M)]$, and $\eta$ is the Lagrange multiplier. To maximize $\Lambda({\bf p_c},\eta)$, we take its partial derivatives with respect to ${\bf p_c}$. The partial derivative $\frac{\partial\Lambda({\bf p_c},\eta)}{\partial\eta}$ reduces to the constraint equation. 

Partial derivative of $\Lambda({\bf p_c},\eta)$ with respect to $p_c(i)$ gives $(-\alpha/2)\frac{1}{i^{\gamma_r}}p_c(i)^{-\alpha/2-1}+\eta=0,\quad i=1,\hdots, M$. Hence, for any $i\neq j$ pair of file indices, we require $\big(-\frac{\alpha}{2}\big)\frac{1}{i^{\gamma_r}}p_c(i)^{-\frac{\alpha}{2}-1}=\big(-\frac{\alpha}{2}\big)\frac{1}{j^{\gamma_r}}p_c(j)^{-\frac{\alpha}{2}-1}$, implying that ${p_c(j)}/{p_c(i)}=(i/j)^{\frac{\gamma_r}{\alpha/2+1}}$. Then, $p_c(i)$ also has Zipf distribution with exponent $\gamma_c=\frac{\gamma_r}{\alpha/2+1}$.

\subsection{Proof of Lemma \ref{ArbitraryNoise}}
\label{App:AppendixC}
We investigate the general solution of (\ref{main-opt}). Using the Lagrange multiplier method \cite{Bertsekas1999}, we define $\Lambda({\bf p_c},\eta)=\sum\nolimits_{i=1}^M {\lambda_t p_r(i) \pcov({\T},\lambda_t p_c(i),\alpha)}+\eta\Big(\sum\nolimits_{i=1}^{M} p_c(i)-1\Big)$. The partial derivatives of $\Lambda({\bf p_c},\eta)$ with respect to $p_c(i)$ for $i=1,\hdots, M$ give M equations.
\begin{eqnarray}
\label{partial_pc}
\frac{\partial \Lambda({\bf p_c},\eta)}{\partial p_c(i)}&=&\lambda_t p_r(i)\frac{\partial \pcov({\T},\lambda_t p_c(i),\alpha)}{\partial p_c(i)}+\eta
=\lambda_t p_r(i)\Big[\pi\lambda_t \int\nolimits_{0}^{\infty}{ e^{-\pi\lambda_t p_c(i) r\beta({\T},\alpha)-\mu {\T}\sigma^2 r^{\alpha/2}} \, \mathrm{d}r}\nonumber\\
&-&(\pi\lambda_t)^2\beta({\T},\alpha) p_c(i)\int\nolimits_{0}^{\infty}{ e^{-\pi\lambda_t p_c(i) r\beta({\T},\alpha)-\mu {\T}\sigma^2 r^{\alpha/2}}r \, \mathrm{d}r}\Big]+\eta.
\end{eqnarray}
To maximize $\Lambda({\bf p_c},\eta)$, we equate the $\RHS$ of (\ref{partial_pc}) to 0 and obtain
\begin{eqnarray}
\label{der_pc}
 \int\nolimits_{0}^{\infty}{[1-\pi\lambda_t \beta({\T},\alpha) p_c(i)r] e^{-\pi\lambda_t p_c(i) r\beta({\T},\alpha)-\mu {\T}\sigma^2 r^{\alpha/2}} \, \mathrm{d}r}=-\frac{\eta}{p_r(i) \pi\lambda_t^2}.
\end{eqnarray}
The partial derivative of $\pcov({\T},\lambda_t p_c(i),\alpha)$ with respect to $\lambda_t$ is given as
\begin{eqnarray}
\label{lambda-pc}
\frac{\partial {\pcov}({\T},\lambda_t p_c(i),\alpha)}{\partial \lambda_t}=\pi p_c(i) \int\nolimits_{0}^{\infty}{ e^{-\pi\lambda_t p_c(i) r\beta({\T},\alpha)-\mu {\T}\sigma^2 r^{\alpha/2}} \, \mathrm{d}r}\\
-(\pi p_c(i))^2 \beta({\T},\alpha) \lambda_t\int\nolimits_{0}^{\infty}{ e^{-\pi\lambda_t p_c(i) r\beta({\T},\alpha)-\mu {\T}\sigma^2 r^{\alpha/2}}r \, \mathrm{d}r}
=\frac{p_c(i)}{\lambda_t}\frac{\partial \pcov({\T},\lambda_t p_c(i),\alpha)}{\partial p_c(i)}.\nonumber
\end{eqnarray}
Combining the relations (\ref{der_pc}) and (\ref{lambda-pc}) results in $\frac{\partial \pcov({\T},\lambda_t p_c(i),\alpha)}{\partial \lambda_t}=-\eta\frac{p_c(i)}{\lambda_t^2 p_r(i)}$. Using the definition of $\pcov({\T},\lambda_t p_c(i),\alpha)$, we can easily note that $\pcov({\T},\lambda_t p_c(i),\alpha)=\pcov({\T}\left({p_c(j)}/{p_c(i)}\right)^{\alpha/2},\lambda_t p_c(j),\alpha)$. Taking the derivative of this expression with respect to $\lambda_t$, we have
\begin{eqnarray}
\label{pcov-i-j}
\frac{\partial \pcov({\T},\lambda_t p_c(j),\alpha)}{\partial \lambda_t}=-\eta\frac{p_c(j)}{\lambda_t^2 p_r(j)}
=\frac{\partial \pcov({\T},\lambda_t p_c(i),\alpha)}{\partial \lambda_t}\frac{p_r(i)/p_c(i)}{p_r(j)/p_c(j)}.
\end{eqnarray}
We can rewrite (\ref{pcov-i-j}) using the expression for $\pcov({\T},\lambda_t p_c(i),\alpha)$ as follows
\begin{align}
\label{derivative-relation}
\frac{\partial \pcov({\T},\lambda_t p_c(j),\alpha)}{\partial \lambda_t}
=\frac{\partial \pcov({\T}\left(\frac{p_c(j)}{p_c(i)}\right)^{\alpha/2},\lambda_t p_c(j),\alpha)}{\partial \lambda_t}\frac{p_r(i)/p_c(i)}{p_r(j)/p_c(j)}.
\end{align}

Next, by employing a change of variables $v=r\beta({\T},\alpha)$, we can rewrite (\ref{pcov}) in Definition \ref{pcov} as
\begin{eqnarray}
\label{covinvariancebeta}
\pcov({\T},\lambda_t,\alpha)=\frac{\pi\lambda_t}{\beta({\T},\alpha)}\int\nolimits_{0}^{\infty}{ e^{-\pi\lambda_t v-\mu\left[\frac{\T}{\beta({\T},\alpha)^{\alpha/2}}\right]\sigma^2 v^{\alpha/2}} \, \mathrm{d}v}.
\end{eqnarray}
We investigate the relation between $\beta({\T},\alpha)^{\alpha/2}$ and $\T$ in Fig. \ref{betavsT}, for practical $\alpha$ and $\mu$ values, and observe the linear dependence, where the slope is mainly determined by $\alpha$, and changes only slightly by varying $\mu$. Based on these simulations, since $\beta({\T},\alpha)^{\alpha/2}/{\T}$ is invariant to $\T$ and using the relation in (\ref{covinvariancebeta}), it is reasonable to write $\pcov({\T},\lambda_t p_c(j),\alpha)$ as a separable function which is the form $f(\lambda_t p_c(j),\alpha)g({\T})$. By taking its derivative with respect to $\lambda_t$, we can then rewrite (\ref{derivative-relation}) as $g({\T})=g\Big({\T}\Big(\frac{p_c(j)}{p_c(i)}\Big)^{\alpha/2}\Big)\frac{p_c(j)}{p_c(i)}\Big(\frac{j}{i}\Big)^{\gamma_r}$. Taking the derivative of both sides with respect to $\T$, we obtain $\frac{d g({\T})}{d \T}=\Big(\frac{p_c(j)}{p_c(i)}\Big)^{\alpha/2}\frac{d g({\T})}{d \T}\frac{p_c(j)}{p_c(i)}\Big(\frac{j}{i}\Big)^{\gamma_r}$, implying that $p_c(j)/p_c(i)=\left(i/j\right)^{\frac{\gamma_r}{\alpha/2+1}}$. Then, $p_c(\cdot)$ is also Zipf($\gamma_c$) distributed with $\gamma_c=\frac{\gamma_r}{\alpha/2+1}$.

\subsection{Proof of Lemma \ref{ArbitraryNoiseApprox}}
\label{App:AppendixD}
The coverage probability $\pcov({\T},\lambda_t p_c(i),4)$ for Rayleigh fading and general noise with $\alpha=4$ is $\pcov({\T},\lambda_t p_c(i),4)=\frac{\pi^{1/2}\sqrt{2}}{\beta({\T},4)}xQ(x)\exp(x^2/2)$, where $x=\frac{\pi \lambda_t p_c(i) \beta({\T},4)}{\sqrt{2\mu \T \sigma^2}}$. The details of the derivation follow from \cite{Andrews2011}. We approximate $\pcov({\T},\lambda_t p_c(i),4)$ by using the following tight approximation for $\Q$-function in \cite{Karagiannidis2007} as ${\Q}(x)\approx \frac{(1-\exp(-1.4x))\exp(-x^2/2)}{1.135\sqrt{2\pi}x},\,\, x>0$. Hence, $\pcov({\T},\lambda_t p_c(i),4)\approx\frac{1}{1.135\beta({\T},4)}\big(1-\exp\big(-\frac{\pi \lambda_t p_c(i) \beta({\T},4)}{\sqrt{\mu \T \sigma^2}}\big)\big)$. Using the Lagrange multiplier method \cite{Bertsekas1999} to find the solution of the maximum $\dsr$ problem defined in (\ref{main-opt}), for the file indices $i$ and $j$, we obtain the relation $\frac{1}{i^{\gamma_r}}\exp{(-\frac{\pi \lambda_t p_c(i) \beta({\T},4)}{\sqrt{\mu \T \sigma^2}})}=\frac{1}{j^{\gamma_r}}\exp{(-\frac{\pi \lambda_t p_c(j) \beta({\T},4)}{\sqrt{\mu \T \sigma^2}})}$ that yields the following difference between the file caching probabilities as a function of the network parameters, Zipf exponent $\gamma_r$ and the file indices, which is given as $p_c(i)-p_c(j)=-\frac{\sqrt{\mu \T \sigma^2}\gamma_r}{\pi \lambda_t\beta({\T},4)}\log\left(\frac{i}{j}\right)$. Using the relation $\sum\nolimits_{i=1}^{m}{p_c(i)}=1$, 
the caching distribution is found as $p_c(i)=\frac{1}{M}+\frac{\sqrt{\mu \T \sigma^2}\gamma_r}{M\pi \lambda_t\beta({\T},4)}\sum\nolimits_{j=1}^M{\log\left(\frac{j}{i}\right)}$, $i=1, \hdots, M$, and the final result can be obtained by rearranging the terms.
The required condition for the pmf $p_c(\cdot)$ to be valid is $p_c(i)>0$, for $i=1, \hdots, M.$ Since $p_c(i)$ is decreasing in $i$, a sufficient condition is $p_c(M)=\frac{1}{M}+\frac{\sqrt{\mu \T \sigma^2}\gamma_r}{M\pi \lambda_t\beta({\T},4)}\sum\limits_{j=1}^M{\log\left(\frac{j}{M}\right)}\geq 0$. Thus, for a total number of files $M$, we require that ${\sqrt{\mu \T \sigma^2}\gamma_r}\leq {\pi \lambda_t\beta({\T},4)}[M\log(M)-\log(M!)]^{-1}$.

\subsection{Proof of Lemma \ref{mul-file-optlemma}}
\label{AppendixMINdsr}
The optimization formulation to maximize the $\dsr$ of the least popular file is given as 
\begin{equation}
\label{optimization-maxmin-initial}
\begin{aligned}
\underset{\rho_i}{\max}\hspace{0.3cm}&\underset{i}{\min}
& & {\xi}^{-1}p_r(i)\rho_ip_c(i)\pcov({\T},\lambda_t\xi,\alpha),
\end{aligned}
\end{equation}
where $\xi=\sum\nolimits_{i=1}^{M}{\rho_i p_c(i)}$, $0\leq \rho_i \leq 1$. We can calculate $\pcov({\T},\lambda_t\xi,\alpha)$ if $p_r(i)$, $p_c(i)$ and $\xi$ are known priorily. The caching distribution can be modeled by $p_c(i)\sim \mathrm{Zipf}(\gamma_c)$, where $\gamma_c=\gamma_r/(\alpha/2+1)$ based on Lemma \ref{ArbitraryNoise}. Letting $\xi_i=\rho_ip_c(i)$, the formulation in (\ref{optimization-maxmin-initial}) is equivalent to the following: 
\begin{equation}
\label{wcaching}
\begin{aligned}
&\underset{0\leq \xi_i\leq p_c(i)}{\max}
& & w \\
& \quad\text{s.t.}
& & w\leq p_r(i)\xi_i,\quad i=1,\hdots,M.
\end{aligned}
\end{equation}
The optimal solution of (\ref{wcaching}) is $\xi_i=p_c(i)$ if $\xi=1$ since $\xi_i$ cannot be larger than $p_c(i)$. Defining $\xi=\sum\nolimits_{i=1}^{M}{\xi_i}$, if $\xi<1$, then, $\xi_i\leq p_c(i)$. The optimal solution can be found by equating $p_r(i)\xi_i$ for all $i \in\{1,\hdots,M\}$ so that the least desired file with $p_r(m)$ is multiplied by the highest $\xi_m$. Then, the required condition is $p_r(i)\xi_i=p_r(j)\xi_j$ for $i\neq j$, which is equivalent to $\frac{\rho_i}{\rho_j}=\frac{p_r(j)p_c(j)}{p_r(i)p_c(i)}$ for $i\neq j$. Using the constraint of (\ref{wcaching}), we obtain $\xi=\sum\nolimits_{j=1}^{M}{\rho_jp_c(j)}=\rho_ip_c(i)+\rho_ip_c(i)\sum\nolimits_{j=1,j\neq i}^{M}{\frac{p_r(i)}{p_r(j)}}$, and solving for $\rho_i$, for all $i\in \{1,\hdots,M\}$, we get $\rho_i=\frac{\xi}{p_c(i)}\Big(1+\sum\nolimits_{j=1,j\neq i}^{M}{\frac{p_r(i)}{p_r(j)}}\Big)^{-1}={\xi}\Big/{\sum\nolimits_{j=1}^{M}{\frac{p_c(i)p_r(i)}{p_r(j)}}}$. Hence, the optimal $\rho_i$'s for $i\in\{1,\hdots,M\}$ should satisfy $p_r(i)p_c(i)\rho_i=\eta$, if $\rho_i<1$, and $p_r(i)p_c(i)<\eta$, if $\rho_i=1$, implying that for any $\rho_i<1$, $\eta<p_r(i)p_c(i)$, yielding $\eta< \underset{i,\,\, \rho_i<1}{\min}\, {p_r(i)p_c(i)}$, and for any $\rho_i=1$, $\eta> \underset{i,\,\, \rho_i=1}{\max}\, {p_r(i)p_c(i)}$ for some constant $\eta$. Hence, the objective of (\ref{optimization-maxmin-initial}) can be rewritten as
\begin{eqnarray}
\label{optimization-maxmin-initial2}
{\xi}^{-1}p_r(i)\rho_ip_c(i)\pcov({\T},\lambda_t\xi,\alpha)=\frac{\pcov({\T},\lambda_t\xi,\alpha)}{\sum\nolimits_{j=1}^{M}{\frac{1}{p_r(j)}}}\begin{cases}
=\frac{\eta}{\xi}\pcov\Big({\T},\lambda_t\xi,\alpha\Big), \, \, \rho_i<1\\
<\frac{\eta}{\xi}\pcov\Big({\T},\lambda_t\xi,\alpha\Big), \, \, \rho_i=1
\end{cases},\, i=1,\hdots,M,
\end{eqnarray}
which is increasing in $\eta$. The solution of (\ref{optimization-maxmin-initial}) can be found by letting $\eta=\underset{i,\,\,\rho_i=1}{\max}\,\, {p_r(i)p_c(i)}$, implying that $\rho_i=1$ for $i\geq j$ for some $j$, and for $1\leq i\leq j-1$, we have $\rho_i=\frac{\eta}{p_r(i)p_c(i)}=\frac{p_r(j)p_c(j)}{p_r(i)p_c(i)}<1$.

\subsection{Proof of Lemma \ref{multifileopt2lemma}}
\label{AppendixMAXdsr}
The formulation to maximize the $\dsr$ for all files is given by
\begin{equation}
\label{optimization-sum-initial}
\begin{aligned}
\DSR_{\tot}^{*}=& \underset{\rho_i}{\max}
& & {\xi}^{-1}\pcov({\T},\lambda_t\xi,\alpha)\sum\nolimits_{i=1}^{M}{p_r(i)\rho_ip_c(i)},
\end{aligned}
\end{equation}
where $\xi=\sum\limits_{i=1}^{M}{\rho_i p_c(i)}$, $0\leq \rho_i \leq 1$. For $\xi=1$, as $\xi_i\leq p_c(i)$ and $\sum_{i=1}^{M}{p_r(i)\xi_i}\leq \sum_{i=1}^{M}{p_r(i)p_c(i)}$, the optimal solution is $\xi_i=p_c(i)$, and $\rho_i=1$ for all $i$. For $\xi<1$, the solution of the maximum $\dsr$ problem can be found using water-filling, where $1\geq\rho_1\geq \rho_2\geq \hdots \geq \rho_M$ as $p_r(i)p_c(i)>p_r(j)p_c(j)$ for $i<j$. Using the constraint equation $\xi=\sum\nolimits_{i=1}^{M}{\rho_i p_c(i)}$, we have $\xi-\rho_1 p_c(1)=\sum\nolimits_{i=2}^M{\rho_i p_c(i)}$. To simplify the notation, we let $\widetilde{\xi}_i=\sum\nolimits_{j=i}^M{\rho_j p_c(j)}$. We propose the following update mechanism to determine the optimal $\rho_i$'s to find the optimal solution of (\ref{optimization-sum-initial}), where $\rho_i^*$ and $\widetilde{\xi_i}$ are updated as
\begin{eqnarray}
\rho_i^*=\underset{\rho_i\leq \rho_{i-1}^*}{\arg\min}\quad (\widetilde{\xi_i}-\rho_ip_r(i))^{+}, \quad
\widetilde{\xi}_i=\widetilde{\xi}_{i-1}-\rho_{i-1}^*p_c(i-1),\quad i>1,\nonumber
\end{eqnarray}
where $y^+=\max \{y,0\}$ and $\rho_{i-1}^*=1$. Since ${\xi}^{-1}\pcov({\T},\lambda_t\xi,\alpha)$ is decreasing in $\xi$, and $\sum\nolimits_{i=1}^{M}{p_r(i)\rho_ip_c(i)}$ is an increasing function of $\xi$, (\ref{optimization-sum-initial}) has an optimal solution. Now, consider the following formulation:
\begin{equation}
\label{box}
\begin{aligned}
& \underset{\rho_i}{\max}
& & \sum\nolimits_{i=1}^{M}{p_r(i)\rho_ip_c(i)}
\end{aligned}
\end{equation}
subject to $\xi=\sum\nolimits_{i=1}^{M}{\rho_i p_c(i)}$, where $0\leq \rho_i \leq 1$. Through a duality argument \cite{Boyd2009}, it is trivial to show that the solution of (\ref{box}) is $\rho_i=\xi$, $i\in\{1,\hdots,M\}$. Using which the formulation (\ref{optimization-sum-initial}) is upper bounded by ${\DSR_{\tot}^{*}} \leq \max\nolimits_{\xi} \, {\pcov({\T},\lambda_t\xi,\alpha)} \sum\nolimits_{i=1}^{M} p_r(i) p_c(i)$, and the upper bound is achieved for $\xi=1$.

\subsection{Proof of Theorem \ref{mainProbCov}} 
\label{App:AppendixA0}
The probability of coverage of a typical randomly located user $j$ is given by
\begin{eqnarray}
\label{probcoverage}
{\PCOV}({\T},\lambda_j,\alpha)=\mathbb{P}({\SINR}_j>{\T})=\int\nolimits_{r>0}{ f_{R_j}(r) \mathbb{P}[h>{\T} r^{\alpha}(\sigma^2+I_{r(j)})]\, \mathrm{d}r},
\end{eqnarray}
where $\mathbb{P}[h>Tr^{\alpha}(\sigma^2+I_{r(j)})]=e^{-\mu Tr^{\alpha}\sigma^2}\mathcal{L}_{I_{r(j)}}(\mu Tr^{\alpha})
\stackrel{(a)}{=}e^{-\mu Tr^{\alpha}\sigma^2}\mathcal{L}_{I_{r(j)}^c}(\mu Tr^{\alpha})\mathcal{L}_{I_{r(j)}^u}(\mu Tr^{\alpha})$, where ($a$) follows from independence of $I_{r(j)}^c$ and $I_{r(j)}^u$. The Laplace transform of $I_{r(j)}^c$ is given as follows:
\begin{eqnarray}
\mathcal{L}_{I_{r(j)}^c}(s)
=\mathbb{E}\Big[\prod\nolimits_{l\in \Phi_{t(j)}^c/b_0}\exp{(-sg_lR_l^{-\alpha})}\Big]
\stackrel{(a)}{=}e^{-2\pi\lambda_j\int\nolimits_{r}^{\infty}{ (1-\mathbb{E}[\exp{(-sgu^{-\alpha})}])u\, \mathrm{d}u}},\nonumber
\end{eqnarray}
where $(a)$ follows from the iid distribution of $g_l$, and its independence from $\Phi$, and the probability generating functional ($\PGFL$) of the $\PPP$ \cite{Stoyan1996}. Similarly, the Laplace transform of $I_{r(j)}^u$ is
\begin{eqnarray}
\mathcal{L}_{I_{r(j)}^u}(s)
=\mathbb{E}\Big[\prod\nolimits_{l\in \Phi_{t(j)}^u}\mathbb{E}[\exp{(-sgR_l^{-\alpha})}]\Big]
=e^{-2\pi(\lambda_t-\lambda_j)\int\nolimits_{0}^{\infty}{ (1-\mathbb{E}[\exp{(-sgu^{-\alpha})}])u\, \mathrm{d}u}}.\nonumber
\end{eqnarray}

If the fading is Rayleigh with parameter $\mu$, then the Laplace transforms of $I_{r(j)}^c$ and $I_{r(j)}^u$ equal 
\begin{eqnarray}
\label{LTs}
\mathcal{L}_{I_{r(j)}^c}(s)=e^{-2\pi\lambda_j\int\nolimits_{r}^{\infty}{ \big(\frac{1}{1+s^{-1}u^{\alpha}}\big)u\, \mathrm{d}u}},\quad
\mathcal{L}_{I_{r(j)}^u}(s)=e^{-2\pi(\lambda_t-\lambda_j)\int\nolimits_{0}^{\infty}{ \big(\frac{1}{1+s^{-1}u^{\alpha}}\big)u\, \mathrm{d}u}}.
\end{eqnarray}

Thus, the probability of coverage of a typical randomly located user is
\begin{eqnarray}
\label{generalmultifilecoverage}
{\PCOV}({\T},\lambda_j,\alpha)=\int\nolimits_{r>0}{ e^{-\lambda_j\pi r^2} e^{-\mu {\T} r^\alpha\sigma^2}\mathcal{L}_{I_{r(j)}^c}(\mu {\T} r^\alpha)\mathcal{L}_{I_{r(j)}^u}(\mu {\T} r^\alpha)2\pi\lambda_j r\, \mathrm{d}r},
\end{eqnarray}
which is obtained using (\ref{probcoverage}), (\ref{LTs}), a change of variables $v=s^{-\frac{2}{\alpha}}u^{2}$, and letting $s=\mu Tr^{\alpha}$, and the final result is obtained by a change of variables $v=r^2$ and the definitions of $\rho_1(\T,\alpha)$ and $\rho_2(\T,\alpha)$.

\subsection{Proof of Corollary \ref{ProbCov0}} 
\label{App:AppendixB0}
For Rayleigh fading with $\mu=1$, using (\ref{probcoverage}) and (\ref{generalmultifilecoverage}),  
the coverage of a typical user for $\alpha=4$ is 
\begin{multline}
\PCOV({\T},\lambda_j,4)=\int\nolimits_{r>0}{ e^{-\lambda_j\pi r^2} e^{-\T r^4\sigma^2}\mathcal{L}_{I_{r(j)}^c}({\T} r^4)\mathcal{L}_{I_{r(j)}^u}({\T} r^4)2\pi\lambda_j r\, \mathrm{d}r}\nonumber\\
=\int\nolimits_{r>0}{ e^{-\left(\lambda_j+\frac{\pi}{2}\lambda_t \sqrt{\T}-\lambda_j\sqrt{\T}\tan^{-1}\left(\frac{1}{\sqrt{\T}}\right)\right)\pi r^2}e^{-{\T} r^4\sigma^2}2\pi\lambda_j r\, \mathrm{d}r},
\end{multline}
where the final result stems from $\lambda_j=\lambda\gamma_1\sum\nolimits_{i\in f_r(j)}p_c(i)=\lambda_tp_j$, and a change of variables $u=r^2$, and the identity $\int\nolimits_{0}^{\infty}{ e^{-ax}e^{-bx^2} \, \mathrm{d}x}=\sqrt{\frac{\pi}{b}}\exp{\left(\frac{a^2}{4b}\right)}{\Q}\left(\frac{a}{\sqrt{2b}}\right)$, and letting $H({\T},\lambda_t,p_j)=\Big(\frac{p_j}{\sqrt{\T}}-p_j\tan^{-1}\big(\frac{1}{\sqrt{\T}}\big)+\frac{\pi}{2}\Big)\frac{\pi \lambda_t}{\sqrt{2\sigma^2}}$. The relation $\PCOV({\T},\lambda_j,4)$ depends on the receiver state $j$ through $\lambda_j=\lambda_tp_j$.

\end{appendix}

\begin{spacing}{1.29}
\bibliographystyle{IEEEtran}
\bibliography{D2Dreferences}
\end{spacing}

\end{document}